\crefname{claim}{Claim}{Claims}
\newcommand{\newterm}[1]{\textbf{\textit{#1}}}
\newcommand{\phaseref}[1]{\textsf{#1}}
\newcommand{\north}{\textsc{n}}
\newcommand{\east}{\textsc{e}}
\newcommand{\south}{\textsc{s}}
\newcommand{\west}{\textsc{w}}
\newdimen\configfigureheight%
\newdimen\tmpconfigfigureheight%
\newcommand{\transformable}[1]{%
    \settoheight{\tmpconfigfigureheight}{#1}%
    \global\configfigureheight=\tmpconfigfigureheight%
    #1%
}
\newcommand{\transforms}{%
    \makebox(24pt,\configfigureheight){\includegraphics[page=1]{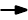}}%
}
\newcommand{\ntransforms}{%
    \makebox(24pt,\configfigureheight){\includegraphics[page=2]{icons-transform-arrow}}%
}
\newcommand{\ttransforms}{%
    \makebox(24pt,\configfigureheight){\includegraphics[page=3]{icons-transform-arrow}}%
}
\title{Sliding Squares in Parallel}
\titlerunning{Sliding Squares in Parallel}
\author{Hugo A. Akitaya}{Miner School of Computer and Information Sciences, University of Massachusetts Lowell, USA}{hugo_akitaya@uml.edu}{https://orcid.org/0000-0002-6827-2200}{}
\author{Sándor P. Fekete}{Department of Computer Science, TU Braunschweig, Germany}{s.fekete@tu-bs.de}{https://orcid.org/0000-0002-9062-4241}{}
\author{Peter Kramer}{Department of Computer Science, TU Braunschweig, Germany}{kramer@ibr.cs.tu-bs.de}{https://orcid.org/0000-0001-9635-5890}{}
\author{Saba Molaei}{Sharif University of Technology, Iran}{molaeisaba1@gmail.com}{https://orcid.org/0009-0007-8142-8904}{}
\author{Christian Rieck}{Institute of Mathematics, University of Kassel, Germany}{christian.rieck@mathematik.uni-kassel.de}{https://orcid.org/0000-0003-0846-5163}{}
\author{Frederick Stock}{Miner School of Computer and Information Sciences, University of Massachusetts Lowell, USA}{frederick_stock@student.uml.edu}{https://orcid.org/0009-0008-9005-6855}{}
\author{Tobias Wallner}{Department of Computer Science, TU Braunschweig, Germany}{wallner@ibr.cs.tu-bs.de}{https://orcid.org/0009-0004-8400-7222}{}
\authorrunning{H. A. Akitaya, S. P. Fekete, P. Kramer, S. Molaei, C. Rieck, F. Stock, and T. Wallner}
\keywords{Sliding squares, parallel motion, reconfigurability, motion planning, multi-agent path finding, makespan, swarm robotics, computational geometry}
\begin{document}

    \maketitle

    \begin{abstract}
        We consider algorithmic problems motivated by modular robotic reconfiguration in the \newterm{sliding square model}, in which we are given $n$ square-shaped modules in a (labeled or unlabeled) start configuration and need to find a schedule of sliding moves to transform it into a desired goal configuration, maintaining connectivity of the configuration at all times.
        Recent work has aimed at minimizing the total number of moves, resulting in fully \newterm{sequential} schedules that perform reconfiguration in~$\mathcal{O}(nP)$ moves for arrangements of bounding~box perimeter size $P$~\cite{akitaya.demaine.korman.ea2022compacting-squares}, or a number of moves linear in the sum of module coordinates in the start and target arrangements~\cite{kostitsyna.ophelders.parada.ea2024optimal-in-place}.

        We extend the model to leverage the possibility of \newterm{parallel} motion, thereby reducing worst-case makespans by a factor linear in $n$.
        Our work presents tight results both in terms of complexity and algorithms:
        We show that deciding the existence of a single parallel reconfiguration step that solves an instance is \NP-complete for unlabeled modules, but can be solved efficiently in the labeled setting.
        Nevertheless, deciding whether a labeled instance can be solved in two parallel steps is \NP-complete.
        Finally, we describe an algorithm to perform in-place reconfiguration in worst-case optimal~$\mathcal{O}(P)$ parallel steps for the unlabeled setting.
        This algorithm has a straight-forward extension to the labeled setting with slight relaxations to either the reconfiguration time or space constraint.
    \end{abstract}

    \newpage
\section{Introduction}
\label{sec:introduction}

Reconfiguring an arrangement of geometric objects is a fundamental task in a wide range of areas, both in theory and practice.
A typical task arises from relocating a (potentially large) collection of agents from a given start into a desired goal configuration in an efficient manner, while avoiding collisions between objects or violating other constraints, such as maintaining connectivity of the overall arrangement.
In recent years, the problem of modular robot reconfiguration~\cite{butler.rus2003distributed-planning,rus.vona2001crystalline-robots,vassilvitskii.yim.suh2002complete-local}
has enjoyed particular attention~\cite{
    abel.akitaya.kominers.ea2024universal-in-place,
    akitaya.arkin.damian.ea2021universal-reconfiguration,
    akitaya.demaine.gonczi.ea2021characterizing-universal,
    akitaya.demaine.korman.ea2022compacting-squares,
    aloupis.collette.damian.ea2009linear-reconfiguration}
in the context of Computational Geometry: In the \newterm{sliding square model}
introduced by Fitch, Butler, and~Rus~\cite{fitch.butler.rus2003reconfiguration-planning},
a given start configuration of $n$ identical modules, each occupying a square grid cell, must be transformed by a sequence of atomic, \newterm{sequential} moves (shown in~\cref{fig:intro-legal-moves}) into a target arrangement, without losing connectivity of the underlying grid graph.

\begin{figure}[htb]
    \begin{subfigure}[t]{0.5\textwidth-0.5em}
        \hfil%
        \includegraphics[page=1]{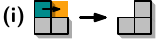}%
        \hfil%
        \includegraphics[page=2]{intro-moves}%
        \hfil%
        \subcaption{Two types of move can be performed by individual modules: \textbf{\textsf{(i)}}~Slides and \textbf{\textsf{(ii)}} convex transitions.}%
        \label{fig:intro-legal-moves}%
    \end{subfigure}%
    \hfill%
    \begin{subfigure}[t]{0.5\textwidth-1.5em}
        \hfil%
        \includegraphics[page=3]{intro-moves}%
        \subcaption{Our model allows moves to be performed in parallel transformations, as shown here.}
        \label{fig:intro-chain-moves}
    \end{subfigure}%
    \caption{Our model allows for two types of moves to occur in parallel, collision-free transformations. In this paper, we show the symmetric difference of a transformation using turquoise and yellow.}
\end{figure}%

Aiming at minimizing the total number of moves, previous research has resulted in considerable progress, recently establishing~\cite{akitaya.arkin.damian.ea2021universal-reconfiguration} universal configuration in $\mathcal{O}(nP)$ moves for a 2-dimensional arrangement of $n$ modules with bounding box perimeter size $P$, and $\mathcal{O}(n^2)$ in three dimensions~\cite{abel.akitaya.kominers.ea2024universal-in-place, kostitsyna.ophelders.parada.ea2024optimal-in-place}.
However, the resulting schedules are purely sequential, not optimally minimizing the overall time until completion, called \newterm{makespan}.
With parallel motion, much lower makespan can be achieved---which is also a more challenging objective, as it requires
coordinating the overall motion plan, not just at the atomic level (see~\cref{fig:intro-chain-moves}), but also at the global level to maintain connectivity and avoid collisions.

\subsection{Our contributions}
\label{subsec:our-contributions}
We provide first results for \emph{parallel} reconfiguration in the sliding squares model with no restriction on the input.
For~precise definitions and the problem statement, see~\cref{sec:preliminaries}.
Our central results can be summarized as follows, starting with the \emph{unlabeled} setting:

\begin{enumerate}
    \item Deciding whether a single transformation can solve an instance is \NP-complete in general, and \NP-hard to approximate by a factor better than $2$.\footnote{%
        In an extended abstract that appeared at ESA 2025, we misstated this as implying \APX-hardness.
        Instead, this shows that {\textsc{Parallel Sliding Squares}} and its {\textsc{Labeled}} variant are not in \PTAS.
    }
    \item In-place schedules of (asymptotically worst-case optimal) makespan $\mathcal{O}(P)$ can be computed efficiently, where $P$ is the perimeter of the union the configurations' bounding boxes.
\end{enumerate}

We extend these results to the \emph{labeled}~variant, in which individual modules are distinguishable and have unique start and target positions.
In particular, we show the following.

\begin{enumerate}
    \setcounter{enumi}{2}
    \item Deciding whether an instance can be solved in makespan $1$ can be done in $\mathcal{O}(n)$, but is~\NP-complete for makespan $2$, and \NP-hard to approximate by a factor better than~$\nicefrac{3}{2}$.%
    \footnotemark[\value{footnote}]
    \item Our algorithmic methods from the unlabeled setting can be extended to compute worst-case optimal schedules, at the cost of relaxing the in-place constraint.
\end{enumerate}

\subsection{Related work}
\label{subsec:related-work}

On the theoretical side, algorithmic methods for coordination the motion of many robots can be traced back to the classical work of Schwartz and Sharir~\cite{schwartz.sharir1983on-piano} from the 1980s.
On the practical side,~\cite{fukuda.nakagawa.kawauchi.ea1989structure-decision} presented an architecture for modular robots, followed by a wide spectrum of work that often used cuboids as elementary building blocks; see~\cite{yim.shen.salemi.ea2007modular-self-reconfigurable} for a survey.

Of particular interest for the algorithmic side is the \newterm{sliding cube model} (or \emph{square}) by Fitch, Butler, and Rus~\cite{fitch.butler.rus2003reconfiguration-planning}, introduced in the context of a modular robotic hardware~\cite{butler.rus2003distributed-planning,rus.vona2001crystalline-robots,vassilvitskii.yim.suh2002complete-local}.
Recent work~\cite{abel.akitaya.kominers.ea2024universal-in-place,akitaya.demaine.korman.ea2022compacting-squares,dumitrescu.pach2006pushing-squares,kostitsyna.ophelders.parada.ea2024optimal-in-place}
has studied algorithmic methods for sequentially sliding squares and cubes, with typical schedules requiring a quadratic number of moves.
Also related are models with slightly different types of moves, such as ``pivoting'', see~\cite{akitaya.arkin.damian.ea2021universal-reconfiguration,akitaya.demaine.gonczi.ea2021characterizing-universal,connor.michail.skretas2024all-for,michail.skretas.spirakis2019on-transformation}.

In~\cite{michail.skretas.spirakis2019on-transformation} the authors claim an algorithm for connectivity-preserving reconfiguration of sliding squares in $\mathcal{O}(n)$ parallel steps.
Unfortunately, this is incorrect for general input.
Nevertheless, if the input configurations are guaranteed to not contain a specific ``forbidden'' local pattern (a $2\times 2$ arrangement with only two diagonally adjacent squares), then their algorithm is correct.
Similar models of parallel sliding squares have been investigated experimentally~\cite{wolters2024parallel-algorithms}.
However, this still leaves the existence of linear-time parallel protocols unresolved.

While the previous work on sliding squares focuses on minimizing the number of moves, a different line of research aims at minimizing the total time until completion.
In~\cite{hurtado.molina.ramaswami.ea2015distributed-reconfiguration}, the authors study parallel distributed algorithms for a less restrictive model of sliding squares, obtaining $\mathcal{O}(n)$ makespan.
They show how to obtain the same result in the standard sliding square model using \newterm{meta-modules}, small groups of cooperating modules that behave like a single entity.
This, however, imposes extra constraints in the input.
Algorithmic research on meta-modules was considered in~\cite{aloupis.collette.damian.ea2009linear-reconfiguration,parada.sacristan.silveira2021new-meta-module}.
In~\cite{becker.fekete.keldenich.ea2018coordinated-motion,demaine.fekete.keldenich.ea2019coordinated-motion}, the authors considered coordinated motion planning in a less constrained model (see discussion in~\cref{sec:preliminaries}) for labeled robots to
minimize the makespan, aiming for \newterm{constant stretch}, i.e., a makespan
within a constant factor of the maximum distance
for a single robot.
This was extended to preserve connectivity for unlabeled~\cite{bourgeois.fekete.kosfeld.ea2022space-ants,fekete.keldenich.kosfeld.ea2023connected-coordinated} and labeled robots~\cite{fekete.kramer.rieck.ea2024efficiently-reconfiguring}, and between obstacles~\cite{fekete.kosfeld.kramer.ea2024coordinated-motion}.

\subparagraph*{Scaffolding and meta-modules.}
Computing reconfiguration paths is not a simple task and, for many models, it is computationally intractable (even \PSPACE-complete~\cite{akitaya.demaine.gonczi.ea2021characterizing-universal}).
\newterm{Scaffolding} (proposed independently by~\cite{kotay.rus2000algorithms-for} and~\cite{nguyen.guibas.yim2001controlled-module}) aims at making reconfiguration simpler by restricting the scope to configurations containing a regular substructure~(the \newterm{scaffold}) that remains connected between moves (thus removing complexity from the connectivity constraint) while having enough empty positions to allow modules to move through the scaffold (thus removing complexity from the free space constraints).
They obtain a scaffold by using \newterm{meta-modules}, groups of modules that cooperate to move as a higher-level functional unit.
Previous work has used these to translate algorithms for a particular model to another (see, e.g.,~\cite{hurtado.molina.ramaswami.ea2015distributed-reconfiguration,parada.sacristan.silveira2021new-meta-module,rus.vona2001crystalline-robots}).

The drawback of these techniques is that their input must belong to the subset of configurations where modules are already organized into meta-modules (the scaffold exists in both the initial and target configuration).
The authors of~\cite{bourgeois.fekete.kosfeld.ea2022space-ants,fekete.keldenich.kosfeld.ea2023connected-coordinated} proposed an algorithm that first builds the scaffold from an arbitrary scaled configuration.
They achieve \newterm{constant stretch}, i.e., a makespan within a constant factor of the maximum minimum distance for a single robot (based on previous work~\cite{becker.fekete.keldenich.ea2018coordinated-motion,demaine.fekete.keldenich.ea2019coordinated-motion}).
This was extended for labeled modules~\cite{fekete.kramer.rieck.ea2024efficiently-reconfiguring}, and between obstacles~\cite{fekete.kosfeld.kramer.ea2024coordinated-motion}.
Although their techniques apply to a wider range of configurations, there are still restrictions on the input.
The algorithm presented in this paper also makes use of scaffolding and meta-module techniques.
To the best of our knowledge, our algorithm is the first universal one to build the scaffold and meta-modules (i.e., from arbitrary input).

\subparagraph*{Model comparison.}
Our model is more constrained than existing models for parallel reconfiguration under connectivity constraints~\cite{dumitrescu.suzuki.yamashita2004motion-planning,fekete.keldenich.kosfeld.ea2023connected-coordinated,fekete.kramer.rieck.ea2024efficiently-reconfiguring,hurtado.molina.ramaswami.ea2015distributed-reconfiguration,michail.skretas.spirakis2019on-transformation}.
Michail, Skretas, and Spirakis~\cite{michail.skretas.spirakis2019on-transformation} do not explicitly require the connected backbone constraint, and do not explicitly define the collision model.
The models in~\cite{fekete.keldenich.kosfeld.ea2023connected-coordinated,fekete.kramer.rieck.ea2024efficiently-reconfiguring} do not require the connected backbone constraint, and allow modules to enter and leave cells simultaneously, even in orthogonal directions.
Similarly,~\cite{hurtado.molina.ramaswami.ea2015distributed-reconfiguration} relaxes the free-space constraint of the convex transition move, allowing a module to move through two diagonally adjacent static modules.
However, the authors require that the cells involved in individual slide moves in the same transformation are disjoint, disallowing moves as shown in~\cref{fig:intro-chain-moves}.
The same is true for~\cite{dumitrescu.suzuki.yamashita2004formations-for,dumitrescu.suzuki.yamashita2004motion-planning} where the paths of transformations at any given time stamp must be disjoint.
%
\section{Preliminaries}
\label{sec:preliminaries}
We study the connected reconfiguration of squares in the two-dimensional integer grid by sliding moves in a parallel variant of the sequential \newterm{sliding squares} model, as follows.

\subparagraph*{Configurations.}
Each simple $4$-cycle of edges in the infinite integer grid $\mathbb{Z}^2$ bounds a unit \newterm{cell} $c$, which is uniquely identified by its minimal integer coordinates~$x(c)$ and $y(c)$.
Two such cells are \newterm{edge-adjacent} (resp., \newterm{vertex-adjacent}) if their boundary cycles share an edge (resp., vertex).
A~\newterm{configuration} $C$ of $n$ square \newterm{modules} is defined by the set of occupied cells, each containing a single module.
We say that $C$ is valid if and only if its dual graph, defined by edge-adjacency, is connected, and denote by $B$ the unique axis-aligned bounding box of minimal perimeter~$P$ that contains $C$; see~\cref{fig:intro} for illustrations.

\begin{figure}[htb]
    \hfil%
    \begin{subfigure}[t]{0.26\columnwidth}
        \hfil%
        \includegraphics[page=1]{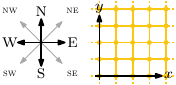}%
        \subcaption{The integer grid $\mathbb{Z}^2$.}
        \label{fig:intro-grid}%
    \end{subfigure}%
    \hfil%
    \begin{subfigure}[t]{0.72\columnwidth}
        \hfil%
        \includegraphics[page=2]{intro-grid-configuration}%
        \hfil%
        \includegraphics[page=3]{intro-grid-configuration}%
        \hfil%
        \includegraphics[page=4]{intro-grid-configuration}%
        \subcaption{A configuration $C$ of $10$ modules, its dual graph, and bounding box $B$.}
        \label{fig:intro-configuration}
    \end{subfigure}%
    \caption{We illustrate the relation of the integer grid and configurations.}%
    \label{fig:intro}
\end{figure}%

\subparagraph*{Moves.}
Individual modules can perform two types of \newterm{move} into an unoccupied edge-adjacent or vertex-adjacent cell, \newterm{slides} and \newterm{convex transitions}, as illustrated in~\cref{fig:intro-legal-moves}.
In our model, both moves take an identical (unit) duration to complete, allowing us to easily extend the existing notion of move counting in sequential models~\cite{akitaya.demaine.korman.ea2022compacting-squares,dumitrescu.pach2006pushing-squares,dumitrescu.suzuki.yamashita2004motion-planning,moreno.sacristan2020reconfiguring-sliding} to the parallel setting.
Note that in our figures, convex transitions are shown as paths containing a circular arc for clarity.
Such a path would be accurate if the corners of modules were rounded.

\subparagraph*{Transformations.}
The parallel execution of moves constitutes a \newterm{transformation}, which we denote by the initial and resulting configurations, e.g., $C_1\rightarrow C_2$.
A transformation is valid only if it (i) preserves connectivity and (ii) does not cause collisions.

As in a number of existing models for both sequential and parallel reconfiguration of squares in the grid~\cite{abel.akitaya.kominers.ea2024universal-in-place,akitaya.demaine.korman.ea2022compacting-squares,kostitsyna.ophelders.parada.ea2024optimal-in-place,wolters2024parallel-algorithms}, connectivity is determined by the existence of a \newterm{connected backbone}:
Given a configuration $C$, we call a subset $M$ of modules \newterm{free} if $C\setminus M'$ is a valid configuration for any $M'\subseteq M$.
Let $M\subset C_1$ refer to all moving modules during $C_1\rightarrow C_2$.
Then there exists a connectivity-preserving backbone if and only if $M$ is free both in $C_1$ and in $C_2$.
A~module $m$ is \newterm{free} if $\{m\}$ is free.
In~\cref{fig:intro-backbone}, both transformations are illegal as they attempt to move modules that are not free (conflicts are marked in red).
\begin{figure}[htb]
    \centering%
    \transformable{\includegraphics[page=1]{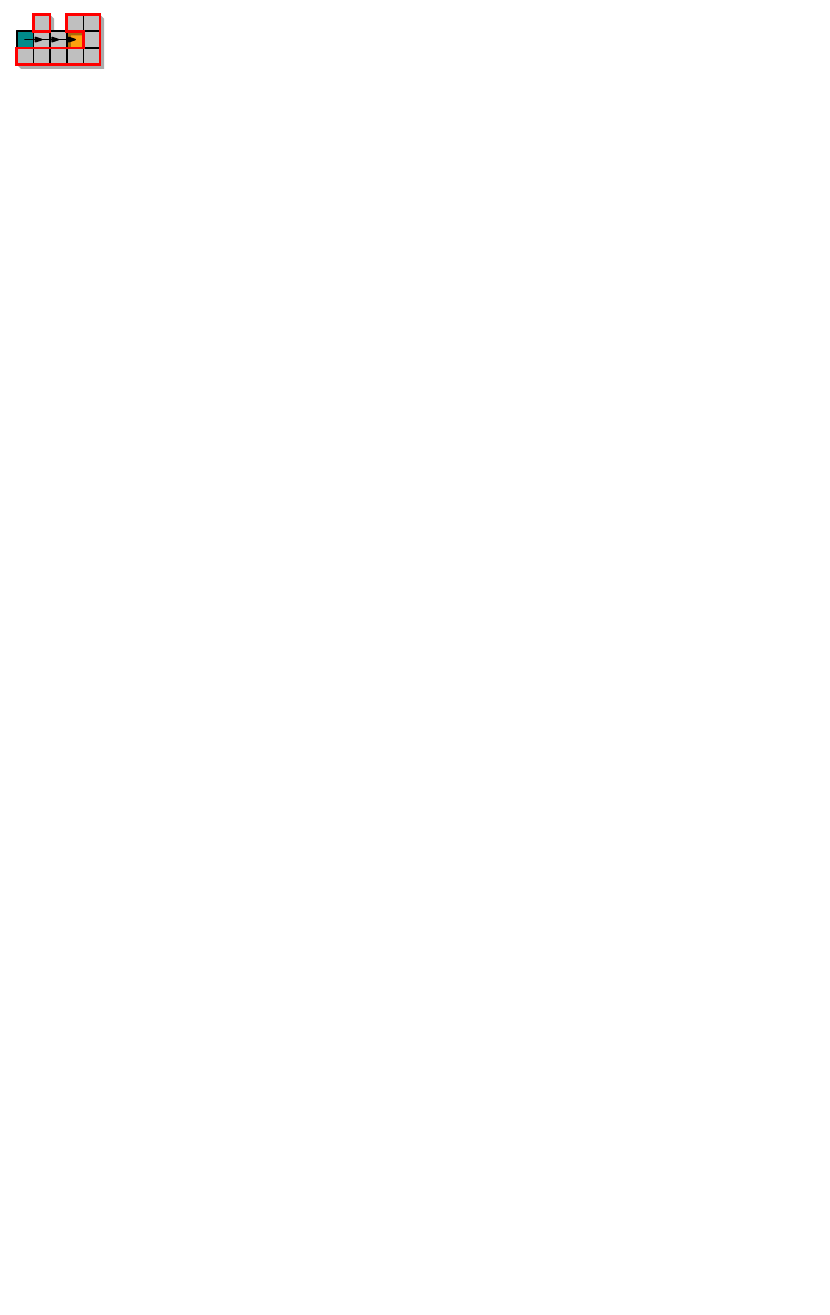}}%
    \ntransforms%
    {\includegraphics[page=2]{intro-backbone}}%
    \ntransforms%
    {\includegraphics[page=3]{intro-backbone}}%
    \caption{Illegal transformations: Moving non-free modules causes disconnections.}%
    \label{fig:intro-backbone}%
\end{figure}%

Secondly, a transformation is collision-free exactly if all modules can move along their designated path at a constant rate, completing it in unit time without overlapping with any other module in the process.
We illustrate a partial selection of collisions in~\cref{fig:intro-collisions}.
Similar to the sequential variant, the defined moves and transformations are reversible, i.e., if for any transformation $C_1\rightarrow C_2$, there exists an inverse transformation~${C_2\rightarrow C_1}$.

\begin{figure}[htb]
    \hfil%
    \includegraphics[page=4]{intro-moves}%
    \caption{Collisions: (a) Modules cannot swap places or (b) enter the same cell simultaneously, (c)~moves cannot meet at endpoints orthogonally, and (d) speed mismatches must be avoided.}
    \label{fig:intro-collisions}%
\end{figure}

\subparagraph*{Problem statement.}
The {\textsc{Parallel Sliding Squares}} problem asks for the connected reconfiguration of modules in the infinite integer grid by sequences of legal transformations.
Such a sequence is referred to as a \newterm{schedule}, and its length as its \newterm{makespan}.
An~instance of {\textsc{Parallel Sliding Squares}} asks, for two configurations $C_1,C_2$ of $n$ modules and some~$k\in\mathbb{N}^+$, whether there exists a schedule $C_1 \rightarrow^* C_2$ of makespan at most~$k$.

\subparagraph*{Labeled problem variant.}
In the case of distinguishable (or \newterm{labeled}) modules, we speak of the {\textsc{Labeled Parallel Sliding Squares}} problem.
This variant uses the exact same move set and collision constraints as the unlabeled variant, but configurations are now injective mappings from $[n]$ to $\mathbb{Z}^2$ rather than subsets of the latter.
This allows us to specify individual target positions and track the movement of individual modules across transformations.

\subparagraph*{Useful notation.}
Throughout this paper, we make use of cardinal and ordinal directions.
In particular, the unit vector $(1,0)$ points \newterm{east}, $(0,1)$ points \newterm{north}, and their opposite vectors point \newterm{west} and \newterm{south}, respectively.
For an illustration, see~\cref{fig:intro-grid}.
Throughout this paper, we use $B_1$ and $B_2$ to denote the bounding boxes of $C_1$ and~$C_2$, respectively.
We assume, as in the literature~\cite{moreno.sacristan2020reconfiguring-sliding}, that $B_1$ and $B_2$ share a south-west corner at $(0,0)$.

We define the (open) neighborhood $N(c)$ of a cell $c$ as the set of cells that are edge-adjacent to $c$, and the closed neighborhood $N[c]$ as $N(c)\cup\{c\}$.
We abuse notation to express neighborhoods of sets of cells as $N[S]=\bigcup_{c\in S}N[c]$ and $N(S)=N[S]\setminus S$.
Analogously, define $N^*(S)$ (resp., $N^*[S]$) as the open (resp., closed) neighborhood using vertex-adjacency.

Given a collection of modules~$M$, let $M[i]^x=\{m\in M \mid x(m)=i\}$ be the subset of~$M$ of modules with $x$ coordinate $i$.
Likewise let $M[i,j]^x$ be the set of all modules within the $x$-interval~$[i,j]$, where $j\geq i$.
The same selectors along the $y$-axis are denoted by $M[\cdot]^y$.

A schedule for $\mathcal{I}$ is (\newterm{strictly}) \newterm{in-place} (as defined in~\cite{akitaya.demaine.korman.ea2022compacting-squares,moreno.sacristan2020reconfiguring-sliding}) if and only if no intermediate configuration exceeds the union $B_1\cup B_2$ by more than one module.
We relax this constraint slightly and say that a schedule is \newterm{weakly in-place} if no intermediate configuration exceeds the union $B_1\cup B_2$ by more than a constant number of units.
Note that we do not restrict the number of modules located outside the union, only their distance to it.
This~corresponds to a slightly more restricted variant of the definition used in~\cite{abel.akitaya.kominers.ea2024universal-in-place}.
%

\section{Computational complexity}
\label{sec:hardness}

We provide a number of complexity results for {\textsc{Parallel Sliding Squares}}, both in the labeled and unlabeled variant, which are complementary to the \NP-completeness result obtained by the authors of~\cite{akitaya.demaine.korman.ea2022compacting-squares} for the sequential model.

We start by briefly arguing that (\textsc{Labeled}) {\textsc{Parallel Sliding Squares}} is in \NP.

\begin{proposition}
    {\textsc{Parallel Sliding Squares}} and its {\textsc{Labeled}} variant are in \NP.
    \label{prop:sliding-in-np}
\end{proposition}
\begin{proof}
    We assume real RAM; a labeled (or unlabeled) configuration of $n$ modules is then simply ordered sequence coordinates in $\mathbb{Z}^2$, i.e., $\mathcal{O}(n)$ space.
    An instance of either variant consists of two configurations and a single integer $k\in\mathbb{Z}$ for the maximum makespan.

    To verify a solution, consider as a witness the schedule $C_1 \rightarrow^k C_{k+1}$.
    Every transformation in this schedule moves at most $n$ modules, so this witness requires $\mathcal{O}(nk)$ space.
    Note that, expressed as such, the witness size is unbounded in the input size $n$.
    However, due to~\cite{dumitrescu.pach2006pushing-squares}, there exists a schedule of $\mathcal{O}(n^2)$ sequential moves for any given instance, i.e., there exists a constant $c\in\mathbb{R}^+$ such that every instance with $k\geq cn^2$ is a yes-instance.
    We may thus assume, without loss of generality, that $k\in\mathcal{O}(n^2)$ and so, $\mathcal{O}(nk)\subset\mathcal{O}(n^3)$.

    To verify the witness, we proceed iteratively, verifying the transformations in the schedule one by one.
    Let $M_i\subset C_i$ refer to the set of modules that move during $C_i\rightarrow C_{i+1}$.
    We can verify that $M_i$ is free in $C_i$ by checking whether $C_i\setminus M_i$ forms a connected configuration, which takes $\mathcal{O}(n)$ time using breadth-first search.
    By performing at most a constant number of checks in the proximity of each moving module $m\in M_i$, we can verify that its move (a)~does not collide with any other move and (b) is supported by backbone modules in $C\setminus M_i$.
    Using dynamic perfect hashing, we can represent the current configuration in $\mathcal{O}(n)$ space such that each check takes $\mathcal{O}(1)$ time~\cite{dietzfelbinger.karlin.ea1994dynamic-perfect}, i.e., $\mathcal{O}(n)$ per transformation.

    Once all transformations have been checked, we verify that the resulting configuration matches the target configuration of the instance.
    The total runtime is then $\mathcal{O}(nk)$.
\end{proof}

In the following two subsections, we give detailed proof of complexity for both the labeled and unlabeled variants of {\textsc{Parallel Sliding Squares}}.
We reduce from the \NP-complete problem {\textsc{Planar Monotone 3Sat}}~\cite{berg.khosravi2012optimal-binary}, which asks whether a Boolean formula in conjunctive normal form is satisfiable.
Each clause consists of at~most~$3$~literals, all either positive or negative, and the clause-variable incidence graph must admit a plane drawing where variables are mapped to the $x$-axis, positive (resp., negative) clauses are mapped to the upper (resp., lower) half-plane, and edges do not cross the $x$-axis.

\subsection{Complexity of the unlabeled variant}
\label{subsec:complexity-unlabeled}

\begin{theorem}
    \label{thm:unlabeled-makespan-hard}
    {\textsc{Parallel Sliding Squares}} is \NP-complete for makespan $1$.
\end{theorem}

Our reduction starts with a rectilinear embedding of the clause-variable incidence graph of an instance $\varphi$ of {\textsc{Planar Monotone 3Sat}} and constructs an instance~$\mathcal{I}_\varphi$ of \textsc{Parallel Sliding Squares} using \newterm{variable} and \newterm{clause gadgets}.
We then argue that~$\varphi$ can be satisfied if and only if there is a single transformation that solves $\mathcal{I}_\varphi$.

For a formula $\varphi$ over the variables $x_1,x_2,\ldots, x_m$, we create $m$ copies of the variable gadget in a horizontal line, directly adjacent to one another.
For each positive (negative) clause, we further place a corresponding gadget connected to its respective variables from above (below) the $x$-axis.
A~simple example of our construction is depicted in~\cref{fig:hardness-unlabeled-highlevel}.
\begin{figure}[htb]
    \begin{subfigure}[t]{\columnwidth}
        \hfil%
        \includegraphics[page=1,width=\columnwidth]{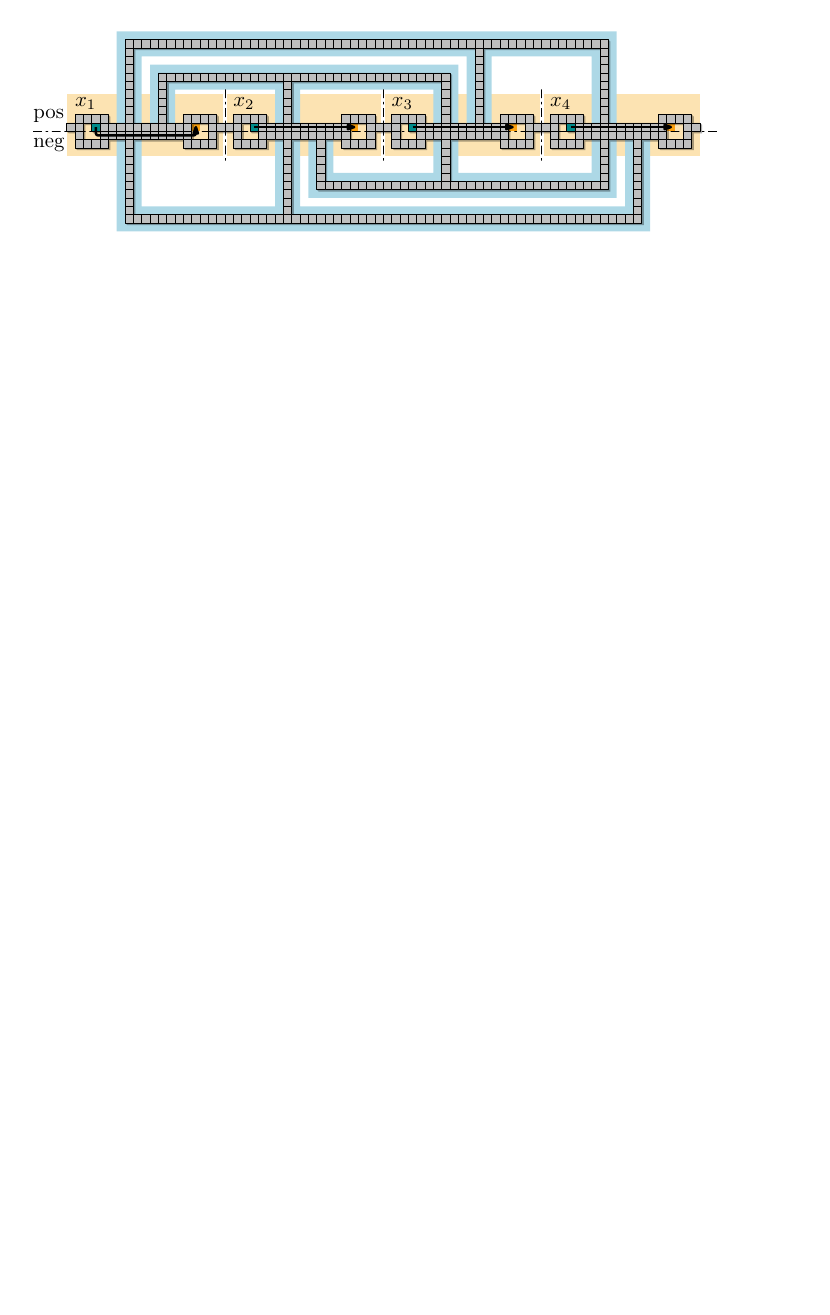}%
        \subcaption{Our construction for $\varphi = (x_1 \lor x_3 \lor x_4)\land(x_1 \lor x_2 \lor x_3)\land(\overline{x_1}\lor\overline{x_2}\lor\overline{x_4})\land(\overline{x_2}\lor\overline{x_3}\lor\overline{x_4})$.
        The depicted transformation represents the satisfying assignment $\alpha(\varphi) = (\texttt{true},\ \texttt{false},\ \texttt{false},\ \texttt{false})$.}
        \label{fig:hardness-unlabeled-highlevel}
    \end{subfigure}\par\medskip%
    \captionsetup[subfigure]{justification=centering}%
    \begin{subfigure}[t]{0.5\textwidth-1em}%
        \hfil%
        \includegraphics[page=2]{hardness-unlabeled}%
        \subcaption{Variable gadget (blue/red $\equiv$ \texttt{true}/\texttt{false}).}
        \label{fig:unlabeled-variable}
    \end{subfigure}%
    \hfill%
    \begin{subfigure}[t]{0.5\textwidth-1em}%
        \hfil%
        \includegraphics[page=3]{hardness-unlabeled}%
        \subcaption{Clause gadget.}
        \label{fig:unlabeled-clause}
    \end{subfigure}%
    \caption{An overview of our hardness reduction and the two types of gadget used.}
\end{figure}

\subparagraph*{Variable gadget.}%
The variable gadget consists of two cycles of $12$ modules each, the left containing an additional module in its interior, see~\cref{fig:unlabeled-variable}.
These cycles are connected by a horizontal \newterm{assignment strip} of height $2$, to which individual clause gadgets (blue) are connected.
In the target configuration, the left cycle is empty, and an extra module is now located in the right.
We highlight two transformation representing either a \texttt{true} or \texttt{false} value assignment, respectively, in~\cref{fig:unlabeled-variable}.
Each option locally severs the edge-adjacency between the variable gadget and all its positive or negative clauses, respectively.

\subparagraph*{Clause gadget.}%
The clause gadget consists of a thin horizontal strip that spans the gadgets of variables contained in the clause, and (up to) three vertical prongs that connect to the assignment strips of the incident variable gadgets, see~\cref{fig:unlabeled-clause}.
Note that there is not difference between the start and target configuration.

\begin{proof}[Proof of~\cref{thm:unlabeled-makespan-hard}]
    We reduce from {\textsc{Planar Monotone 3Sat}}.
    For a given instance $\varphi$ of {\textsc{Planar Monotone 3Sat}}, we start by constructing a {\textsc{Parallel Sliding Squares}} instance $\mathcal{I}_\varphi$ by means of the previously outlined gadgets.
    We now show that $\varphi$ has a satisfying assignment $\alpha(\varphi)$ if and only if there exists a schedule of makespan $1$ that solves $\mathcal{I}_\varphi$.

    \begin{claim}
        \label{clm:hardness-assignment-implies-schedule}
        If $\varphi$ has a satisfying assignment, there is a single transformation that solves $\mathcal{I}_\varphi$.
    \end{claim}
    \begin{claimproof}
        Given a satisfying assignment $\alpha(\varphi)$ for $\varphi$, let $\alpha(x_i)\in\{\texttt{true}, \texttt{false}\}$ refer to the Boolean value assigned to the variable $x_i$.
        We solve each variable gadget based on $\alpha(x_i)$.

        In case that $\alpha(x_i)=\texttt{true}$, we move the modules in the ``negative'' side of the assignment strip as indicated by the blue path in~\cref{fig:unlabeled-variable}.
        This breaks locally disconnects all incident negative clauses, but preserves a backbone consisting of the remaining modules within the gadget and all  the incident positive clause gadgets.
        A transformation involving only sliding moves of modules along the ``positive'' side handles the negative case.
        Parallel application in each variable gadget provides us with a joint, connected backbone along all gadgets.

        It remains to argue that all clause gadgets retain a connection to this backbone, i.e., that there exists a single connected backbone.
        This is the case if and only if each clause gadget is incident to a variable gadget with a matching Boolean value assignment.
        Choosing our parallel transformations based on $\alpha(\varphi)$ therefore yields a schedule of makespan $1$ for $\mathcal{I}_\varphi$.
    \end{claimproof}

    \begin{claim}
        \label{clm:hardness-schedule-implies-assignment}
        If there is a transformation that solves $\mathcal{I}_\varphi$, then $\varphi$ has a satisfying assignment.
    \end{claim}

    \begin{claimproof}
        All cells that are occupied in both configurations of $\mathcal{I}_\varphi$ must either retain the same module, or, if the module moves, be entered by another module simultaneously.
        A transformation that solves the instance thus induces a set of directed paths (``chains'') that connect pairs of positions that are occupied in only one of the two configurations.

        Recall that the symmetric difference contains exactly $2m$ cells, all located within variable gadgets, where $m$ is the number of variables in $\varphi$.
        It directly follows that the chain moves are limited to a single variable gadget each:
        The modules marked with a cross in~\cref{fig:unlabeled-variable} cannot move to either of their occupied neighbor cells.

        It remains to show that a transformation solving $\mathcal{I}_\varphi$ always implies a satisfying assignment for $\varphi$.
        To this end, we show that such a transformation always disconnects either all incident positive or negative clauses from a variable gadget.
        We start with the excess square in the left circle of the variable gadget, which can reach exactly two cells that are occupied in the target configuration, either the $\east$- or $\south\east$-adjacent cell.
        From this point onward, due to various collision restrictions, modules can only move towards the right circle in a straight line, disconnecting the assignment strip from adjacent prongs during its transformation.

        We conclude that a transformation solving a variable gadget causes either all positive or negative clause prongs to lose connection to the variable throughout.
        Assigning each variable a Boolean value based on the direction taken by the excess square in its variable gadget therefore yields a satisfying assignment for $\varphi$.
    \end{claimproof}
    
    \cref{clm:hardness-assignment-implies-schedule,clm:hardness-schedule-implies-assignment} conclude the proof.
\end{proof}

\cref{thm:unlabeled-makespan-hard} highlights a previously unrecognized complexity gap compared to related models for parallel connected reconfiguration without the connected backbone constraint.
In particular, in the unlabeled model studied in~\cite{fekete.keldenich.kosfeld.ea2023connected-coordinated,fekete.kramer.rieck.ea2024efficiently-reconfiguring}, deciding the existence of schedules of makespan $2$ is \NP-complete, while makespan $1$ is in~\P.

\begin{corollary}
    \label{cor:unlabeled-apx-hardness}
    Unless $\P=\NP$, {\textsc{Parallel Sliding Squares}} and cannot be approximated within a factor better than $2$ in polynomial time.
\end{corollary}

\begin{proof}
    To prove this, it suffices to show that any instance $\mathcal{I}_\varphi$ created by our hardness reduction can be solved in $2$ parallel transformations without deciding the underlying satisfiability problem.
    Recall that the two configurations differ only in the variable gadgets.

    We describe two transformations that reconfigure every variable gadget from its initial to its target configuration.
    Assume, without loss of generality, that the variable gadgets are sorted in increasing order by index; otherwise, we may relabel them accordingly.
    The reconfiguration proceeds as follows: (1) we first perform a sliding chain move along the ``positive'' part of each variable gadget's assignment strip, ensuring that each clause gadget remains connected to the variable of highest index among those it contains; (2) we then perform a second sliding chain move involving the remaining modules of the upper part of every variable, yielding the target configuration.
    As during the second transformation, each clause remains connected to at least one of its variables (specifically, the one with the smallest index that contains it), the instance admits a schedule of makespan $2$.
\end{proof}

\subsection{Complexity of the labeled variant}
\label{subsec:complexity-labeled}

For the {\textsc{Labeled Parallel Sliding Squares}} problem, in which individual modules are distinguishable, we show that it can be decided in polynomial time whether a schedule of makespan $1$ exists, while the same is \NP-complete for makespan $2$.
\begin{corollary}
    \label{cor:labeled-makespan-one}
    {\textsc{Labeled Parallel Sliding Squares}} with makespan $1$ can be decided in~$\mathcal{O}(n)$ time and space.
\end{corollary}
\begin{proof}
    This directly follows from \cref{prop:sliding-in-np}.
    As both configurations are labeled, we can simply rewrite their difference as a witness of size $\mathcal{O}(n)$ and pass it to our verifier, which then decides the instance in $\mathcal{O}(n)$ time.
\end{proof}

For makespan $2$, our reduction closely follows the structure in the previous section.
With minor changes to both gadgets and a new, third type of gadget, we prove the following.
\begin{theorem}
    \label{thm:labeled-hardness}
    {\textsc{Labeled Parallel Sliding Squares}} is \NP-complete for makespan~$2$.
\end{theorem}

\subparagraph*{Variable gadget.} The difference between the labeled and unlabeled variable gadgets is mostly in the fact that \emph{two} modules must now be moved from one side to the other.
In fact, the entire assignment strip must now be translated by one unit to the east.
Otherwise, the structure remains largely unchanged, albeit labeled, see~\cref{fig:hardness-labeled-highlevel}.
\begin{figure}[t]
    \begin{subfigure}[t]{\columnwidth}%
        \includegraphics[page=1,width=\columnwidth]{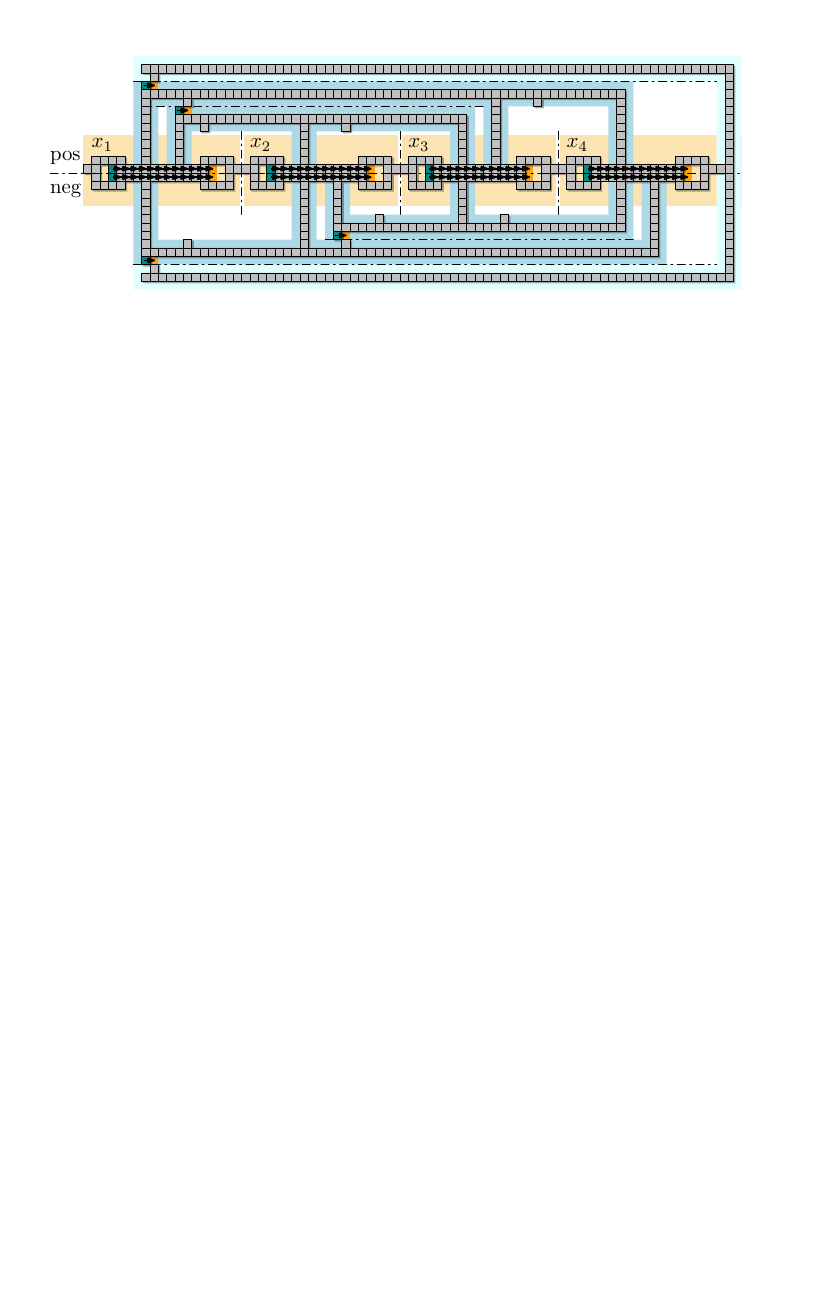}
        \caption{Our instance construction for the labeled variant of the problem, using the same formula as~\cref{fig:hardness-unlabeled-highlevel}.
        Note the newly introduced backbone gadget (light blue) above and below the instance's clause gadgets.}
        \label{fig:hardness-labeled-highlevel}
    \end{subfigure}%
    \par\medskip%
    {\captionsetup[subfigure]{justification=centering}%
        \begin{subfigure}[t]{0.5\columnwidth}%
            \hfil%
            \includegraphics[page=2]{hardness-labeled}
            \subcaption{Variable gadget and labeling.}
            \label{fig:labeled-variable}
        \end{subfigure}%
    }%
    \begin{subfigure}[t]{0.5\columnwidth}%
        \hfil%
        \includegraphics[page=3]{hardness-labeled}
        \subcaption{Clause gadget; note the extra module at the top.}
        \label{fig:labeled-clause}
    \end{subfigure}%
    \caption{An overview of our  reduction from \textsc{Planar Monotone 3Sat} to
            {\textsc{Labeled Parallel Sliding Squares}}; arrows indicate the assigned target position of a module.}
\end{figure}

\subparagraph*{Clause and backbone gadgets.}
Apart from one module that needs to move one step to the right, each clause gadget is identical in both configurations of the instance and resembles the unlabeled gadget, compare~\cref{fig:unlabeled-clause,fig:labeled-clause}.
These additional squares guarantee connectivity of the configuration during the second transformation step, which cannot be provided by the variable gadgets alone.
To make this gadget work, we introduce an additional structure, the \newterm{backbone gadget}.
It spans the entire configuration above the topmost and below the bottommost clause that connects everything to the variables, as visualized in~\cref{fig:hardness-labeled-highlevel}.

\begin{proof}[Proof of~\cref{thm:labeled-hardness}]
    We again reduce from {\textsc{Planar Monotone 3Sat}}, i.e., we construct an instance~$\mathcal{I}_\varphi$ of {\textsc{Labeled Parallel Sliding Squares}} from any given formula~$\varphi$ by using the described gadgets.
    We show that $\varphi$ has a satisfying assignment $\alpha(\varphi)$ if and only if there exists a schedule of makespan $2$ that reconfigures $\mathcal{I}_\varphi$.
    \begin{claim}
        \label{clm:labeled-hardness-assignment-implies-schedule}
        If $\varphi$ has a satisfying assignment $\alpha(\varphi)$, there is a schedule of makespan $2$ for $\mathcal{I}_\varphi$.
    \end{claim}
    \begin{claimproof}
        Consider the satisfying assignment $\alpha(\varphi)$ for $\varphi$, and let $\alpha(x_i)$ refer to the Boolean value assigned to $x_i$, i.e., $\texttt{true}$ or $\texttt{false}$.
        We argue based on $\alpha(\varphi)$, as follows.
        Just as in the unlabeled variant, according to the assignment, for every $x_i$ we can perform the respective chain move, as illustrated in~\cref{fig:labeled-variable}.
        In case that $\alpha(x_i) = \texttt{true}$, this chain move breaks connectedness with incident negative clauses.
        Note that both chains of the assignment strip for a variable cannot move simultaneously.
        Furthermore, we also move all squares associated with the backbone in this very first move, as depicted in~\cref{fig:labeled-clause}.
        As $\alpha(\varphi)$ satisfies~$\varphi$, this first parallel transformation maintains a connected configuration.

        Afterward, we perform all other chain moves within the variable gadgets.
        As the connected backbone remains stationary during the second transformation, the whole configuration stays connected.
        Thus, the desired target configuration is reached after two parallel moves.
    \end{claimproof}

    \begin{claim}
        \label{clm:labeled-hardness-schedule-implies-assignment}
        If there is a schedule of makespan $2$ for $\mathcal{I}_\varphi$, then $\varphi$ has a satisfying assignment.
    \end{claim}
    \begin{claimproof}
        In any schedule of makespan $2$, both horizontal chains of an assignment strip can not move simultaneously, as this would require illegal transformations.
        However, in each of the two transformation steps, exactly one of the chains has to move.
        As we require a connected backbone during any transformation, we use this movement to assign the respective values to each associated variable.
        This assignment satisfies the given formula: Every clause gadget remains connected to at least one variable gadget of matching value.
        Similar to the unlabeled variant, there are no other feasible moves within variable gadgets that guarantee both connectedness and the desired makespan.

        Depending on whether the negated assignment also fulfills the given formula, the backbone squares must move in the first transformation step as well, or not.
        Regardless of these movements, the configuration remains connected during the second transformation, as the schedule was legal by assumption.
    \end{claimproof}
    
    \Cref{clm:labeled-hardness-assignment-implies-schedule,clm:labeled-hardness-schedule-implies-assignment} conclude the proof.
\end{proof}

\begin{corollary}
    \label{cor:labeled-apx-hardness}
    Unless $\P=\NP$, {\textsc{Labeled Parallel Sliding Squares}} cannot be approximated within a factor better than $\nicefrac{3}{2}$ in polynomial time.
\end{corollary}

\begin{proof}
    It suffices to show that any instance $\mathcal{I}_\varphi$ created by our reduction can be solved in $3$ parallel transformations without deciding the underlying satisfiability problem.
    We proceed as follows;
    (1) solve each clause gadget and the backbone gadget in a single transformation, then
    (2) we solve the ``positive'' halves of all variable gadgets in parallel, and finally, (3) solve the ``negative'' halves.
    Due to (1), the second and third steps do not cause disconnections.
\end{proof}
%

\pagebreak%
\section{Solving scaled instances}
\label{sec:scaled-configurations}
In this section, we show that \newterm{scaled} configurations can be efficiently reconfigured into one another, allowing them to serve as canonical configurations for our main result in~\cref{sec:algorithm}.

\subparagraph*{Scale.}
In general, a configuration is \newterm{$c$-scaled} exactly if it is composed of $c \times c$ squares aligned with a corresponding $c \times c$ integer~grid.
This constraint on the input configurations closely matches the meta-module literature~\cite{aloupis.collette.damian.ea2009linear-reconfiguration,parada.sacristan.silveira2021new-meta-module,rus.vona2001crystalline-robots} that poses similar constraints by assuming that both configurations are composed of meta-modules, as outlined in~\cref{sec:introduction}.

\subparagraph*{Monotonicity and histograms.}
A configuration $C$ is \newterm{monotone} with respect to an axis exactly if for every coordinate along that axis, the modules of $C$ at this coordinate form a connected line, which we call a \newterm{bar}.
A monotone configuration is a \newterm{histogram} if all bars either share the same minimal or maximal coordinate.
We call this coordinate the \newterm{base} of $C$.
\medskip

We transfer and improve techniques from~\cite{fekete.keldenich.kosfeld.ea2023connected-coordinated} to solve scaled configurations in our model efficiently.
On a high level, the approach operates by moving modules maximally to the south-west, creating $xy$-monotone histograms as intermediate configurations.
This approach yields schedules linear in the perimeter $P_1$ and $P_2$ of the bounding boxes $B_1$ and $B_2$ of the two input configurations $C_1$ and $C_2$, respectively.

\begin{theorem}
    For {\textsc{Parallel Sliding Squares}} with two $3$-scaled configurations, an in-place schedule of at most $12(P_1+P_2)$ transformations can be computed in polynomial time.
    \label{thm:3-scaled-reconfiguration}
\end{theorem}

A crucial subroutine for our proof of~\cref{thm:3-scaled-reconfiguration} is going to be the following.

\begin{lemma}
    A $3$-scaled configuration can be translated by $k$ units in any cardinal direction by $6k$ strictly in-place transformations.
    \label{lem:scale-3-translation}
\end{lemma}
\begin{proof}
    Assume, without loss of generality, that the target direction is south.
    We divide~$C$ into maximal $3$-wide strips of modules as shown in~\cref{fig:scale-3-strip-decomposition} and move these strips to the south simultaneously, see~\cref{fig:scale-3-strip-translation}.
    \begin{figure}[htb]
        \captionsetup[figure]{justification=centering}%
        \begin{minipage}[t]{0.5\columnwidth - 0.5em}%
            \centering%
            \includegraphics[page=1]{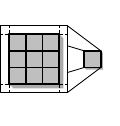}%
            \includegraphics[page=2]{scaled-translation}%
            \caption{A $3$-scaled configuration's strips.}
            \label{fig:scale-3-strip-decomposition}
        \end{minipage}%
        \hfill%
        \begin{minipage}[t]{0.5\columnwidth - 0.5em}%
            \centering%
            \transformable{\includegraphics[page=3]{scaled-translation}}%
            \ttransforms%
            {\includegraphics[page=4]{scaled-translation}}%
            \caption{Strips can be translated efficiently.}
            \label{fig:scale-3-strip-translation}
        \end{minipage}
    \end{figure}
    \begin{figure}[htb]
        \begin{subfigure}[t]{\textwidth}%
            \hfil%
            \transformable{\includegraphics[page=1]{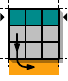}}%
            \transforms{\includegraphics[page=2]{scaled-translation-schedules}}%
            \transforms{\includegraphics[page=3]{scaled-translation-schedules}}%
            \transforms{\includegraphics[page=4]{scaled-translation-schedules}}%
            \transforms{\includegraphics[page=5]{scaled-translation-schedules}}%
            \transforms{\includegraphics[page=6]{scaled-translation-schedules}}%
            \transforms{\includegraphics[page=7]{scaled-translation-schedules}}%
            \subcaption{Moving a $3\times 3$ rectangle south.}%
        \end{subfigure}\par%
        \begin{subfigure}[t]{\textwidth}%
            \hfil%
            \transformable{\includegraphics[page=8]{scaled-translation-schedules}}%
            \transforms{\includegraphics[page=9]{scaled-translation-schedules}}%
            \transforms{\includegraphics[page=10]{scaled-translation-schedules}}%
            \transforms{\includegraphics[page=11]{scaled-translation-schedules}}%
            \transforms{\includegraphics[page=12]{scaled-translation-schedules}}%
            \transforms{\includegraphics[page=13]{scaled-translation-schedules}}%
            \transforms{\includegraphics[page=14]{scaled-translation-schedules}}%
            \subcaption{Moving a $3k\times 3$ rectangle south ($k\geq 2$).}%
        \end{subfigure}%
        \caption{Schedules to translate $3$-scaled rectangles south.}
        \label{fig:scale-3-strip-translation-schedules}%
    \end{figure}
    Using the schedules depicted in~\cref{fig:scale-3-strip-translation-schedules}, we argue that moving the strips by one unit takes exactly six transformations.
    Due to the configuration's scale and the strips' maximality, neither schedule causes collisions; they are strictly in-place for each strip.
    It remains to argue that there exists a connected backbone during each of the six transformations.
    This can be verified fully locally using~\cref{fig:scale-3-strip-translation-schedules}:
    Firstly, the modules of each strip always have a locally connected backbone which they move along.
    Furthermore, there exists an $i\in[3]$ during each transformation such that all modules that satisfy $1\not\equiv x(m)\bmod{3}$ and $i\equiv y(m)\bmod{3}$ do not move; these are marked by triangles.
    Any two adjacent strips therefore share a connected backbone, and its local existence directly implies its global existence.

    Applying this routine $k$ times yields an in-place schedule of makespan $6k$.
\end{proof}

\begin{lemma}
    Any $3$-scaled configuration $C$ with bounding box height $h$ can be transformed into a $3$-scaled histogram in at most $6(h-3)$ strictly in-place transformations.
    \label{lem:scale-3-histogram}
\end{lemma}
\begin{proof}
    Assume, without loss of generality, that the target shape has its base at $y=0$.
    We again decompose~$C$ into maximal $3$-wide strips of modules, as in~\cref{fig:scale-3-strip-decomposition}.
    To obtain the target histogram, we then iteratively move all strips that do not contain a module at $y=0$ south, using the exact schedules depicted in~\cref{fig:scale-3-strip-translation-schedules}.
    Due to~\cref{lem:scale-3-translation}, we know that adjacent strips have a common connected backbone.
    It remains to argue that the strips that already contain a module at $y=0$ have a shared, connected backbone with those that must move.
    This follows by the same line of argumentation as in the proof of~\cref{lem:scale-3-translation}:
    Every strip that moves is either adjacent to another strip that moves, or a static strip, meaning that the marked modules in~\cref{fig:scale-3-strip-translation-schedules} provide a globally connected backbone.
\end{proof}

\begin{corollary}
    Any $3$-scaled configuration $C$ can be transformed into a $3$-scaled, $xy$-monotone histogram in at most $3(P-6)$ strictly in-place transformations.
    \label{cor:scale-3-histogram-xy-monotone}
\end{corollary}
\begin{proof}
    Simply apply~\cref{lem:scale-3-histogram} twice along orthogonal axes.
\end{proof}

\begin{lemma}
    For any instance of two $3$-scaled, $xy$-monotone histograms with identical bases, we can efficiently compute an in-place schedule of at most $\max(9P_1+9P_2)$ transformations.
    \label{lem:xy-monotone-to-xy-monotone}
\end{lemma}

\begin{proof}
    Our approach generalizes an result from~\cite{fekete.keldenich.kosfeld.ea2023connected-coordinated,fekete.kramer.rieck.ea2024efficiently-reconfiguring} to reconfigure histograms in a different model, which has also been explored experimentally for the sliding squares model~\cite{wolters2024parallel-algorithms}.

    Recall that the bounding boxes $B_1$ and $B_2$ share an $xy$-minimal corner at $(0,0)$.
    We assume, without loss of generality, that the base coordinate of $C_1$ and $C_2$ with respect to either axis is $0$.
    In particular, this means that we can assume $(0,0)\in C_1\cap C_2$.

    Due to $xy$-monotonicity, ${(x_1,y_1)}\in C_1 \Rightarrow {(x_2,y_2)}\in C_1$ for all $x_2\leq x_1$ and $y_2\leq y_1$.
    This implies that we can use the schedules depicted in~\cref{fig:L-shaped-tunneling} to move a $3\times 3$ square along the slope of the configuration, taking $18$ moves regardless of distance.
    This can be performed in parallel for pairs of $3\times 3$ squares if the connecting paths as shown in~\cref{fig:L-shaped-tunneling-a} are disjoint.
    For the remainder of this proof, we use these schedules exclusively, arguing based on the non-scaled configurations instead.
    We characterize a schedule that achieves our goal in terms of such ``L''-shaped paths that can be realized in parallel in the ``real'', scaled configurations.
    \begin{figure}[htb]%
        \captionsetup[subfigure]{justification=centering}%
        \begin{subfigure}[t]{0.2\columnwidth}%
            \centering%
            \includegraphics[page=1]{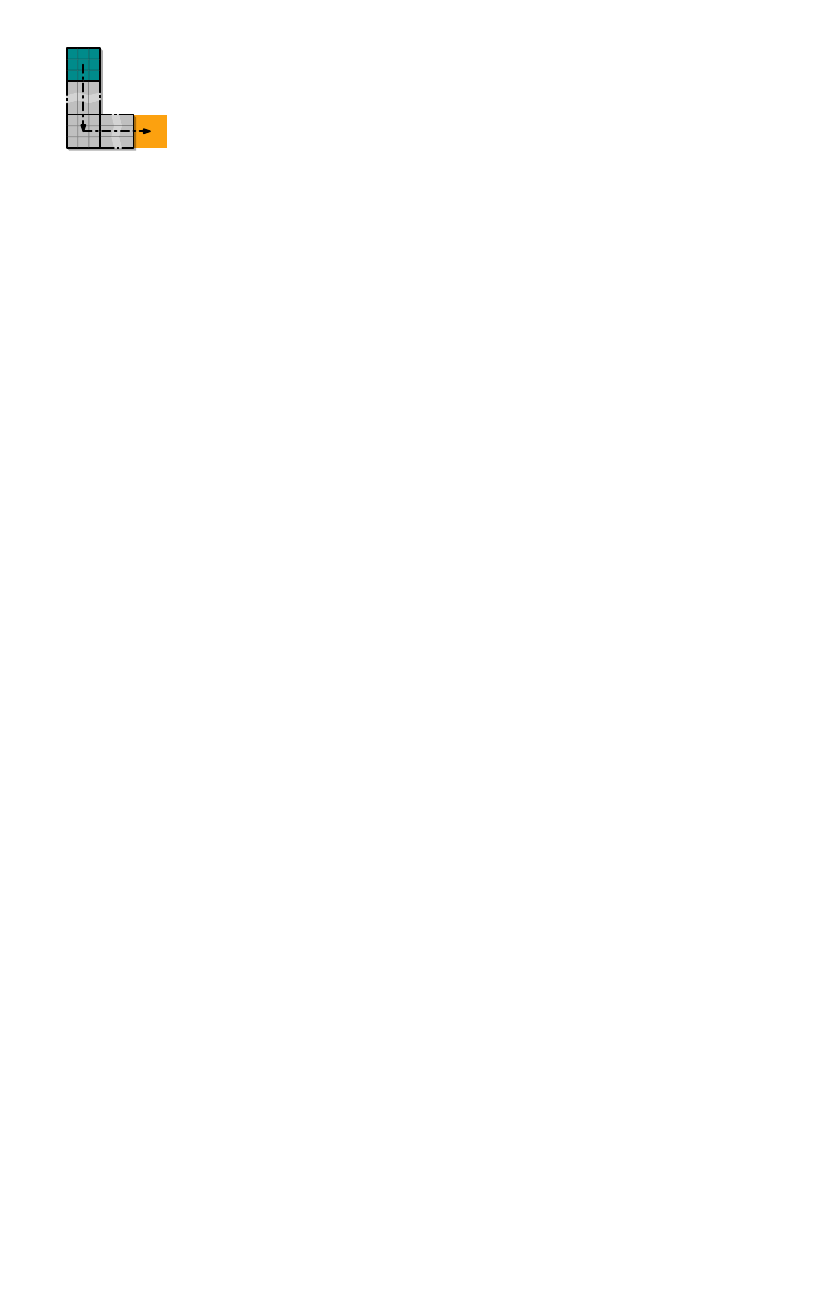}
            \caption{}
            \label{fig:L-shaped-tunneling-a}
        \end{subfigure}
        \hfill%
        \captionsetup[subfigure]{justification=raggedright}%
        \begin{subfigure}[t]{0.7\columnwidth}%
            \transformable{\includegraphics[page=2]{histogram-L-tunnel}}%
            \transforms%
            {\includegraphics[page=3]{histogram-L-tunnel}}%
            \transforms%
            {\includegraphics[page=4]{histogram-L-tunnel}}%
            \ttransforms%
            \includegraphics[page=5]{histogram-L-tunnel}%
            \caption{This pattern realizes the movement from (a) in $18$ transformations.}
        \end{subfigure}
        \caption{We can efficiently shift groups of modules along the slope of $xy$-monotone histograms.}%
        \label{fig:L-shaped-tunneling}
    \end{figure}

    We base our algorithm on two \newterm{bisectors} with south-west slopes, as shown in~\cref{fig:diagonal-bisectors}.
    A bisector $L$ with height $h$ corresponds to the set $\{(x,y)\in\mathbb{Z}^2\mid y=h-x\}$ and partitions the grid into $\east(L)=\{(x,y)\in \mathbb{Z}^2\mid y>h-x\}$, $\west(L)=\{(x,y)\in \mathbb{Z}^2\mid y<h-x\}$, and $L$.
    \begin{figure}[htb]%
        \begin{subfigure}[t]{\columnwidth/6}%
            \centering%
            \includegraphics[page=1]{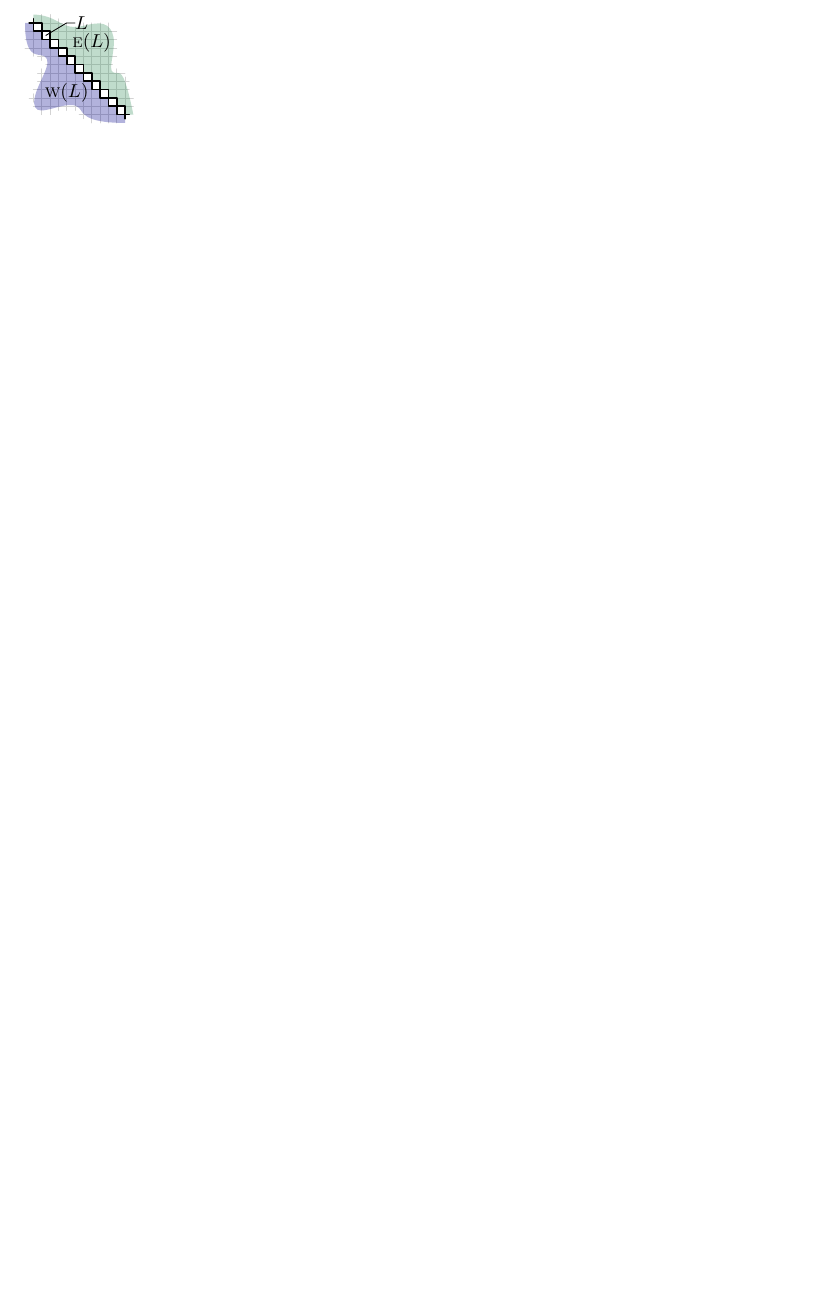}%
            \caption{A~bisector~$L$.}%
            \label{fig:diagonal-bisectors}
        \end{subfigure}%
        \hfill%
        \begin{subfigure}[t]{\columnwidth*5/12 - 1em}%
            \hfil%
            \includegraphics[page=2]{xy-monotone-balancing}%
            \hfil%
            \includegraphics[page=3]{xy-monotone-balancing}%
            \caption{Two $xy$-monotone histograms and the two bisectors $L_1$ and $L_2$.}
            \label{fig:xy-monotone-balancing-bisectors}
        \end{subfigure}%
        \hfill%
        \begin{subfigure}[t]{\columnwidth*5/12 - 1em}%
            \hfil%
            \transformable{\includegraphics[page=4]{xy-monotone-balancing}}%
            \hfil%
            \includegraphics[page=5]{xy-monotone-balancing}%
            \caption{A transformation based on $L_1$ and $L_2$ from (b) that increases the height of $L_1$.}
        \end{subfigure}
        \caption{We reconfigure according to the diagonal bisectors $L_1$ and $L_2$. Each square corresponds to a meta-module in the scaled configuration and transformations occur according to~\cref{fig:L-shaped-tunneling}.}%
        \label{fig:parallel-balancing}
    \end{figure}

    Assume now, without loss of generality, that $C_1\neq C_2$.
    Let $L_1$ be the highest bisector such that for every $m_1\in \west(L_1)$, ${m_1\in C_2\Rightarrow m_1\in C_1}$.
    By definition, there exists at least one cell in $L_1$ that is occupied in $C_2$, but not in $C_1$.
    Further, let $L_2$ be the highest bisector such that there exists an $m_2\in L_2$ with $m_2\in C_1\setminus C_2$.

    Due to~\cite[Claim 4]{fekete.keldenich.kosfeld.ea2023connected-coordinated}, we compute a maximum matching in $(L_1\cap(C_2\setminus C_1))\times(L_2\cap(C_1\setminus C_2))$ that has a plane embedding using ``L''-shaped paths.
    In any such maximum matching, either all unoccupied cells in $L_1$, or all occupied cells in $L_2$ are matched.
    We use the schedules from \cref{fig:L-shaped-tunneling} to move each corresponding meta-module in the scaled configuration to its matched center cell.
    As a result, either (a) every previously unoccupied target cell in $L_1$ is now occupied, or (b) every excess module in $L_2$ has been removed.
    Updating both bisectors afterward, it follows that either $L_1$ has moved north, or $L_2$ has moved south.

    After at most $\max(\nicefrac{P_1}{2},\nicefrac{P_2}{2})$ iterations of this process, either $C_1\subset\east(L_2)$, or $C_1\subset\west(L_1)$.
    In either case, it follows that $C_1 = C_2$.
    Each iteration takes exactly $18$ transformations, so the entire schedule has makespan at most
    ${18\max(\nicefrac{P_1}{2},\nicefrac{P_2}{2})\leq\max(9P_1+9P_2)}$.
\end{proof}
We further note the following property of schedules derived by the above method.
\begin{observation}
    The schedules constructed according to~\cref{lem:xy-monotone-to-xy-monotone} never move a module located at $(0,0)$ if this module is part of both $C_1$ and $C_2$.
\end{observation}

\cref{thm:3-scaled-reconfiguration} now immediately follows from \cref{cor:scale-3-histogram-xy-monotone,lem:xy-monotone-to-xy-monotone}.
    \section{A worst-case optimal algorithm}
\label{sec:algorithm}
In this section, we introduce a three-phase algorithm for efficiently reconfiguring two given configurations $C_1, C_2$ into one another.
Our approach computes reconfiguration schedules linear in the perimeter $P_1$ and $P_2$ of the bounding boxes $B_1$ and $B_2$ of $C_1$ and $C_2$, respectively.

\begin{restatable}{theorem}{theoremAlgorithm}
	\label{thm:algorithm}
	For any instance $\mathcal{I}$ of {\textsc{Parallel Sliding Squares}}, we can compute a weakly in-place schedule of $\mathcal{O}(P_1+P_2)$ transformations in polynomial time.
\end{restatable}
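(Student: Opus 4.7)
The plan is to exploit reversibility (via \cref{prop:collision-type-iv}) and canonicalize both endpoints to a common intermediate shape that lives inside the extended bounding box $B' \supseteq B_1 \cup B_2$. Concretely, I would design a ``canonical form'' $C^*$ (e.g., a histogram flush with the bottom edge of $B'$, or an L-shape along the bottom and right edges) which is reachable from any configuration $C$ with bounding box $B$ in $\BigO(P)$ parallel transformations. Once both $C_1$ and $C_2$ can be brought to canonical forms $C_1^*$ and $C_2^*$, and those two forms can be interconverted in $\BigO(P_1+P_2)$ transformations, the schedule is obtained by executing $C_1 \movesto C_1^*$, then $C_1^* \movesto C_2^*$, then the reverse of $C_2 \movesto C_2^*$. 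The three-wide buffer column guaranteed by the definition of $B'$ is crucial: it serves as a collision-free staging area along which parallel chain slides can be routed.

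The key engine is the observation that a chain of $k$ edge-adjacent modules can slide by one unit in a \emph{single} parallel transformation (the motivating picture in \cref{fig:intro-chain-moves}), so we can translate an entire row or column by $\BigO(1)$ in constant time and across $B'$ in $\BigO(P)$ time. The four phases I envision match this structure: (1) break the arrangement into ``strips'' and create, if needed, temporary connector bridges inside the buffer column; (2) successively evacuate rows/columns of $C_1$ into the canonical $C_1^*$ using parallel chain slides, each batch taking $\BigO(1)$ time and there being $\BigO(P_1)$ batches; (3) reshape $C_1^*$ into $C_2^*$ along the bottom of $B'$, which amounts to sorting a histogram of width $\BigO(P_1+P_2)$ and can be done by a constant number of sweeps along the boundary; and (4) invert phase (2) to assemble $C_2$ from $C_2^*$, costing $\BigO(P_2)$ transformations.

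The main obstacle, and where most of the technical work will go, is guaranteeing that each of these parallel transformations is \emph{legal}: connectivity must be preserved through a connected backbone at every intermediate step, and no collision of types \textbf{\sffamily(i)}--\textbf{\sffamily(iii)} may occur (type \textbf{\sffamily(iv)} can be handled post-hoc by \cref{prop:collision-type-iv} at constant-factor cost). Preserving connectivity while evacuating modules requires a careful order: I would process strips from the ``far'' side of $B'$ toward the canonical form, always ensuring the yet-to-be-moved modules form a connected piece joined to the already-placed canonical piece by a \sk inside the buffer. Whenever a chain slide would sever this \sk, convex transitions at the strip endpoints are used to first seed a replacement attachment point; these corrections are local and add only $\BigO(1)$ to the makespan of each batch. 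Since the construction is deterministic and each transformation can be computed from $C_1$, $C_2$, and the current configuration by simple geometric sweeps, the entire schedule is produced in polynomial time, yielding the claimed $\BigO(P_1+P_2)$ makespan bound.
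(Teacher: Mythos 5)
Your high-level architecture is exactly the paper's: canonicalize $C_1$ and $C_2$ to a common intermediate shape (an $xy$-monotone histogram anchored at the shared corner of the bounding boxes), interconvert the two canonical forms, and replay the second canonicalization in reverse. That part is fine, and the final assembly in \cref{app:algorithm-main-theorem} is essentially your three-segment schedule.

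The gap is in your ``key engine.'' You assert that a row or column can be translated as a parallel chain slide and that connectivity violations can be repaired by ``local'' corrections costing $\BigO(1)$ per batch, seeded by convex transitions at strip endpoints. This is precisely the step that fails for general input, and it is where essentially all of the paper's technical work lives. Recall that a legal transformation requires the moving set $M$ to be \emph{free}, i.e., $C\setminus M'$ must remain connected for \emph{every} subset $M'\subseteq M$. Evacuating a full row or column of a general configuration violates this badly: consider a comb whose teeth each hang from a single spine row, or any configuration in which a column is a cut set separating $\Omega(n)$ components. Removing (or even partially moving) that strip disconnects arbitrarily many pieces simultaneously, and each piece needs its own reattachment --- there is no $\BigO(1)$-per-batch local fix, and no argument in your proposal bounds the total repair cost. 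This is also the known failure mode of the claimed linear-time parallel algorithm of Michail et al.\ that the paper explicitly discusses: naive parallel sliding only works under a forbidden-pattern assumption. The paper circumvents this with a substantial apparatus you do not have: a tree-like \sk of the configuration, the gathering of $\Theta(P)$ modules into an \emph{exoskeleton} whose shell carries connectivity while its core has controlled gaps (the \textsc{Inchworm-Push}/\textsc{Inchworm-Pull} primitives and their connectivity proofs), a T-shaped scaffold spanning the bounding box, and then a \emph{sweep line} of meta-modules that itself serves as the moving connected backbone to which everything west of it stays attached while being compacted into a histogram. Your ``temporary connector bridges in the buffer column'' gesture at this, but without a construction that guarantees a connected backbone at every intermediate transformation and bounds the number of repairs, the $\BigO(P_1+P_2)$ makespan claim does not follow.

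A secondary, smaller issue: even granting a canonical histogram at both ends, interconverting two $xy$-monotone histograms in $\BigO(P_1+P_2)$ parallel steps without leaving $B_1\cup B_2$ is itself nontrivial (the paper needs scaled meta-modules, ``L''-shaped tunneling paths, and a diagonal-line matching argument); ``a constant number of sweeps along the boundary'' is not sufficient justification, since boundary-only routing of $\Theta(n)$ modules can cost $\Theta(n)$ rather than $\Theta(P)$ steps if not pipelined through the interior.
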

\pagebreak%

Our algorithm consists of the three phases depicted in~\cref{fig:algorithm-overview}:
In \phaseref{Phase~(I)}, we identify a subconfiguration used as a ``backbone'' (similar to~\cite{hurtado.molina.ramaswami.ea2015distributed-reconfiguration}), and gather $\Theta(P_1)$ many modules around a piece of this backbone, thereby enhancing its connectivity (as in~\cite{akitaya.demaine.korman.ea2022compacting-squares}).
We use the flexibility of this piece to construct a sweep-line structure out of meta-modules in \phaseref{Phase~(II)}, that is then used in \phaseref{Phase~(III)} to efficiently compact and transform the remaining modules into grid-aligned $3\times 3$ squares, creating a $3$-scaled configuration.
Due to~\cref{thm:3-scaled-reconfiguration}, these can be reconfigured efficiently and will serve as a canonical intermediate configuration.
To~reach the respective target configuration, we then simply apply \phaseref{Phases~(I-III)} in reverse.
Although several of our techniques are inspired by previous work, they differ substantially in our context of parallel reconfiguration and considerable changes were needed.

\begin{figure}[htb]
	\begin{subfigure}[t]{\textwidth/3 - 1em}
		\centering
		\includegraphics[page=1,scale=2]{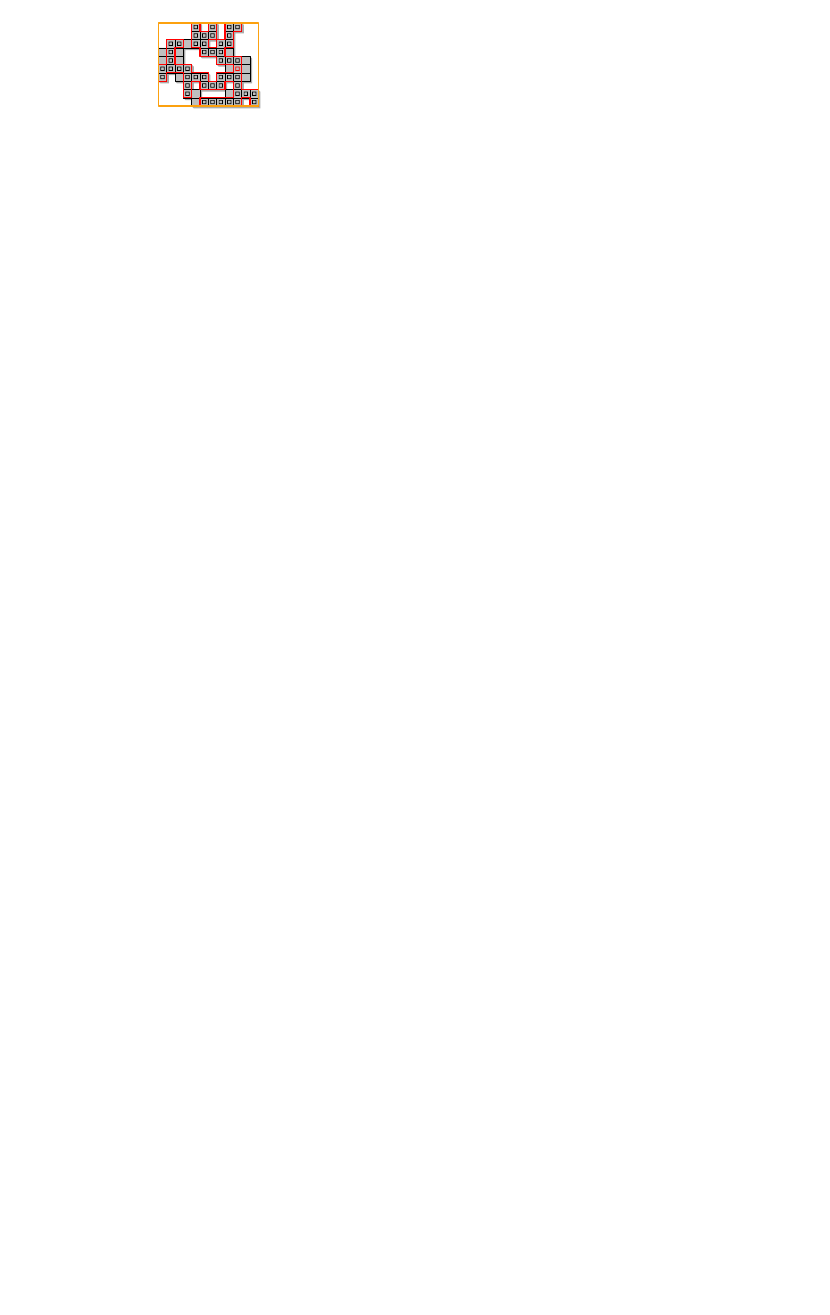}%
		\subcaption{Initial configuration and a rooted skeleton.}
		\label{fig:overview-a}
	\end{subfigure}%
	\hfill%
	\begin{subfigure}[t]{\textwidth/3 - 1em}
		\centering
		\includegraphics[page=2,scale=2]{overview}%
		\subcaption{Constructing an exoskeleton.}
		\label{fig:overview-b}
	\end{subfigure}%
	\hfill%
	\begin{subfigure}[t]{\textwidth/3}
		\centering
		\includegraphics[page=3,scale=2]{overview}%
		\subcaption{Creating a sweep line.}
		\label{fig:overview-c}
	\end{subfigure}%
	\par\medskip%
	\hfil
	\begin{subfigure}[t]{\textwidth/2 - 0.5em}
		\centering
		\includegraphics[page=4,scale=2]{overview}%
		\subcaption{After sweeping into a histogram.}
		\label{fig:overview-d}
	\end{subfigure}%
	\begin{subfigure}[t]{\textwidth/2 - 0.5em}
		\centering
		\includegraphics[page=5,scale=2]{overview}%
		\subcaption{An $xy$-monotone, $3$-scaled configuration.}
		\label{fig:overview-e}
	\end{subfigure}%
	\caption{The high-level overview of our approach. (a) Initial configuration and (b), (c) and (e) are the result of  \phaseref{Phases (I--III)}, respectively. (d) shows an intermediate configuration in \phaseref{Phase~(III)}. The original bounding box is shown in yellow. The coloring of the squares is explained later in the following sections.}
	\label{fig:algorithm-overview}
\end{figure}

\begin{restatable}{lemma}{lemAlgorithm}
	\label{lem:worst-case}
	The bounds achieved in~\cref{thm:algorithm} are asymptotically worst-case optimal.
\end{restatable}
\begin{proof}
	Assume that $n$ is even and consider two configurations $C_1, C_2$ contained in an $\nicefrac{n}{2}\times \nicefrac{n}{2}$ bounding box $B$.
	Define~$C_1$ to contain the $n - 1$ modules adjacent to the left and bottom edges of $B$ and a module at~$(1,1)$.
	Define $C_2$ to contain the $n - 1$ modules adjacent to the top and right edges of $B$ and a module at $(\nicefrac{n}{2}-1,\,\nicefrac{n}{2}-1)$.
	Then, the minimum bottleneck matching between full cells in $C_1$ and $C_2$ has bottleneck $\Omega(n)$, implying that a module must make  $\Omega(n)$ moves.
	The makespan is then $\Omega(n)=\Omega(P)$, where $P$ is the perimeter of $B$.
\end{proof}
    \subsection{Phase~(I): Gathering squares}
\label{subsec:gather}
In \phaseref{Phases~(I)} and \phaseref{(II)}, we use an underlying connected substructure (called \newterm{skeleton}) of the initial configuration to guide the reconfiguration. 
Intuitively (formal definitions below), this tree-like skeleton functions as a backbone around which we move modules toward a ``root'' module, making a subtree ``thick'' (filling cells in the neighborhood of the skeleton).
This ``thick'' subskeleton is more manipulable, making it easier to mold it into a sweep line in~\phaseref{Phase~(II)}.
The authors of~\cite{hurtado.molina.ramaswami.ea2015distributed-reconfiguration} use a similar strategy of moving modules around a tree.
This type of approach becomes complicated when the tree creates bottlenecks where collisions might happen. 
In~\cite{hurtado.molina.ramaswami.ea2015distributed-reconfiguration} this is solved by strengthening the model and allowing modules to ``squeeze through'' bottlenecks.
We achieve a stronger result in the classic (unmodified) sliding model via careful definition of the tree-like structure and movement~schedules.

\subparagraph*{Skeleton.}
A \newterm{skeleton} of a configuration $C$ is a valid subconfiguration $S$ such that
$C\subseteq N[S]$ (every module of $C$ is either contained in $S$ or edge-adjacent to a module of~$S$); and cycles in the dual graph of $S$ are pairwise disjoint with length at most 4.
We~call a module in $S$ (resp., not in $S$) a skeleton (resp., nonskeleton) module.
In~\cref{fig:overview-a}, skeleton modules are highlighted with an internal square and the perimeter of the skeleton is shown in red.
Note that any set of nonskeleton modules is free.

On a high level, we can compute a skeleton $S$ of a given configuration $C$ as follows.
Think of $S$ as a subset of $C$, initialized as the empty set.
Modules of $C$ are then algorithmically added to $S$ until it satisfies the definition of a skeleton.
First, all modules with even \mbox{$x$-coordinate} are added to $S$, followed by all modules with odd $x$-coordinate with no east and west neighbors.
Next modules are added to $S$ until it is a connected subconfiguration of $C$ (see magenta squares in~\cref{fig:overview-a}).
This might introduce large cycles (greater than length~4) to~$S$.
These~cycles are broken via removal of modules from $S$ or exchanging membership in $S$ between adjacent modules.
We~show that careful execution of these steps will build $S$ into a skeleton of $C$.

A concrete discussion of the construction of a skeleton now follows.
Given a collection of modules~$M$. Let $M[i]$ be the subset of~$M$ of all modules with $x$ coordinate $i$, likewise let~$M[i,j]$, where $j > i$, be the set of all modules with $x$ coordinate in the interval~$[i,j]$.
We~can compute a skeleton~$S$ of a configuration as follows in \Cref{alg:skeleton}.

\begin{algorithm}[ht]
	\caption{Given a configuration $C$, compute a skeleton $S$ of $C$.}
	\begin{algorithmic}
		\STATE $i \leftarrow 0$
		\STATE $S \leftarrow \{\emptyset\}$
		\STATE $V \leftarrow \{\emptyset\}$
		\WHILE{$C[i] \neq \{\emptyset\}$} \label{ln:skeleton1}
		\STATE Add each maximally connected component of $C[i]$ to $V$
		\STATE $i \leftarrow i + 2$
		\ENDWHILE
		\STATE $V \leftarrow V \cup \{ C \setminus V \text{ and } V \text{'s 1-neighborhood}\}$ \label{ln:skeleton2}
		\STATE $S \leftarrow V$
		\STATE $i \leftarrow 0$
		\WHILE { $V[i] \neq \{\emptyset\}$}
		\STATE Create a graph $G$ with a vertex for each component in $V[i, i+2]$.
		\FOR {Each pair of vertices $v_1, v_2 \in G$}
		\IF { $v_1$ $v_2$ are connected by a path in  $C[i] \cup C[i+1] \cup C[i+2]$ containing no other element of $V[i,i+1,i+2]$}
		\STATE Add an edge $\{v_1, v_2\}$ to $G$ with a weight equal to the number of modules in the shortest such path.
		\ENDIF
		\ENDFOR
		\STATE Compute a minimum spanning forest $F$ of $G$
		\STATE For each edge in $E(F)$ add the corresponding modules of $C$ to $S$.
		\WHILE{$S$ has a cycle $L$ of length greater than $4$} \label{ln:cycle-break}
		\STATE Use~\cref{lem:cycle-break} to break $L$
		\ENDWHILE
		\STATE $i \leftarrow i + 2$
		\ENDWHILE
		\WHILE{$S$ has $4$-cycles that are not disjoint}
		\STATE Apply~\cref{lem:disjoint-cycles} to remove one of the cycles.
		\ENDWHILE
		\RETURN{$S$}
	\end{algorithmic}
	\label{alg:skeleton}
\end{algorithm}

We now prove the correctness of \cref{alg:skeleton} via a sequence of lemmas.

\begin{lemma}
	\label{lem:cycle-break}
	During \cref{alg:skeleton} at Line \ref{ln:cycle-break} if the dual graph of $S$ has a cycle $L$ of length greater than 4, then $S$ can be modified so that every module of $C$ is still a member of $S$ or adjacent to $S$ and $L$ is broken.
\end{lemma}
\begin{proof}
	First, note if $G$ has a 4-cycle $f$ and at least 2 modules of this cycle are also elements of $L$, then one element of $f$ can be removed from $S$.
	At least two modules of $f$, say $a,b$ are connected by $L$.
	We may be able to remove one of them from $S$ breaking $L$.
	However if $a$ or~$b$ have a neighboring module that is an element of $S$ but not of $f$ or $L$,
	they might not be removable without disconnecting $S$.
	However, if $a$ and $b$ have this many neighbors one of the other two modules of $f$ is safe to remove.
	This is because any of their potential neighbors are either elements of $f$ or adjacent to a neighbor of $a$ or $b$.
	Therefore we assume for any two east-west adjacent modules of $L$, they do not both simultaneously have a south neighbor that is an element of $S$,
	or likewise, neither has a north neighbor which is an element of $S$.
	The existence of these simultaneous neighbors would form a $4$-cycle.

	We now prove that, in the remaining cases, there is always a module that can be removed from $L$,
	via case analysis.
	Suppose there is a module $r \in L$ that has a north neighbor $r_n$ that is not an element of~$S$.
	Then, by the construction of $S$ in~\cref{alg:skeleton}, either $r_n$ has a west or east neighbor that is an element of~$S$ and so $r_n$ was not added to $V$ initially.
	Or~$r_n$ was an element of~$S$ but was later removed from $S$ to break a cycle.
	However, we will maintain the invariant that when we remove a module from $S$ to break a cycle it always has a north, east, or west neighbor that is an element of~$S$.
	Therefore, if $r$ is removed from $S$, $r_n$ will still be adjacent to an element of $S$ through a north, east, or west neighbor.
	Hence, if there is a module $r \in L$ such that $r$ has no south neighbor, and if it has a north neighbor $r_n$, and $r_n$ is not an element of~$S$, we can remove $r$ from $S$, breaking $L$.

	Suppose every module of $L$ has a north or south neighbor.
	Observe that this is possible, as shown in~\cref{fig:nscycle}.
	As $L$ is a cycle in a geometric grid graph, there must be a sequence of three modules $a,b,c \in L$ that form a ``corner'', an
	L-shaped arrangement where $a$ and $b$ are north/south neighbors and $b,c$ are east/west neighbors.
	Assume $b$ is the north neighbor of $a$ and the east neighbor of $c$.
	That is, $a,b,c$ form a ``top-left'' corner of $L$, with $b$ in the middle.
	If $b$ has no neighbors other than $a$ and $c$, then we can remove it from $C$.

	Therefore suppose $b$ has a north neighbor $b_n$. If $b_n \in S$, then the only neighbor of $c$ that can be an element of $S$
	is its east neighbor, as otherwise a 4-cycle would be formed by $b_n, b, c, c_n$ or $b,c,c_s,a$
	(where $c_n$ and $c_s$ are $c$'s north and south neighbors, respectively).
	Therefore, if $b_n \in S$, we can remove $c$ from $L$ as its east neighbor must be an element of $L$, and if it does
	have north or south neighbors, they are adjacent to $b_n$ or $a$.
	Similarly if $b$ has a west neighbor, $b_w$, which is an element of $S$, $a$ can be removed from $S$.
	By the arguments made at the start of this proof, if $b_n \notin S$ then $b_n$ has a west or east neighbor that is an element of $S$.
	So if $b$ is removed from $S$, $b_n$ will still be adjacent to a module of $S$.
	As a consequence if $b$ has no west neighbor $b_w$ and $b_n \notin S$ we can remove $b$ from $S$.

	Finally, consider the case that $b$ has a west neighbor $b_w$. As argued earlier, if $b_w \in S$ then $a$ can be removed
	from $S$. So suppose $b$ has a west neighbor $b_w \notin S$. We now have two cases:
	\begin{description}
		\item[$b_w$ has a neighbor $x \neq b$, where $x \in S$:] We can remove $b$ from $S$.
		\item[$a$ has a west neighbor $a_w$:] If $a_w \in S$, then $b$ can be removed from $S$ as $b_w$ is adjacent to $a_w$.
		If $a_w \notin S$ but has a west neighbor that is an element of $S$, we can add $a_w$ to $S$ while removing $a$ and $b$ from $S$.
		We remove $a$ and $b$ instead of just $b$ as adding $a_w$ could form a new cycle with its west neighbor and $a$. If $a_w$
		does not have a west neighbor that is an element of $S$ then we can add $a_w$ to $s$ and remove $b$ from $S$.  \qedhere
	\end{description}
\end{proof}

\begin{figure}[htb]
	\begin{minipage}[t]{0.5\columnwidth - 0.5em}%
		\centering%
		\includegraphics[page=1]{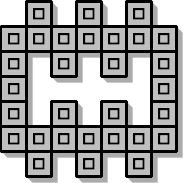}
		\caption{A large cycle where every module has a north/south neighbor in $S$.}
		\label{fig:nscycle}
	\end{minipage}%
	\hfill%
	\begin{minipage}[t]{0.5\columnwidth - 0.5em}%
		\centering%
		\includegraphics[page=2]{skeleton-cycles}%
		\caption{Two intersecting $4$-cycles. The top left cycle is labeled as in the proof of \cref{lem:disjoint-cycles}.}
		\label{fig:disjoint4cycle}
	\end{minipage}
\end{figure}
\begin{lemma}
	\label{lem:disjoint-cycles}
	If the dual graph of a skeleton $S$ has two intersecting $4$-cycles then one of the cycles can be broken.
\end{lemma}
\begin{proof}
	Let $c_1$ and $c_2$ be two $4$-cycles of the dual graph of $S$ such that $c_1 \cap c_2 \neq \{\emptyset\}$.
	We~assume the intersection of $c_1$ and $c_2$ is exactly one module $x$. It is trivial that in a grid graph, the intersection of two
	distinct $4$ cycles can only be 1 or 2 elements else $c_1$ and $c_2$ are the same cycle.
	If the intersection of $c_1$ and $c_2$ is precisely two elements, then $c_1 \cup c_2$
	would form a cycle of length $6$, in this case we could apply~\cref{lem:cycle-break}.

	Without loss of generality assume that the shared module, $x$, is the bottom right module of $c_1$ and the top left module of $c_2$.
	Let $\chi$ represent the union of these two cycles, $\{c_1 \cup c_2\}$.

	First, we handle a ``base case''. Suppose for every module $m \in C$ the only neighbors of~$m$ that are elements of $S$ are
	also elements of $\chi$. Note, that this implies that the entirety of the skeleton is $\chi$ as if there were some
	module of $S$ that is not part of these cycles, by assumption, it is not adjacent to any module of $\chi$, and $S$ would be disconnected.
	In this case, if there is a module of $\chi$ other than $x$ with no neighbor outside of $S$, we can remove it from $S$ and break one of the cycles.
	Otherwise every module besides $x$ has a neighbor not in~$S$, an example of one such configuration is shown in \cref{fig:disjoint4cycle}.
	In this case, any module of $\chi$ can be removed from $S$ and be replaced with a vertex adjacent module not in $S$.

	Now, we assume some module of $\chi$ has a neighbor $y$ where $y \in S$ but $y \notin \chi$.
	Let the top right module of $c_1$ be $r$, the bottom left be $l$ and the top left module be $m$.
	Without loss of generality assume $y$ is adjacent to some module of $c_1$, some short casework follows.

	Suppose $y$ is adjacent to $m$.  If $r$ has a north neighbor it must be the east neighbor of $y$, and if $r$ has
	an east neighbor, it is adjacent to an element of $c_2$, by our assumed arrangement. Therefore we can remove $r$ from $S$.

	Suppose there is no module of $S$ adjacent to $m$. If $y$ is adjacent to $r$ then we then try to remove $m$ from $S$.
	$m$ can have two neighbors outside of $c_1$. A north neighbor and a west neighbor $c_w$. $m$ having more neighbors
	only makes removing $m$ while maintaining skeleton properties harder, so assume $m$ has both neighbors.
	$y$ must be the north neighbor of $r$, otherwise it would form at least a 6 cycle with $c_2$.
	Therefore $c_n$ is adjacent to $y$. So of $m$'s neighbors, removing $m$ from $S$ could only at most result in $c_w$ being the only vertex not adjacent to any vertex in $S$.
	If $l$ has a west neighbor $l_w$ and $l_w \in S$, then we can safely remove $m$ from $S$.
	If $l_w \notin S$, then we can add $l_w$ to $S$ and remove $m$. It may be possible that adding $l_w$ to $S$ creates
	a large cycle in $S$. However, this cycle would necessarily use $x$ and one other module of $c_2$, and as argued in the
	proof of~\cref{lem:cycle-break} if a large cycle involves at least two modules of a 4-cycle, that 4-cycle can be broken.
	So, we add $l_w$ and remove $m$, and if a large cycle is formed, we remove a module of $c_2$ from $S$.

	If $l$ has no west neighbor then then we remove $l$ instead of $m$. If $l$ has a south neighbor,
	that module must be adjacent to a module of $c_2$ and is therefore adjacent to another module of $S$, other than $l$.
\end{proof}

\begin{lemma}
	\label{lem:valid-skeleton}
	The subconfiguration $S$ created by~\cref{alg:skeleton} is a skeleton of $C$.
\end{lemma}
\begin{proof}
	There are three things we need to prove about $S$.
	One, it is a valid subconfiguration of $C$.
	Two, every module of $C$ is an element of $S$ or adjacent to $S$.
	Three, $S$ has no cycle larger than a $4$-cycle, and all $4$-cycles are disjoint.

	\begin{description}
		\item[$S$ is a valid subconfiguration of $C$:] First, ignore the modifications made to $S$ in the while loop on line~\ref{ln:cycle-break}.
		We construct $S$ by taking three sequential subsets of $V$, $V[i]$, $V[i+1]$, and $V[i+2]$ and connecting them via a spanning forest.
		We then connect $V[i+2]$, $V[i+3]$, and $V[i+4]$. Due to the overlap of these $V$'s by the end of our iteration we have connected every
		element of $V$, as long as $C$ is a connected configuration. So $S$ is a connected configuration.
		The modifications on line~\ref{ln:cycle-break} cannot disconnect $S$, as they take a cycle $L$ of $S$, and remove exactly one element of $L$ from $S$.

		\medskip
		\item[Every element of $C$ is an element of $S$ or adjacent to an element of $S$:] By our formulation of $V$, initially
		every element of $C$ is either an element of $V$ or adjacent to $V$. So by initializing $S$ to $V$, $S$ immediately meets
		this condition. Hence, the only way to break this condition is in the while loop on line~\ref{ln:cycle-break} where
		modules are removed from $S$. However, by~\cref{lem:cycle-break} these modifications maintain the desired adjacencies.

		\medskip
		\item[$S$ has no cycle larger than a $4$-cycle and all $4$-cycles are disjoint:]
		We remark that it may be impossible to remove all $4$-cycles.
		For example, near the top of~\cref{fig:overview-a} there is an example of a $4$-cycle
		must be included in $S$ or else some module of $C$ will not be adjacent to $S$.
		The fact that there are no large cycles in $S$ and all $4$-cycles are disjoint follows from~\cref{lem:cycle-break,lem:disjoint-cycles}.
		\cref{alg:skeleton} uses these lemmas inside while loops to remove any unwanted cycles until no more exist. \qedhere
	\end{description}
\end{proof}

Consider the dual graph is a skeleton $S$.
Contracting every cycle in the dual of $S$ to a single vertex renders a max-degree-4 tree where each node is either a cell or a 4-cycle.
We refer to the nodes of this tree as \newterm{nodes of $S$} and abuse notation by referring to $S$ as a tree.
Let $r$ be the \newterm{root} node of $S$, arbitrarily chosen.
For every nonskeleton module, we assign an adjacent module of $S$ as its \newterm{support}.
We define the \newterm{subskeleton rooted at~$c$} (denoted~$S_c$) as the subconfiguration containing $c$ and its descendant nodes.
The set $S_c^*$ contains the modules in $S_c$ and the nonskeleton modules supported by modules in $S_c$.
The \newterm{weight} of a subskeleton $S_c$ is defined as $|S_c^*|$.
By definition, $S_r^*=C$ and $|S_r^*|=n$.

\begin{restatable}{lemma}{lemHeavyNode}
    \label{lem:heavy-node}
    Given a rooted skeleton $S$ and an integer $1<w\le n$, there exists a node $c$ of~$S$ such that $w\le |S_c^*|\le 3w$.
\end{restatable}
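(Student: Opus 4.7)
The plan is to find $c$ by a top-down descent on the rooted contracted skeleton tree. I initialize $c \leftarrow r$, where $|S_r^*| = n \ge w$ by hypothesis, and while some child $c'$ of $c$ satisfies $|S_{c'}^*| \ge w$, I descend to that child. Since subskeleton weights strictly decrease along any descending path and the tree is finite, the procedure terminates at a node that I continue to call $c$. The invariant $|S_c^*| \ge w$ is maintained throughout, so the lower bound is immediate.

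For the upper bound, at termination every child $c_i$ of $c$ has $|S_{c_i}^*| \le w - 1$. Writing
\[
|S_c^*| \;=\; \beta_c + \sum_{i=1}^{k} |S_{c_i}^*| \;\le\; \beta_c + k\,(w-1),
\]
where $k$ is the number of children of $c$ and $\beta_c$ counts the modules of $c$ itself together with the nonskeleton modules directly supported by them, I bound $\beta_c + k$ using the local grid geometry: a single-cell node has at most $4$ grid-neighbor slots, and a $2\times 2$ block node has at most $8$ external grid-neighbor slots. Each tree-edge incident to $c$ consumes one such slot (since tree-neighbors are themselves skeleton modules in adjacent cells), so $\beta_c$ shrinks as the tree-degree of $c$ grows. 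A short case analysis on whether $c$ is a cell or a 4-cycle and on whether $c$ is the root (where $k \le 4$) or an internal node (where $k \le 3$, using the max-tree-degree of $4$) then yields $|S_c^*| \le 3w$. If the descent terminates at $r$ itself, the stopping condition combined with $|S_r^*| = n$ gives $w \le n \le 3w$ directly.

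The hard part will be squeezing the upper-bound constant down to exactly $3$. The descent argument alone trivially delivers $|S_c^*| = \BigO(w)$, and asymptotically this is all the algorithm needs; however, the precise factor $3$ relies on the geometric trade-off $\beta_c + k \le \BigO(1)$ together with the max-degree-$4$ restriction of the contracted tree, and the arithmetic is tightest when $c$ is a 4-cycle of maximum tree-degree, so I would spend the bulk of the proof on that case. A secondary subtlety is the root, whose tree-degree can be one larger than that of an internal node; but since descent never increases weight, the only way descent can stop at the root is when $n \le 3w$ to begin with, which closes the argument without further work.
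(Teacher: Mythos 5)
Your descent is the same argument the paper uses (phrased there as an induction on $n$), but two of the steps you wave at do not close as written. First, the root. You claim that if the descent stops at $r$ then $n\le 3w$ ``without further work'', but that does not follow from ``descent never increases weight'': the descent stops at $r$ precisely when every child of $r$ is lighter than $w$, and with $k=4$ children a single-cell root already gives $|S_r^*|\le 1+4(w-1)=4w-3$, which exceeds $3w$ for $w>3$. The paper avoids this by first re-rooting $S$ at a leaf of the contracted tree, so that every node met during the recursion has at most three children; you need that (or an equivalent) re-rooting step.

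Second, the 4-cycle case you defer is not merely where the arithmetic is ``tightest'' --- it is where the constant $3$ actually fails under your own counting. A $2\times 2$ block node has eight external edge-neighbour slots; with three children and a parent, four slots remain for nonskeleton modules assigned to it as support, so $\beta_c\le 4+4=8$ and your bound reads $|S_c^*|\le 8+3(w-1)=3w+5$. No amount of care in that case analysis recovers $3w$. (To be fair, the paper's one-line bound of $3w-2$ only accounts for single-cell nodes of degree four and silently skips this case; for the application only $|S_c^*|=\BigO(w)$ is needed, so nothing downstream breaks.) You should either settle for $3w+\BigO(1)$ (equivalently a slightly larger multiplicative constant), or add an argument restricting how many nonskeleton modules a 4-cycle node can be assigned as support.
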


\begin{proof}
	Let $r$ be the root of $S$. We can assume that $r$ is initially a leaf of $S$, so that when we recurse on subtrees the maximum number of children of a node is 3.
	By induction on $n$.
	\begin{description}
		\item[Base case:] When $1/3n\le w$, $S_r$ satisfies the requirements of the lemma.
		\item[Inductive step:] If a child $c$ of $r$ satisfies the requirements, we are done.
		If $|S_c^*|< w$ for every child $c$ of $r$,
		then  $|S_r^*|\le 3w-2$ since the maximum number of children is 3, thus we are in the base case.
		Else, $|S_c^*|> 3w$ for some child $c$ of $r$ and, by inductive hypothesis, the claim is true for a subskeleton of $S_c$.\qedhere
	\end{description}
\end{proof}

Using this, we locate a module $h \in S$ (highlighted with a red marker in~\cref{fig:overview-a,fig:overview-b}) such that $|S_h^*|\in \Theta(P)$.
Afterwards, we ``thicken'' $S_h^*$ into a structure called the~\newterm{exoskeleton}.

\subparagraph*{Exoskeleton.}
An exoskeleton is made from three types of modules: \newterm{core}, \newterm{shell}, and \newterm{tail} modules; depicted respectively in dark gray, purple and pink in the figures.
Recall that we wish to make the neighborhood of $S_h$ is full.
The core modules occupy the positions originally occupied by the skeleton $S_h$, the shell modules are the ones ``coating'' the skeleton, and the tail modules are ``leftover'' modules in the neighborhood of a leaf.
We recursively thicken $S_h$ by applying this process to its smaller subtrees, effectively converting tail or core modules at the leaves into a shell module near the root.
If the core modules cover the entire subtree of $S_h$, it may take a linear number of transformations to move a module from a leaf to the root.
Instead, we allow empty positions inside the exoskeleton which permits us to ``teleport'' a module from a leaf to the root in $\mathcal{O}(1)$ transformations.
Although some of the positions are empty, the structure remains connected due to its shell and the fact that these empty positions are well separated (in tree metric along $S_h$, see Property~\ref{itm:core-4} below).

Formally, we define an \newterm{exoskeleton} $X_c$ as follows.
The \newterm{core} $\overline{X_c}$ of $X_c$ are positions in the lattice (not necessarily full) that form a subtree of $S_c$ containing~$c$ (since we recursively ``shave off'' leaves to make material for the shell).
Note that $\overline{X_c}$, unlike $S_c$, is not a subgraph of the configuration's dual graph because of the empty positions.
In our figures, we highlight core cells with an internal circle.
Let $\mathcal{L}$ be the set of positions corresponding to the leaves of~$\overline{X_c}$.
We call $N^*(\overline{X_c}\setminus\mathcal{L})\setminus\mathcal{L}$ the \newterm{shell} of $X_c$ (recall $N^*(S)$ is the open neighborhood under vertex-adjacency).
The following must hold:
\begin{enumerate}
    \item \label{itm:core-1} All modules in $X_c$ are in the neighborhood of its core. ($|\overline{X_c}|\ge 2$, and $X_c\subset N^*\left[\overline{X_c}\right]$.)
    \item \label{itm:core-2} The leaves are occupied. ($\mathcal{L}\subset X_c$.)
    \item \label{itm:core-3} All cells in the shell are occupied. ($N^*(\overline{X_c}\setminus\mathcal{L})\setminus\mathcal{L}\subset X_c$.)
    \item \label{itm:core-4} The depth (in $\overline{X_c}$) of every empty cell is congruent to $k$ (mod 4) for a fixed $k\in\{0,1,2,3\}$.
\end{enumerate}

\begin{figure}[ht]
    \captionsetup[subfigure]{justification=centering}%
    \begin{subfigure}[t]{0.5\columnwidth}%
        \centering%
        \includegraphics[page=1]{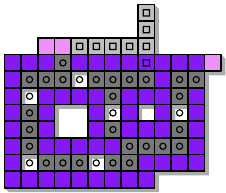}%
        \subcaption{}%
    \end{subfigure}%
    \begin{subfigure}[t]{0.5\columnwidth}%
        \centering%
        \includegraphics[page=2]{exoskeleton-ipe}%
        \subcaption{}%
    \end{subfigure}%
    \caption{Visualizing an exoskeleton. (a) Skeleton and core cells are highlighted with an inner square and circle, respectively. (b) Highlights the core; the root is highlighted with a red circle.}
    \label{fig:exoskeleton}
\end{figure}

The modules that are in $N^*(\ell)$, for a leaf $\ell \in\mathcal{L}$, and not in the shell or core of $X_c$ are called the \newterm{tail} modules of $\ell$.
By definition, a module in~$X_c$ is either in the shell, in the core, or is a tail module.
Shell modules protect the connectivity of $X_c$, allowing us to place empty positions in the core wherever necessary to expand the exoskeleton constant time.

We show that $S_h$ can be reconfigured into an exoskeleton $X_h$ in $\mathcal{O}(P)$ transformations, thereby inductively establishing~\cref{lem:gather}.
To this end, we first prove~\cref{lem:base-case1}, demonstrating how small skeletons (lighter than 9) are turned into exoskeletons.
In doing so, we obtain~\cref{cor:base-case}, which settles the base case of our induction.
\Cref{lem:base-case1} will also be pivotal in the inductive step.
All that this lemma does is take a subskeleton of weight $\leq 9$ and transform it into a $3\times 3$ square, a minimal instance of an exoskeleton $(|X_c| = 2)$.

At first glance, \cref{lem:base-case1} (below) might seem overly complicated, since without any obstacle, modules can freely move along the perimeter of $S$ (see \cref{fig:base-case-small-a}).
However, nonlocal pieces of the skeleton (connected subsets $S_1$ and $S_2$ of $S$ are \newterm{local} if $S \cap N^*(S_1\cup S_2)$ is connected) previously converted to exoskeletons can interfere when processing later subskeletons.
\Cref{fig:base-case-small-b,fig:base-case-small-c} show a configuration before and after two nonlocal pieces of the skeleton were converted into exoskeletons.
Modules in the working subskeleton~$S_c$ might be part of the shell of another exoskeleton, making their movement dangerous (potentially disconnecting the nonlocal exoskeletons).
We try to leave such modules in their place, changing their membership from $S_c^*$ to the shell of the exoskeleton.

\begin{restatable}{lemma}{lemBaseCase}
    \label{lem:base-case1}
    Let $C$ be a configuration with skeleton $S$ in which potentially some of its subskeletons were reconfigured into exoskeletons.
    Let $c$ be a skeleton node, with parent node~$d$, such that $S_c$ is not yet part of an exoskeleton and $|S_c^*|\le 9$.
    Let $M$ be the set of modules in~$S_c^*$ that are not contained in exoskeletons.
    Then, in $\mathcal{O}(1)$ transformations, $M$ can be reconfigured so~that:
    \begin{itemize}
        \item if $\deg(d)=2$, either $N^*[d]$ is full or $M\cap(N^*[c]\setminus N^*[d])$ is empty (i.e., the modules in~$M$ are all contained in the neighborhood of $d$ if this region is not full);
        \item if $\deg(d)>2$, either $N^*[d]\cap N^*[c]$ is full or $M\cap(N^*[c]\setminus N^*[d])$ is empty (i.e., the modules in $M$ are all contained in the intersections of the neighborhood of $d$ and $c$ if this region is not full).
    \end{itemize}
\end{restatable}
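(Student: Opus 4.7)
The plan is to exploit the fact that $|M| \le 9$ is constant, so only $\BigO(1)$ individual moves are required; the real work is in executing them legally in the presence of the nonlocal exoskeletons that may share cells with $S_c^*$. First, I would classify each module $m \in M$ by whether its cell also belongs to the shell of some already-built exoskeleton $X_{c'}$ at a nonlocal position. Such a module must be frozen in place, because moving it would disconnect $X_{c'}$; following the idea outlined in the paragraph preceding the lemma statement, I simply relabel its membership from $S_c^*$ to that shell. The remaining ``genuinely free'' subset $M'$ can then be manipulated within the constant-size window $N^*[S_c \cup \{d\}]$ without external connectivity concerns.

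Second, I would perform a finite case analysis on the local patterns around $c$, $d$, and the possibly incident frozen shell modules, following the taxonomy suggested by Figures~\ref{fig:base-case-small}(b--e). In each pattern I would produce an explicit schedule of $\BigO(1)$ transformations that funnels $M'$ toward $d$ along the perimeter of $S_c$, using chain moves in the style of~\cref{fig:intro-chain-moves} to resolve bottlenecks. Connectivity is preserved throughout because, at every intermediate step, either the shrinking remainder of $S_c$ or the portion of $M'$ already clustered near $d$ provides a valid backbone linking the rest of the configuration. The two disjuncts then follow from a simple counting argument: the target region $R$---equal to $N^*[d]$ when $\deg(d) = 2$, and to $N^*[d] \cap N^*[c]$ when $\deg(d) > 2$---has at most $9$ cells, so either $|M'| \ge |R|$ and $R$ can be packed to obtain the ``full'' disjunct, or $|M'| < |R|$ and all of $M'$ fits strictly inside $R$, leaving $N^*[c] \setminus N^*[d]$ empty as required.

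The main obstacle I anticipate is verifying that the chain moves used to thread $M'$ past the frozen shell cells never violate collision conditions~\textbf{\sffamily(\ref{itm:collision-i})}--\textbf{\sffamily(\ref{itm:collision-iv})}, and never remove a cell that is silently contributing to the connectedness of some neighboring exoskeleton. I would address this by choosing routing directions that exploit property~\ref{x:4} of the exoskeleton definition: the depth-$4k$ pattern of empty cells inside each nonlocal core guarantees a free adjacent corridor through which $M'$ can be rerouted whenever a direct slide toward $d$ would abut a frozen shell module. Because the ambient geometry is of constant complexity---at most $\BigO(1)$ cells in $N^*[S_c \cup \{d\}]$ and at most $|M| \le 9$ modules to place---a finite enumeration of patterns (essentially those of Figures~\ref{fig:base-case-small}(b--e) together with the unobstructed ``perimeter-only'' case) completes the verification.
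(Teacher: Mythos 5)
Your high-level plan (exploit the constant bound $|S_c^*|\le 9$ to reduce to a finite local case analysis, relabel modules that already belong to exoskeletons, and route the remaining modules toward the parent) is aligned with the paper's strategy, but it diverges on the key technical point and would get stuck where the paper's proof actually requires work.

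The gap is in your assumption that any module lying in the shell of a nonlocal exoskeleton can simply be ``frozen'' in place and never moved. The paper explicitly argues the opposite: moving skeleton modules of $S_{c'}$ that are part of a nonlocal shell ``might be unavoidable'' (see the discussion around Figure~\ref{fig:base-case-small-c}), and the heart of the proof is establishing that such a move is \emph{safe}. The argument is structural: by the skeleton definition, the nonlocal core can only intersect $N^*[d']\cup N^*[c']$ at a degree-$2$ bend, and then either that core cell is occupied (so the shell module is redundant for connectivity) or it is empty and Properties~\ref{x:3}--\ref{x:4} force its entire vertex-neighborhood to be full, again making the shell module redundant. Your proposal uses Property~\ref{x:4} in the opposite direction, to claim a ``free corridor'' around the shell; that inference is not justified, and if the shell module sits on the unique path toward $d$ your freeze policy simply blocks progress. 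A second issue: your final counting argument (either $|M'|\ge|R|$ pack $R$, or $|M'|<|R|$ fits inside $R$) conflates the lemma's disjunction with a cardinality comparison. The ``$N^*[d]$ is full'' disjunct is satisfied partly by modules not in $M$ (e.g., those already reassigned to exoskeletons), and conversely nothing about $|M'|<|R|$ guarantees that the moves to clear $N^*[c]\setminus N^*[d]$ of $M$-modules can be scheduled legally; that feasibility is exactly what the case analysis must establish, and you only assert it. Finally, the paper's proof is an induction on the deepest node $c'$ of $S_c$ not yet satisfying the conditions, with a special branch for $S_{c'}^*\not\subseteq N^*[c']$ that actually builds a $3\times 3$ exoskeleton seed; your flat enumeration does not anticipate this branch, which is where the constructive output of the lemma (an exoskeleton, not just a compacted blob) comes from.
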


\begin{proof}
	We show this by induction. The base case is when the conditions are already true.
	We update $S_c$ by reassigning the modules $S_c^*$ contained in other exoskeletons to their respective exoskeletons.
	If all modules of $M$ are in $N^*[c]$, we can delete $c$ from the skeleton, since all such modules can be supported by either $c$'s sibling (if it exists), $d$, or $d$'s parent.
	For the inductive step, we focus on the deepest node $c'$ of $S_c$ that does not satisfy the conditions, let~$d'$ be its parent.
	We first assume that $S_{c'}^*\subseteq N^*[{c'}]$ and that $\deg(d')=2$.
	If the neighborhoods of $c'$ and $d'$ do not contain nonlocal pieces of the skeleton or exoskeletons, then the modules in $N^*[c']\setminus N^*[d']$ can all freely move on the perimeter of $S_{c'}$ (\cref{fig:base-case-small-a}). We then move them to $N^*[d']$.
	Else, if the perimeter of $S_{c'}$ is obstructed by nonlocal parts, we can move the skeleton modules of $S_{c'}$ along the perimeter of the nonlocal parts.
	This will not break connectivity because we can temporarily ``park'' modules of $M$ in the perimeter of modules not in $M$.
	If $M$ is not a cut set, an empty position in $N^*[d']$ can easily be filled within 2 parallel moves.
	An interesting case occurs when $M$ is a cut and we must move the parent~$m$ of $d'$ to fill an empty position (\cref{fig:base-case-small-b}).
	In this situation, $m$ must have degree 3, or else $M$ is not a cut set, and $N^*[d']\cap N^*[c']$ must be full, or else we can fill a position in there with a simpler sequence of moves.
	Since $N^*[d']\cap N^*[c']$ is full and connected to a nonlocal part of the configuration, moving $m$ does not break connectivity. We can then move the module in $d'$ to $m$, restoring connectivity with $m$'s parent.
	This creates an empty position in $N^*[d']\cap N^*[c']$ that can be filled using one of the previous techniques.
	So far we avoided moving any module in a nonlocal structure.
	However, this may be unavoidable if the skeleton modules in $S_{c'}$ are part of the shells of exoskeletons (\cref{fig:base-case-small-c}).
	By the definition of a skeleton, this is only possible if the core of an exoskeleton intersects  $N^*[d']\cup N^*[c']$ at a degree-2 bend.
	If this core cell is full, moving skeleton modules of $S_{c'}$ do not affect connectivity of their exoskeletons.
	Else, such a cell is empty, but its neighborhood must be full by Properties~\ref{itm:core-3} and \ref{itm:core-4} of exoskeletons.
	We arrive at the same conclusion that moving skeleton modules of $S_{c'}$ is safe.
	We now address the case when $\deg(d')>2$ and $S_{c'}^*\subseteq N^*[c']$ (\cref{fig:base-case-small-d}).
	We just need to fill $N^*[d']\cap N^*[c']$.
	Using techniques from the previous cases, we can always find paths from a module in $N^*[c']\setminus N^*[d']$ to an empty position in $N^*[d']\cap N^*[c']$ while avoiding $d$ and its parent (which is either outside or in the boundary of $N^*[d']\cap N^*[c']$), and avoiding cells in shells of exoskeletons.
	For the same reasons as before, moving modules along such a path will not break connectivity.
	Finally, we address the case when $S_{c'}^*\not\subseteq N^*[c']$.
	By the choice of $c'$ (that it is the deepest) the children of $c$ satisfy the conditions of the lemma, and $N^*[c']$ is not full.
	Then $c$ must have degree 2 and there is a single empty position in $N^*[c']$, i.e., $N^*[c']\setminus\{N^*[p]\cup N^*[q]\}$ where $p$ and $q$ are the children of $c$, \cref{fig:base-case-small-e}.
	Using the same arguments as above, we can find a path from a module outside $N^*[c']$ to this empty position as shown in the figure.
	That makes a solid $3\times 3$ square of modules centered at $c$.
	We can then make an exoskeleton with $c$ as its root.
\end{proof}

\begin{figure}[htb]
	\centering
	\captionsetup[subfigure]{justification=centering}
	\begin{subfigure}[t]{.12\textwidth}
		\centering
		\includegraphics[page=1]{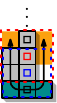}%
		\subcaption{}
		\label{fig:base-case-small-a}
	\end{subfigure}%
	\begin{subfigure}[t]{.22\textwidth}
		\centering
		\includegraphics[page=2]{ipe-basecase-small}%
		\subcaption{}
		\label{fig:base-case-small-b}
	\end{subfigure}%
	\begin{subfigure}[t]{.2\textwidth}
		\centering
		\includegraphics[page=3]{ipe-basecase-small}%
		\subcaption{}
		\label{fig:base-case-small-c}
	\end{subfigure}%
	\begin{subfigure}[t]{.22\textwidth}
		\centering
		\includegraphics[page=4]{ipe-basecase-small}%
		\subcaption{}
		\label{fig:base-case-small-d}
	\end{subfigure}%
	\begin{subfigure}[t]{.24\textwidth}
		\centering
		\includegraphics[page=5]{ipe-basecase-small}%
		\subcaption{}
		\label{fig:base-case-small-e}
	\end{subfigure}%

	\caption{Illustration of~\cref{lem:base-case1}. Modules $d$ and $c$ are highlighted with inner red and blue squares.  The boundaries of $N^*[d']$ and $N^*[c']$ are shown with dashed red and blue lines respectively.}
	\label{fig:base-case-small}
\end{figure}

By applying~\cref{lem:base-case1} a constant number of times in appropriate subskeletons, we eventually obtain a solid $3\times 3$ square from which we can construct an exoskeleton.

\begin{corollary}
    \label{cor:base-case}
    Given a subskeleton $S_c$ with $9\le|S_c^*|$, we can reconfigure $S_c^*$ transforming one of its subskeletons into an exoskeleton in $\mathcal{O}(1)$ many transformations, without changing other exoskeletons or modules not in $S_c^*$.
\end{corollary}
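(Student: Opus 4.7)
The plan is to build a minimal exoskeleton, namely one whose core is a pair $\{d^\star, c^\star\}$ of adjacent skeleton cells, by filling the $3\times 3$ vertex-neighborhood $N^*[d^\star]$ with modules gathered via Lemma~\ref{lem:base-case1}. For a minimal exoskeleton, both core cells are leaves of the core tree, so the shell condition (\ref{x:3}) is vacuous and the depth condition (\ref{x:4}) is vacuous once both core cells are occupied; thus the only nontrivial requirement is to have $d^\star$ and $c^\star$ occupied while all $|S_{d^\star}^*|$ modules sit inside $N^*[\{d^\star,c^\star\}]$.

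First I descend inside $S_c$ to locate a node $d^\star \in S_c$ all of whose child subskeletons have weight at most $9$, while $|S_{d^\star}^*| \ge 9$. Such a node exists: $|S_c^*| \ge 9$ forces $c$ to have at least one child in the skeleton tree (a single skeleton node together with its $O(1)$ nonskeleton supports cannot reach weight $9$ on its own), and descending along the heaviest child at each step must terminate at the desired $d^\star$ because the leaves of the skeleton have constant subtree weight. I then apply Lemma~\ref{lem:base-case1} in turn to each of the (at most four) child subskeletons $S_{c^\star_i}$ of $d^\star$. Each invocation takes $\mathcal{O}(1)$ transformations and deposits the movable modules of $S_{c^\star_i}^*$ into $N^*[d^\star]$, or into the smaller region $N^*[d^\star] \cap N^*[c^\star_i]$ when $\deg(d^\star) > 2$. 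Modules placed by earlier invocations are left undisturbed: I mark each freshly deposited block as belonging to the shell of the exoskeleton under construction so that subsequent invocations treat it as immovable, exactly the way Lemma~\ref{lem:base-case1} already treats foreign exoskeleton shells.

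After all invocations, $d^\star$ itself has never moved, and together with the deposited modules it saturates all nine cells of $N^*[d^\star]$, since $|S_{d^\star}^*| \ge 9$. The resulting $3\times 3$ block around $d^\star$ is then a minimal exoskeleton with core $\{d^\star, c^\star\}$ for any chosen child $c^\star$, and since its root $d^\star$ lies in $S_c$ it is indeed one of the subskeletons of $S_c$. The principal obstacle is verifying, by a case analysis on $\deg(d^\star)$, that the prescribed deposition regions $N^*[d^\star] \cap N^*[c^\star_i]$ collectively cover $N^*[d^\star]$ and fill it without leaving gaps; this mirrors the case split already present in the proof of Lemma~\ref{lem:base-case1} and is a finite geometric check. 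Because every transformation we perform is one of $\mathcal{O}(1)$ invocations of Lemma~\ref{lem:base-case1} applied strictly inside $S_c^*$, the total cost is $\mathcal{O}(1)$, and no other exoskeleton or module outside $S_c^*$ is disturbed.
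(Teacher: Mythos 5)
Your proposal takes essentially the same route as the paper: descend to a suitably deep node and apply \cref{lem:base-case1} a constant number of times to its (weight-$\le 9$) child subskeletons until the $\ge 9$ modules of $S_{d^\star}^*$ saturate the nine cells of $N^*[d^\star]$, yielding the solid $3\times 3$ square that the paper identifies as the minimal exoskeleton. One small quibble: for a two-cell core rooted at $d^\star$ with single leaf $c^\star$, Property~\ref{x:3} is not vacuous (the shell is $N^*(d^\star)\setminus\{c^\star\}$, seven cells that must be full), but since your construction fills all of $N^*[d^\star]$ anyway, the conclusion is unaffected.
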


\Cref{cor:base-case} provides the base case of our induction for~\cref{lem:gather}.
We now establish some routines for the inductive step.
We assume there is a module $d\in S_h$ so that for every child~$c$ of $d$, $S_c^*$ was reconfigured to an exoskeleton $X_c$ satisfying Property~\ref{itm:core-4} with~$k=0$.
If $N^*[d]$ is full, we can finish by making $d$ the root of the exoskeleton, effectively merging the children exoskeletons.
Note that this will necessarily happen if $d$ had degree 4 and is not part of a 4-cycle.
Thus, assume that there is an empty position $e\in N^*[d]$.
We~apply a routine \textsc{Inchworm-Push} that fills $e$ with local moves, using a module initially in the core of the exoskeleton $X_c$ (making that cell empty).
The routine \textsc{Inchworm-Pull} then moves core modules higher (in tree metric) pushing empty positions deeper until they ``exit'' the exoskeleton at leaves.
Again, ignoring nonlocal interactions, these operations are straightforward. However, in the general case, these interactions mandate extra care.

\begin{itemize}
    \item \label{def:inchworm-push}
    \textsc{Inchworm-Push}:
    Let $c$ be a child of $d$, and $p$ be the parent of $d$.
    Require that $N^*[c]$ is full.
    We branch into cases. (i) If $c$, $d$, and $p$ are collinear, there is a path $\pi$ in $N^*(d)$ from $c$ to~$e$ going through only occupied cells, see~\cref{fig:inchworm-push-i}; (ii) If $c$, $d$ and $p$ form a bend, there is a path $\pi$ in~$N^*[d]$ from $c$ to $e$ going through $d$ and at most one nonskeleton module, see~\cref{fig:inchworm-push-ii,fig:inchworm-push-ii-up}; (iii)~In analogous cases to (i--ii), if a node in $\{c,d,p\}$ is a 4-cycle, there is a path $\pi$ in $N^*(d)$ from a module in $c$ to $e$ going through one module in the shell of $X_d$, depicted in~\cref{fig:inchworm-push-iii}.
    For all cases,
    move the modules along $\pi$ making $e$ occupied and $c$ empty.
    This requires at most two transformations since there is at most one collision along $\pi$ and we can decompose $\pi$ into two paths with no collisions.

    \begin{figure}[ht]
        \captionsetup[subfigure]{justification=centering}%
        \begin{subfigure}[t]{.12\columnwidth}%
            \centering%
            \includegraphics[page=1]{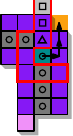}%
            \subcaption{}
            \label{fig:inchworm-push-i}
        \end{subfigure}%
        \hfill%
        \begin{subfigure}[t]{.45\columnwidth}%
            \centering%
            \includegraphics[page=2]{ipe-inchworm-push-i}%
            \subcaption{}
            \label{fig:inchworm-push-ii}
        \end{subfigure}%
        \hfill%
        \begin{subfigure}[t]{.15\columnwidth}%
            \centering%
            \includegraphics[page=3]{ipe-inchworm-push-i}%
            \subcaption{}
            \label{fig:inchworm-push-ii-up}
        \end{subfigure}%
        \hfill%
        \begin{subfigure}[t]{.15\columnwidth}%
            \centering%
            \includegraphics[page=4]{ipe-inchworm-push-i}%
            \subcaption{}
            \label{fig:inchworm-push-iii}
        \end{subfigure}%
        \caption{\textsc{Inchworm-Push}. Module $c$ is colored turquoise, cell $e$ is yellow, and $d$ is highlighted with an inner triangle. The red curve encloses a subconfiguration around the moves that remains connected guaranteeing that the transformations do not disconnect the configuration.}
        \label{fig:inchworm-push}
    \end{figure}

    \item \label{def:inchworm-pull}
    \textsc{Inchworm-Pull}:
    The operation consists of four stages, each involving at most two transformations.
    For every empty cell $p$ in the core, select one of its children $q$ arbitrarily, and let $x(q)$ be its $x$-coordinate.
    Move the module from $q$ to $p$ in stage $j\in \{1,2,3,4\}$ if ${x(q)\equiv j \pmod 4}$.
    \Cref{fig:inchworm-pull-a} gives two examples, in one of which $p$ is a 4-cycle (that takes two transformations).
    If $q$ is a leaf, if there is a tail module of $q$ that is originally in~$S_c^*$ and not in another exoskeleton, move such a module to $q$, visualized in~\cref{fig:inchworm-pull-b,fig:inchworm-pull-c}.
    If there are no more tail modules we can update~$\overline{X_c}$ by deleting $q$ which causes the tail module that just moved to become a shell module, see~\cref{fig:inchworm-pull-b}, or causes two shell modules to become tail modules as in~\cref{fig:inchworm-pull-c}.

    \begin{figure}[ht]
    	\centering%
    	\captionsetup[subfigure]{justification=centering}%
    	\begin{subfigure}[t]{.5\textwidth}%
    		\centering%
    		\includegraphics[page=1]{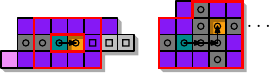}%
    		\subcaption{}
    		\label{fig:inchworm-pull-a}
    	\end{subfigure}%
    	\hfil%
    	\begin{subfigure}[t]{.5\textwidth}%
    		\centering%
    		\includegraphics[page=2]{ipe-inchworm-pull}%
    		\subcaption{}
    		\label{fig:inchworm-pull-b}
    	\end{subfigure}%
    	\par%
    	\begin{subfigure}[t]{.5\textwidth}%
    		\centering%
    		\includegraphics[page=3]{ipe-inchworm-pull}%
    		\subcaption{}
    		\label{fig:inchworm-pull-c}
    	\end{subfigure}%
    	\hfil%
    	\begin{subfigure}[t]{.5\textwidth}%
    		\centering%
    		\includegraphics[page=5]{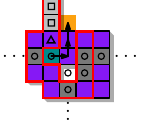}%
    		\subcaption{}
    		\label{fig:inchworm-pull-d}
    	\end{subfigure}%
    	\caption{\textsc{Inchworm-Pull}. Module $q$ is colored turquoise and empty cell $p$ is yellow.}
    	\label{fig:inchworm-pull}
    \end{figure}
\end{itemize}

The reason we stagger the moves into four stages in \textsc{Inchworm-Pull} is to guarantee that the configuration contains enough static modules around a move.
Otherwise, performing all moves in a single transformation might disconnect the configuration, as shown in~\cref{fig:inchworm-pull-d}.
We now argue that \textsc{Inchworm-Push} and \textsc{Inchworm-Pull} preserve connectivity and require only~$\mathcal{O}(1)$ transformations.
To establish local connectivity of the static modules in the neighborhood of each move, we rely on Properties~\ref{itm:core-3} and~\ref{itm:core-4}.

\begin{lemma}
	\label{lem:inchworm-push}
	\hyperref[def:inchworm-push]{\textsc{Inchworm-Push}} maintains connectivity and takes $\mathcal{O}(1)$ transformations, given that the conditions of the operation are met.
\end{lemma}

\begin{figure}[htb]
	\centering
	\captionsetup[subfigure]{justification=centering}
	\begin{subfigure}[t]{.2\textwidth}
		\centering
		\includegraphics[page=1]{ipe-inchworm-proof}%
		\subcaption{}
		\label{fig:push-proof-a}
	\end{subfigure}%
	\begin{subfigure}[t]{.2\textwidth}
		\centering
		\includegraphics[page=2]{ipe-inchworm-proof}%
		\subcaption{}
		\label{fig:push-proof-b}
	\end{subfigure}%
	\begin{subfigure}[t]{.2\textwidth}
		\centering
		\includegraphics[page=3]{ipe-inchworm-proof}%
		\subcaption{}
		\label{fig:push-proof-c}
	\end{subfigure}%
	\begin{subfigure}[t]{.2\textwidth}
		\centering
		\includegraphics[page=4]{ipe-inchworm-proof}%
		\subcaption{}
		\label{fig:push-proof-d}
	\end{subfigure}%

	\caption{\textsc{Inchworm-Push} operations preserve connectivity.}
	\label{fig:inchworm-proof}
\end{figure}

\begin{proof}
	In all cases of \textsc{Inchworm-Push}, local connectivity around $d$ is never broken since~$N^*[c]$ is full by requirement ($N^*[d]$ remains connected after removing the moving modules).
	Thus, connectivity could only be broken for nonlocal parts of the configuration.
	This happens if the modules in $\pi$ are also part of nonlocal shells.

	Assume the orientation of~\cref{fig:inchworm-push}, i.e.,~$p$ is above $d$ and, in case (i) $c$ is below $d$, and in cases (ii--iii) $c$ is to the left of~$d$.
	Note that if the shell of any nonlocal cell contains a moving module in cases~\cref{fig:inchworm-push-ii,fig:inchworm-push-iii}, then the cell $e$ depicted in the figure must not be empty, since it would also be in this cell's shell.
	Thus, if we assume that a moving module is part of a nonlocal structure, we must be in case~(i), or in case (ii) when $e$ is the cell to the northeast of $c$ (\cref{fig:inchworm-push-i,fig:inchworm-push-ii-up}).

	We argue first for case (i).
	The nonlocal part of the exoskeleton must have a core cell in the southeast of $c$ or two cells to the east of $c$ (or both).
	Let $s$ and $q$ be such cells respectively.
	If the cell to the north of $q$ is also in the core of some exoskeleton, then $e$ is full since it's a shell position, a contradiction.
	If both $s$ and $q$ are leaves then the parent of $s$ is to its south and the parent of $q$ is to its east.
	Then, the moving modules of $\pi$ are not part of a shell and connectivity is not broken.
	So, assume the cell $t$ to the south of $q$ (and east of~$s$) is a core cell.
	If all three of $s$, $q$ and $t$ are full, then connectivity is not broken.
	Note that only one of such cells can be empty by Property~\ref{itm:core-4}.
	If either $s$ or $q$ are empty, then, by Properties~\ref{itm:core-3} and~\ref{itm:core-4}, all cells in $N^*(\{s,q\})$ are full.
	Then connectivity is not broken as shown in~\cref{fig:push-proof-a,fig:push-proof-b}.
	If $t$ is the only empty cell in $N^*(\{s,q,t\})$, then the same argument holds as shown in~\cref{fig:push-proof-c}.
	The only other cell that could be empty is the cell~$u$ to the southeast of $t$, if it is part of the core of an exoskeleton that is not local to $t$.
    By Properties~\ref{itm:core-3} and~\ref{itm:core-4}, $N(\{t,u\})$ is full, and connectivity is not broken (\cref{fig:push-proof-d}).

	Finally, all the arguments above can be repeated for case (ii) if we replace $c$ with $d$.
\end{proof}

\begin{lemma}
	\label{lem:inchworm-pull}
	\hyperref[def:inchworm-pull]{\textsc{Inchworm-Pull}} preserves connectivity and takes $\mathcal{O}(1)$ transformations.
\end{lemma}

\begin{figure}[h]
	\centering
	\captionsetup[subfigure]{justification=centering}
	\begin{subfigure}[t]{.2\textwidth}
		\centering
		\includegraphics[page=6]{ipe-inchworm-proof}%
		\subcaption{}
		\label{fig:pull-proof-a}
	\end{subfigure}%
	\begin{subfigure}[t]{.2\textwidth}
		\centering
		\includegraphics[page=7]{ipe-inchworm-proof}%
		\subcaption{}
		\label{fig:pull-proof-b}
	\end{subfigure}%
	\caption{\textsc{Inchworm-Push} operations preserve connectivity.}
	\label{fig:inchworm-pull-proof}
\end{figure}

\begin{proof}
	It is clear by its definition that \textsc{Inchworm-Pull} executes $\mathcal{O}(1)$ parallel moves.
	Let $p$ and $q$ be one of the empty positions and its chosen child, respectively.
	Similar to \textsc{Inchworm-Push}, when the neighborhood of $\{p,q\}$ does not intersect with nonlocal parts of the configuration, it is straightforward to check that the move from $q$ to $p$ does not break connectivity.
	That is $N^*(\{p,q\})$ is connected by Properties~\ref{itm:core-3} and~\ref{itm:core-4} unless it contains at least two nonlocal empty core cells.
	In such cases, either $p$ is in the shell of a nonlocal part of an exoskeleton and $q$ is a leaf (\cref{fig:pull-proof-b}), or both $p$ and $q$ are in the shell of a nonlocal part of an exoskeleton (\cref{fig:pull-proof-a}).
	Note that a core cell can only be part of a shell if it is a degree-2 bend vertex-adjacent to another degree-2 bend of an exoskeleton core (see $p$ and $q$ in \cref{fig:pull-proof-a} shown in yellow and teal, respectively, that are adjacent to empty positions in the core).
	By the definition of the operation, none of the adjacent empty positions can be part of a move in the same stage.
	Let $M$ be the set of empty core cells vertex-adjacent to $p$ and $q$.
	Then, by Properties~\ref{itm:core-3} and~\ref{itm:core-4},
	$N^*(\{p_i,q_i\}\cup M)$ is connected, and the connectivity is not~broken.
\end{proof}

\begin{lemma}
    \label{lem:gather}
    Given a configuration $C$ with skeleton $S$, and a skeleton node $d$ with $|S_d^*|\ge 9$, $\mathcal{O}(|S_d^*|)$ transformations are sufficient to reconfigure $S_d$ into an exoskeleton $X_d$.
\end{lemma}
\begin{proof}
    If a subskeleton of $S_d$ does not contain an exoskeleton, we can apply~\cref{cor:base-case}.
    If,~after that, $S_d$ is not the core of an exoskeleton $X_d$, we apply induction.
    Let $Q$ be the set containing the children of $d$.
    For every child $c\in Q$ with $|S_c^*|\ge 9$ we can get an exoskeleton~$X_c$ in $\mathcal{O}(|S_c^*|)$ transformations by inductive hypothesis.
    For every child $c\in Q$ with $|S_c^*|< 9$, we apply \cref{lem:base-case1} that either results in an exoskeleton $X_c$, or results in deleting $c$ from~$S$ ``compacting'' all its modules in $N^*[c]\cap N^*[d]$.
    If after that $N^*[d]$ is not full, $N^*[d]$ contains at most three empty cells if $d$ is a degree-2 bend in $S$, at most two empty cells if $d$ is a ``straight'' degree-2 in $S$, or at most 1 empty cell if $d$ has degree 3.
    (Note that if $d$ has degree~4 and is not a 4-cycle, then $N^*[d]$ is full.)
    For each of these empty cells, we can fill them by applying \textsc{Inchworm-Push} followed by at most four applications of \textsc{Inchworm-Pull} to ensure that~$N^*[c]$ is full for all children $c$ of $d$ while maintaining Property~\ref{itm:core-4}.
    This takes at most~$\mathcal{O}(1)$ transformations.
    When $N^*[d]$ is full, the configuration contains an exoskeleton~$X_d$ except for the ``4-arity'' of the empty positions (Property~\ref{itm:core-4}).
    That can be resolved by applying \textsc{Inchworm-Pull} to $X_c$ at most three times for each child $c$ of~$d$.
\end{proof}

By Lemma~\ref{lem:heavy-node}, we can choose an appropriately heavy node $d$ of $S$ to obtain:

\begin{corollary}
    Let $C$ be a configuration with bounding box perimeter $P$. We can reconfigure~$C$ so that it contains an exoskeleton $X_h$ with $\ge 36P$ modules in $\mathcal{O}(P)$ transformations.
    All~modules stay within $\mathcal{O}(1)$ distance from the bounding box of $C$.
\end{corollary}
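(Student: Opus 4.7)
The plan is to combine \cref{lem:heavy-node} with \cref{lem:gather} in a straightforward way: pick a skeleton node whose subskeleton weight is $\Theta(P)$, then reconfigure its subskeleton into an exoskeleton. Concretely, first compute a \sk $S$ of the input configuration $C$ using the algorithm described in \cref{app:sk-alg} and root it at an arbitrary node $r$; by the defining property $C\subset N[S]$ we have $|S_r^*|=n$.

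Next, invoke \cref{lem:heavy-node} with weight threshold $w=36P$ to obtain a skeleton node $h$ with $36P\le |S_h^*|\le 108P$ (this is the "appropriately heavy" node alluded to in the statement). Since $|S_h^*|\ge 36P\ge 9$, we may directly apply \cref{lem:gather} to reconfigure $S_h^*$ into an exoskeleton $X_h$. That lemma guarantees the reconfiguration uses $\mathcal{O}(|S_h^*|)=\mathcal{O}(P)$ transformations, and since $X_h$ contains all of $S_h^*$, it comprises at least $36P$ modules, as required. Nothing outside $S_h^*$ is touched by \cref{lem:gather}, so the rest of $C$ plays the role of an inert nonlocal environment.

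It remains to verify the weakly in-place property. Every operation invoked during the proof of \cref{lem:gather}—the base case reconfigurations of \cref{lem:base-case1}, and the \textsc{Inchworm-Push}/\textsc{Inchworm-Pull} routines—moves modules only within $N^*$ of current skeleton/core cells, i.e., within a unit-one thickening of cells already occupied by $C$. Therefore every intermediate configuration lies within a constant expansion of the bounding box of $C$, hence within its extended bounding box $B'$.

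The only real subtlety lies in the preconditions for \cref{lem:heavy-node}, namely the assumption $n\ge 36P$. For connected grid configurations we have $n\ge P/2-1$, so this inequality may fail for "thin" inputs. In that regime, however, the asymptotic claim $\mathcal{O}(P)$ reduces to $\mathcal{O}(n)$, and one can either apply \cref{lem:gather} directly at the root (choosing $h=r$) to build an exoskeleton containing all of $C$, or observe that the phase is trivially completed because the perimeter is already $\mathcal{O}(n)$. Either way the bound is absorbed in the $\mathcal{O}(P)$ budget, so this is a minor case distinction rather than a genuine obstacle.
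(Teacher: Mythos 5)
Your proposal is correct and follows essentially the same route the paper takes (the paper itself only remarks ``by \cref{lem:heavy-node}, choose an appropriately heavy node $h$'' and then applies \cref{lem:gather}): pick $w=36P$ to get $36P\le|S_h^*|\le 108P$, then gather $S_h^*$ into an exoskeleton in $\mathcal{O}(|S_h^*|)=\mathcal{O}(P)$ transformations. Your explicit treatment of the degenerate case $n<36P$ (where the corollary's module count cannot literally be met) is a point the paper glosses over, but it does not change the argument.
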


\subsection{Phase~(II): Scaffolding}
\label{subsec:scaffolding}

\phaseref{Phase~(I)} of our algorithm recursively constructed a sufficiently large exoskeleton $X_h$ that is now reconfigured in \phaseref{Phase~(II)}.
The goal of this reconfiguration is to create a \mbox{$\boldsymbol{\dashv}$ -shaped} exoskeleton, with its vertical segment precisely aligned along the right boundary of the bounding box.
This vertical edge will serve as the sweep line in \phaseref{Phase~(III)}.

\phaseref{Phase~(II)} of our approach consists of the following three steps:
\medskip
\begin{enumerate}
    \item \label{shoot-1} First ``compact'' $X_h$ so that its core contains no empty cells.
    \item \label{shoot-2} We then choose a rightmost node $c$ in the core of $X_h$ and use \textsc{Inchworm-Push} and \textsc{-Pull} to grow a horizontal path from $c$ rightward, until a $3\times 3$ square is formed outside the bounding box, see~\crefrange{fig:shoot-a}{fig:shoot-f}.
    \item \label{shoot-3} Use \textsc{Inchworm-Push} and \textsc{-Pull} to grow the path upwards until it hits the top of the bounding box, then grow a path downwards until it hits the bottom of the bounding box, see~\crefrange{fig:shoot-f}{fig:shoot-h}.
\end{enumerate}

Step~\ref{shoot-1} can be accomplished by applying \textsc{Inchworm-Pull} in $X_h$ until all empty core cells are gone, which is up to the height of $X_h$ many times, hence $\mathcal{O}(P)$ parallel moves.

We now describe Step~\ref{shoot-2}.
The main technical difficulty is that the path will intersect nonlocal parts of the configuration. These will not be part of $X_h$ by the choice of $c$.
We resolve this by using the connectivity properties of exoskeletons to ``go through'' the obstacles without breaking connectivity.
Let $d$ be $c$'s east neighbor.
From now on, we consider the exoskeleton rooted at $c$, $X_c$, and note that $h$ becomes one of the leaves of $\overline{X_c}$.
We ensure that throughout the reconfiguration, $\overline{X_c}$ always contains $h$ as a leaf so that connectivity is maintained with the parts of the original skeleton $S$ that are not part of the exoskeleton $X_c$.
To accomplish that, whenever we apply \textsc{Inchworm-Pull} to $X_c$, we choose, if possible, a child $q$ of an empty position $p$ that is not an ancestor of $h$.

Our goal now is to fill the three positions to the right (NE, E, and SE) of $d$.
If all are full, we move the labels $c$ and $d$ to the right.
Else, if $d$ is not a cut vertex, we try moving $d$ to the right using \textsc{Inchworm-Push} case (ii) (\cref{fig:shoot-a}).
Else, the position to the E~of~$d$ is full and we verify whether applying  \textsc{Inchworm-Push} case (i) is feasible.
If so, we perform it.

The remaining case is when, without loss of generality, we wish to fill the NE cell from $d$ and the module $p$, which is either N or NW of $d$, is a cut vertex (\cref{fig:shoot-b}).
Then, we first fill the cells $(0,2)+c$ and $(-1,2)+c$ (shown in red in~\cref{fig:shoot-b}).
We first check $(0,2)+c$, if it is empty, we move the two modules below it up.
That makes $c$ empty. We can then apply \textsc{Inchworm-Pull} four times to $X_c$, making it an exoskeleton again.
Repeat the same for $(-1,2)+c$.
Now, $p$ is no longer a cut vertex and we can apply \textsc{Inchworm-Push} case~(i).
After that, $c$  is again empty.
Instead of using \textsc{Inchworm-Pull}, we can route the two modules who just filled $(0,2)+c$ and $(-1,2)+c$ to $c$ and the empty position at distance~4 away from~$c$ in $\overline{X_c}$ created by the four executions of \textsc{Inchworm-Pull} (\cref{fig:shoot-d,fig:shoot-e}).
This makes $X_c$ an exoskeleton again.

Eventually, $c$ moves 2 positions to the right of the bounding box, and $X_c$ has a $3\times 3$ square of full cells outside the bounding box.
Step 3 is very similar to Step 2.
First relabel the module above $c$ as $d$ and propagate the path upwards in the same way (\cref{fig:shoot-f}), except that $d$ will never be a cut vertex and we can always use \textsc{Inchworm-Push} case (ii).
Once the top edge is reached, relabel $c$ and $d$ accordingly to propagate the path downwards~(\cref{fig:shoot-g}).
Note that after the relabeling, $X_c$ is an exoskeleton except for the ``4-arity'' of the empty core cells.
They can be fixed with $\mathcal{O}(1)$ applications of \textsc{Inchworm-Pull}.

\begin{lemma}
	\label{lem:shooting}
	\textsc{Scaffolding} takes $\mathcal{O}(P)$ transformations and does not break connectivity.
\end{lemma}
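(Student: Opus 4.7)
The plan is to verify both claims of the lemma---the $\BigO(P)$ bound on the number of transformations and preservation of connectivity---by analyzing each of the three steps of the Scaffolding phase separately, invoking the already-established guarantees for \textsc{Inchworm-Push} and \textsc{Inchworm-Pull} (\cref{lem:inchworm-push,lem:inchworm-pull}).

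For the time bound, I would reason as follows. Step~\ref{shoot-1} removes all empty cells from the core of $X_h$ by repeated application of \textsc{Inchworm-Pull}; because each application moves empty positions down by one level of $\overline{X_h}$ and that tree has depth $\BigO(P)$, and each application is $\BigO(1)$ by \cref{lem:inchworm-pull}, Step~\ref{shoot-1} contributes $\BigO(P)$ transformations. In Step~\ref{shoot-2}, the horizontal distance from the rightmost core node $c$ to the east side of the extended bounding box is $\BigO(P)$, and every one-cell extension uses either a single \textsc{Inchworm-Push} (cases~(i)/(ii)) or, in the worst case, the fill-two-cells / four-\textsc{Inchworm-Pull} / push / reroute procedure detailed after \cref{fig:shoot-b}; all of these are $\BigO(1)$, so Step~\ref{shoot-2} also costs $\BigO(P)$. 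Step~\ref{shoot-3} grows the path upward and then downward by a total of $\BigO(P)$ one-cell extensions, each of which is easier than Step~\ref{shoot-2} (because $d$ is never a cut vertex there), and then a constant number of \textsc{Inchworm-Pull} applications restore the 4-arity of empty core cells. Summing gives $\BigO(P)$ overall.

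For connectivity I would maintain two invariants throughout Steps~\ref{shoot-2}--\ref{shoot-3}: (a)~after every transformation, the configuration contains an exoskeleton $X_c$ (possibly after at most three ``cleanup'' applications of \textsc{Inchworm-Pull}); and (b)~$h$ remains a leaf of $\overline{X_c}$. Invariant~(b) is preserved by always preferring, in \textsc{Inchworm-Pull}, a child $q$ of an empty core cell that is not an ancestor of $h$; this keeps the exoskeleton attached to the portion of the original skeleton $S$ not yet converted, so the global connectivity between $X_c$ and the rest of $C$ is guaranteed. Local connectivity inside the operations is given by \cref{lem:inchworm-push,lem:inchworm-pull}. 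The only delicate case is the cut-vertex case of Step~\ref{shoot-2}: here I would verify that after filling the two red cells at $(0,2)+c$ and $(-1,2)+c$, the module $p$ is no longer a cut vertex, so the subsequent \textsc{Inchworm-Push} is safe; and that the subsequent reroute of these two modules into the core empty positions created by the four \textsc{Inchworm-Pull} applications preserves the exoskeleton properties (core cells become occupied or get replaced at tree-depth distance~$4$, so Property~\ref{x:4} is restored).

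The main obstacle will be this cut-vertex case of Step~\ref{shoot-2}, because the path being grown genuinely has to pass ``through'' non-local parts of the configuration, and one must argue connectivity simultaneously on the local exoskeleton side and on the non-local side. I would handle this by a careful case analysis paralleling the proof of \cref{lem:inchworm-push}: use Properties~\ref{x:3}--\ref{x:4} of any non-local exoskeleton whose shell or core we brush against to show that the two intermediate fills and the subsequent push cannot disconnect it, exactly as in the arguments illustrated in \cref{fig:inchworm-proof}. Once this case is dispatched, the remaining arguments for Step~\ref{shoot-3} are essentially a specialization (since $d$ is never a cut vertex), and the lemma follows.
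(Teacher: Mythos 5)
Your proposal is correct and follows essentially the same approach as the paper's (considerably terser) proof: the $\BigO(P)$ bound comes from the $\BigO(P)$ size of the ``T''-shaped structure with $\BigO(1)$ cost per \textsc{Inchworm} operation, and connectivity follows from \cref{lem:inchworm-push,lem:inchworm-pull} together with an invariant keeping $X_c$ attached to the unconverted remainder of the configuration. The only nuance worth noting is that the paper states its invariant slightly more generally---every component of $C\setminus X_c$ is connected either to the neighborhood of $h$ \emph{or to the shell of $X_c$}---which your case analysis for non-local parts covers implicitly.
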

\begin{proof}
	The ``$\dashv$-shaped'' structure created in this phase has size $\mathcal{O}(P)$ and, thus, require $\mathcal{O}(P)$ applications of \textsc{Inchworm-Push} and \textsc{Inchworm-Pull}.
	When applying \textsc{Inchworm-Push} case (i), the extra moves guarantee that connectivity is not broken locally.
	At any moment, every component of $C\setminus X_c$ is either connected to the neighborhood of $h$, or to the shell of~$X_c$.
	Whenever a path of $X_c$ shrinks due to an application of \textsc{Inchworm-Pull} we leave the cells in nonlocal parts of the original skeleton, so components of $C\setminus X_c$ potentially merge, but still are either connected to the neighborhood of $h$, or to the shell of $X_c$.
\end{proof}

\begin{figure}[htb]
    \captionsetup[subfigure]{justification=centering}%
    \begin{subfigure}[t]{.333\textwidth}%
        \centering%
        \includegraphics[page=1]{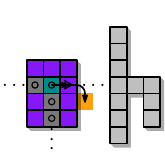}%
        \subcaption{}
        \label{fig:shoot-a}
    \end{subfigure}%
    \begin{subfigure}[t]{.333\textwidth}%
        \centering%
        \includegraphics[page=2]{ipe-shooting}%
        \subcaption{}
        \label{fig:shoot-b}
    \end{subfigure}%
    \begin{subfigure}[t]{.333\textwidth}%
        \centering%
        \includegraphics[page=3]{ipe-shooting}%
        \subcaption{}
        \label{fig:shoot-c}
    \end{subfigure}%
    \par\bigskip\hfil%
    \begin{subfigure}[t]{.333\textwidth}
        \centering%
        \includegraphics[page=4]{ipe-shooting}%
        \subcaption{}
        \label{fig:shoot-d}
    \end{subfigure}%
    \begin{subfigure}[t]{.333\textwidth}
        \centering%
        \includegraphics[page=5]{ipe-shooting}%
        \subcaption{}
        \label{fig:shoot-e}
    \end{subfigure}%
    \par\bigskip%
    \begin{subfigure}[t]{.33\textwidth}
        \centering%
        \includegraphics[page=6]{ipe-shooting}%
        \subcaption{}
        \label{fig:shoot-f}
    \end{subfigure}%
    \begin{subfigure}[t]{.33\textwidth}
        \centering%
        \includegraphics[page=7]{ipe-shooting}%
        \subcaption{}
        \label{fig:shoot-g}
    \end{subfigure}%
    \begin{subfigure}[t]{0.33\textwidth}
        \centering%
        \includegraphics[page=8]{ipe-shooting}%
        \subcaption{}
        \label{fig:shoot-h}
    \end{subfigure}%
    \caption{Building the scaffolding; bounding box of the initial configuration is shown in~yellow.}
    \label{fig:shooting}
\end{figure}
    \subsection{\phaseref{Phase~(III)}: Creating scale}
\label{subsec:sweep}
Once \phaseref{Phase~(II)} concludes, we are left with an intermediate configuration that has an exoskeleton spanning the east edge of its bounding box, recall~\cref{fig:shoot-h}.
In the final phase of our algorithm, we use this to form a $3$-scaled configuration that can be arbitrarily reconfigured due to~\cref{thm:3-scaled-reconfiguration}.
We start by modifying part of the exoskeleton structure created during \phaseref{Phase~(II)} into a vertical line of $3\times 3$ \newterm{meta-modules}.

\subparagraph*{Meta-modules.} For the remainder of this section, a meta-module $M$ consists of eight modules occupying the open vertex-neighborhood of a common center cell $v$, i.e., one in every cell of $N^*(v)$ as shown in~\cref{fig:meta-module}.
We define as its east and west strips as the cells which share a $y$-coordinate with one of its modules and are located within the bounding box of the full configuration $C$.
We denote these by $\east(M)$ and $\west(M)$, respectively.

A \newterm{sweep line} is then a connected subconfiguration $\ell=C[x(\ell)-1,x(\ell)+1]^x$ formed by disjoint meta-modules that spans the full height of the bounding box of $C\setminus\ell$.
We label these meta-modules as $M_1,\ldots, M_h$, where $v_i$ is the center cell of $M_i$ such that $y(v_i)<y(v_j)\Leftrightarrow i<j$, see~\cref{fig:sweep-line,fig:sweep-line-separator}.
Note that $x(v_i)=x(\ell)$ for every center cell, and the east and west strips of $\ell$ correspond exactly to the union of meta-module strips.
We use these to define the following characteristics.

\begin{figure}[htb]%
    \hfil%
    \begin{subfigure}[t]{0.23\textwidth}
        \centering%
        \includegraphics[page=1]{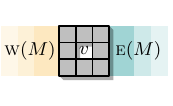}%
        \subcaption{A meta-module $M$.}%
        \label{fig:meta-module}%
    \end{subfigure}%
    \hfil%
    \begin{subfigure}[t]{0.28\textwidth}%
        \centering%
        \includegraphics[page=2]{sweep-line-and-meta-modules}%
        \subcaption{A clean sweep line (red).}%
        \label{fig:sweep-line}%
    \end{subfigure}%
    \hfil%
    \begin{subfigure}[t]{0.3\textwidth}%
        \centering%
        \includegraphics[page=3]{sweep-line-and-meta-modules}%
        \subcaption{A separator sweep line (red).}%
        \label{fig:sweep-line-separator}%
    \end{subfigure}%
    \caption{Sweep lines are formed by vertical stacks of meta-modules.}
\end{figure}

A sweep line $\ell$ is \newterm{clean} exactly if every meta-module $M_i$ either has an empty center cell~$v_i$, or its respective west strip is fully occupied, i.e, $\west(M_i)\subseteq C$.
See~\cref{fig:sweep-line} for an illustration.
Analogously, we call~${\ell}$ \newterm{solid} if all center cells of its meta-modules are occupied.

Furthermore, we say that is ${\ell}$ a \newterm{separator} if for every meta-module $M_i$ in ${\ell}$:
\begin{enumerate}[1.]
    \item The east strip $\east(M_i)$ does not contain modules unless all cells in $\west(M_i)$ are full.
    \item The modules in $\east(M_i)$ are located in its $x$-minimal cells, thus forming a $3$ cell tall rectangle with at most two additional ``loose'' modules.
\end{enumerate}

We start by transforming part of the exoskeleton from \phaseref{Phase~(II)} into meta-modules that form a sweep line at the eastern boundary of $B$, as illustrated in~\cref{fig:skeleton-to-sweep-line}.
This can be achieved by moving only modules that occupy the core of the relevant section of the skeleton.
\begin{figure}[htb]
    \captionsetup[subfigure]{justification=centering}%
    \centering%
    \begin{subfigure}[t]{99.2pt}
        \transformable{\includegraphics[page=1]{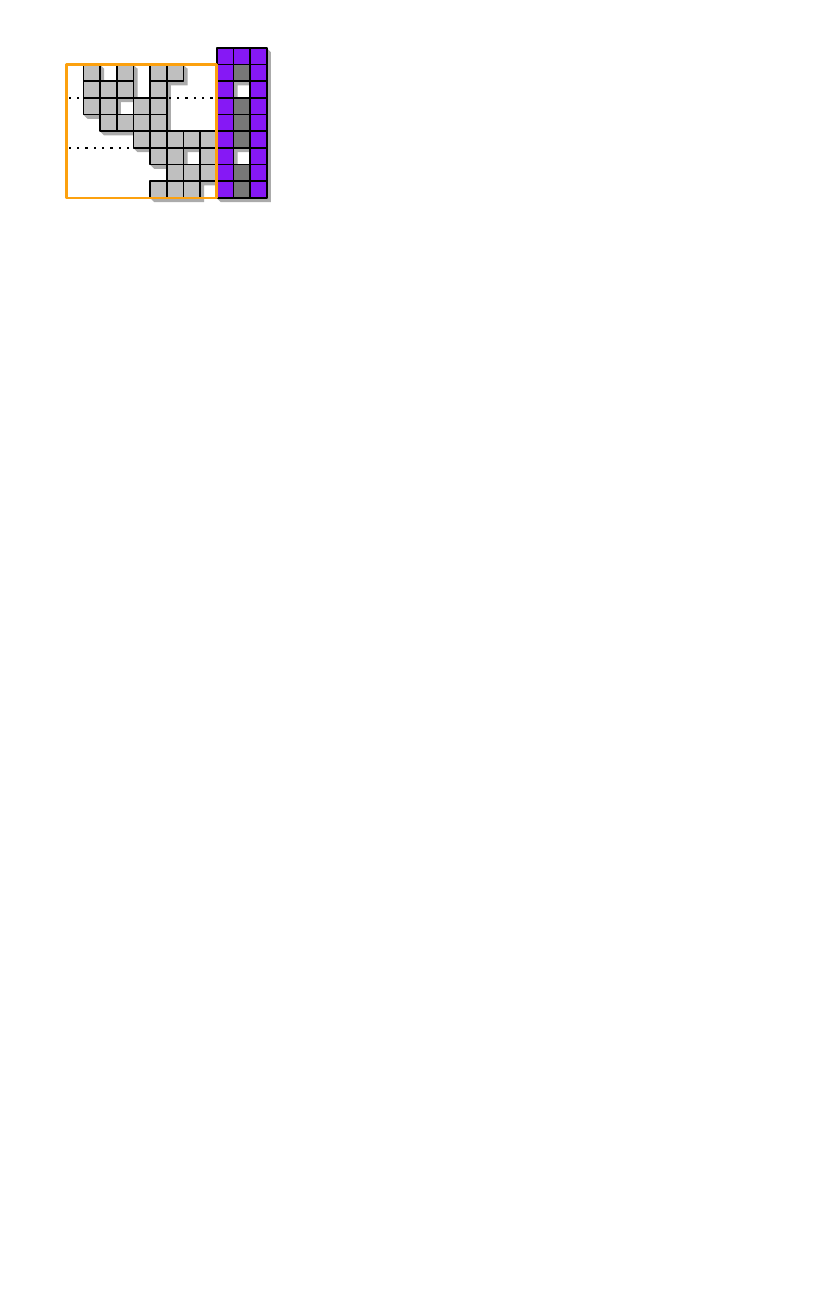}}%
        \subcaption{}%
    \end{subfigure}%
    \ttransforms%
    \begin{subfigure}[t]{99.2pt}
        \includegraphics[page=2]{sweep-overview}%
        \subcaption{}%
    \end{subfigure}%
    \caption{We identify a vertical portion of the skeleton (a) that can be transformed into a sweep line (b) by shifting modules in its interior.}%
    \label{fig:skeleton-to-sweep-line}
\end{figure}
\begin{lemma}
    \label{lem:exoskeleton-to-sweep-line}
    The vertical portion of the ``$\dashv$-shaped'' exoskeleton in $C$ can be made into a separator sweep line in $\mathcal{O}(1)$ transformations.
\end{lemma}
\begin{proof}
    Consider the scaffold exoskeleton $X_c$ as defined in~\cref{subsec:scaffolding}, and let $v_1\in \overline{X_c}$ refer to its $xy$-minimal core cell.
    Due to \cref{lem:shooting}, $C$ has a subskeleton $X_{\ell}$ that contains the modules between the west edge of the bounding box $B$ and the west edge of the extended bounding box $B'$, such that the shell of $X_{\ell}$ is fully occupied.
    This means that the modules in $C\cap\overline{X_{\ell}}$ are free.

    Let $B_y$ refer to the height of the minimal bounding box of $C$.
    Due to the $4$-arity of the depth of unoccupied cells in skeletons, at least $\nicefrac{3}{4}B_y-2$ cells in $\overline{X_{\ell}}$ contain a module.
    To obtain a sweep line, every core cell $v\in\overline{X_{\ell}}$ that has $y(v) \not\equiv 1 (\bmod3)$ must be occupied.
    This requires exactly $\nicefrac{2}{3}B_y-2$ modules, so a sequence of at most $B_y$ chain moves in $\overline{X_{\ell}}$ suffices to reconfigure $X_{\ell}$ into a sweep line of $C$.
    Clearly, this sweep line is a separator; due to the initial construction of $X_c$, its west half-space is empty.
\end{proof}

This gives us a separator sweep line ${\ell}$ that is not necessarily clean or solid, located at the east edge of the configuration's bounding box $B$.
Our goal is to move $\ell$ towards the west edge of $B$ while maintaining its separator property, compacting all encountered modules into $3$-tall strips in the process.
We define two procedures that suffice to reach this goal.

Intuitively, the \textsc{Advance} operation provides a schedule that translates a sweep line by one unit to the west, and the \textsc{Clean} operation
replaces a given sweep line with a clean one.
To efficiently parallelize these operations, we split the sweep line into \newterm{leading} and \newterm{trailing} meta-modules such that $M_i$ is trailing exactly if $i$ is odd, and leading otherwise.
At any given time, either the leading or the trailing meta-modules are active, but never both.

By repeatedly alternating between the \textsc{Advance} and \textsc{Clean} operations, we create a configuration with a clean separator sweep line that has an empty west half-space, see~\cref{fig:sweep-overview}.

\begin{figure}[htb]%
    \hfil%
    \includegraphics[scale=1.25,page=3]{sweep-overview}%
    \caption{We alternate between the \textsc{Advance} and \textsc{Clean} operations until $\west(\ell)\cap C=\varnothing$.}%
    \label{fig:sweep-overview}%
\end{figure}

We first argue the \textsc{Clean} operation can be performed in constantly many transformations.
Let $M_i$ be an active meta-module with an occupied center cell.
On a high-level, we consider a rectangle induced by occupied cells in west direction within its strip.
It is easy to see that we can fill an empty position immediately adjacent to this rectangle, while also cleaning the center cell of $M_i$ by at most two subsequent transformations.
By doing this for leading and trailing meta-modules separately, we conclude the following lemma.
\begin{lemma}
    \label{lem:cleaning-sweep-line}
    A configuration $C$ with a sweep line ${\ell}$ can be cleaned by $4$ strictly in-place transformations, without moving any modules east, or exceeding the bounding box $B$ of $C$.
\end{lemma}
\begin{proof}
    Consider a sweep line ${\ell}$ in a configuration $C$ and let $M_i$ refer to an arbitrary meta-module in ${\ell}$ with an occupied center cell~$v_i$.
    We show that $M_i$ can be cleaned in at most $2$ transformations, if neither of its neighbors in ${\ell}$ is being cleaned simultaneously.
    As a result, the following procedure can be used to clean all leading (resp., trailing) modules in parallel, resulting in a schedule of total makespan at most $4$.

    Consider the $xy$-minimal empty cell $v_\varnothing\in\west(M_i)$ and denote the distance that separates it from~$M_i$ by ${k=x(v_i)-x(v_\varnothing)-1}$.
    All cells in $\west(M_i)$ with $x$-coordinates larger than $x(v_\varnothing)$ are thus full, inducing a rectangle~$R$ of $k\times 3$ modules.
    Note that we can assume without loss of generality that such a cell $v_\varnothing$ exists, as $M_i$ does not affect whether ${\ell}$ is clean otherwise.

    We differentiate based on the three $x$-maximal cells in $\west(M_i)\setminus R$, denoted by $\west^k_1,\west^k_2,\west^k_3$ in south to north order, see~\cref{fig:clean-sweep-line-variant-a}.
    Assume for now that $i\in[2,h-1]$, i.e., that $M_i$ has two neighbors in the sweep line ${\ell}$.

    If $\west^k_2\notin C$, the modules in $R$ that have the same $y$-coordinate as $v_i$ are free.
    We thus perform one of the two chain moves shown in~\cref{fig:clean-sweep-line-variant-a} from $v_i$ to either $\west^k_2$ or $\west^k_3$.
    \begin{figure}[htb]
        \begin{minipage}[t][78.5pt][c]{90pt}%
            \includegraphics[page=1]{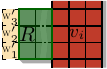}%
        \end{minipage}%
        \begin{minipage}[t]{\columnwidth-90pt}
            \begin{subfigure}[t]{\textwidth}%
                \transformable{\includegraphics[page=2]{sweep-line-clean}}%
                \transforms\includegraphics[page=3]{sweep-line-clean}%
                \hspace{24.5pt}%
                \transformable{\includegraphics[page=4]{sweep-line-clean}}%
                \transforms\includegraphics[page=5]{sweep-line-clean}%
                \subcaption{Trivial cases: $\west^k_2\notin C$.}%
                \label{fig:clean-sweep-line-variant-a}
            \end{subfigure}%
            \par\vspace{0.5em}%
            \begin{subfigure}[t]{\textwidth}%
                \transformable{\includegraphics[page=6]{sweep-line-clean}}%
                \transforms{\includegraphics[page=7]{sweep-line-clean}}%
                \transforms\includegraphics[page=8]{sweep-line-clean}%
                \subcaption{Not all adjacent cells north of $R$ are occupied, and $\west^k_2\in C$.}
                \label{fig:clean-sweep-line-variant-b}
            \end{subfigure}%
            \par\vspace{0.5em}%
            \begin{subfigure}[t]{\textwidth}%
                \transformable{\includegraphics[page=9]{sweep-line-clean}}%
                \transforms{\includegraphics[page=10]{sweep-line-clean}}%
                \transforms\includegraphics[page=11]{sweep-line-clean}%
                \subcaption{All adjacent cells north of $R$ are occupied, and ${\west^k_2\in C}$.}
                \label{fig:clean-sweep-line-variant-c}
            \end{subfigure}
        \end{minipage}
        \caption{``Cleaning'' schedules that move only west or north. These are independent of $\west_3$.}
        \label{fig:clean-sweep-line}
    \end{figure}

    If $\west^k_2\in C$, the same set of modules may not be free.
    In particular, this is true of the module immediately east of $\west^k_2$.
    By definition, at least one of $\west^k_1$ and $\west^k_3$ is unoccupied.
    Assume without loss of generality that $\west^k_3\notin C$.
    The following procedure can be mirrored vertically for the case that $\west^k_3\in C$, as then $\west^k_1\notin C$.

    If there is at least one unoccupied cell adjacent immediately north of $R$, it follows that its neighbor module in $R$ is free.
    Let $v_\north$ refer to the $x$-maximal empty cell that meets these criteria.
    We perform two chain moves, placing a module in $v_\north$, see~\cref{fig:clean-sweep-line-variant-b}.
    If there is no such empty cell $v_\north$, we perform the two moves in \cref{fig:clean-sweep-line-variant-c}, occupying $\west^k_3$.

    It remains to argue that $M_1$ and $M_h$ can be cleaned in a similar manner.
    We argue for the northernmost meta-module $M_h$, but the same line of argumentation applies to $M_1$.
    If~$\west^k_2\notin C$, the schedules depicted in~\cref{fig:clean-sweep-line-variant-a} are immediately applicable.
    Otherwise, we do not fill positions north of $R$, but only south, otherwise proceeding in the same manner.
    In particular, the northernmost modules in $R$ are always free if $\west^k_3\notin C$.
\end{proof}

It remains to argue that the same is true of the \textsc{Advance} operation.
For this, we consider the three positions immediately west of an active meta-module, and distinguish eight cases based on occupied and empty cells.
For all of them, we define local transformation schedules that enable us to move a meta-module one step to the east.
Thus, we obtain:
\begin{lemma}
    \label{lem:advance-sweep}
    A configuration $C$ with a clean separator sweep line ${\ell}$ can be advanced into a clean separator sweep line by~$\mathcal{O}(1)$ strictly in-place transformations if $C\cap\west({\ell})\neq\{\varnothing\}$.
\end{lemma}

\begin{proof}
    We define schedules that allow non-adjacent meta-modules to move westward in parallel and use these to shift all meta-modules west in two phases, as illustrated in~\cref{fig:sweep-overview}.

    Let $M_i$ refer to an arbitrary meta-module in ${\ell}$ with an occupied center cell~$v_i$.
    We show that~$M_i$ can be translated to the west by one unit by a schedule of makespan at most~$6$, if neither of its neighbors in ${\ell}$ is being moved simultaneously.
    The following procedures can thus be used to shift all leading (resp., trailing) modules in parallel.
    Note that the first claim only handles meta-modules which have two adjacent meta-modules in ${\ell}$.
    For the northernmost (resp., southernmost) meta-module in ${\ell}$, we give a separate case analysis.

    \begin{claim}
        We can translate any meta-module $M_i$ with $i\in[2,h-1]$ to the west by one~cell in at most $6$ transformations, without moving any module east.
        \label{clm:locally-advance-two-neighbors}
    \end{claim}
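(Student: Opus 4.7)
The plan is to proceed by case analysis on the occupancy of the three cells immediately west of $M_i$'s center column, which I denote $\west^1_1, \west^1_2, \west^1_3$ from south to north (i.e., the cells at $x$-coordinate $x(v_i)-2$ and $y$-coordinates $y(v_i)-1, y(v_i), y(v_i)+1$). The separator property of $\ell$ guarantees that the remaining content of $\west(M_i)$ is a leftward-compact $3 \times k$ rectangle $R$ plus at most two loose modules, so the local state that matters for the schedule is essentially captured by the three-bit pattern describing which of $\west^1_1, \west^1_2, \west^1_3$ are occupied. For each of the eight patterns I would exhibit an explicit strictly local chain schedule of at most six transformations that translates $M_i$ one cell to the west and uses only slides and convex transitions whose directional component is west, north, or south, never east.

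First I would dispatch the easy patterns. When all three cells $\west^1_j$ are empty, the east column of $M_i$ slides west as a 3-cell synchronized chain in a single transformation, while the existing shell of $M_i$ provides the backbone. The patterns with exactly one occupied corner ($100$, $001$) are handled by a single slide move using the empty corner as the tip of a convex transition, preceded if $M_i$ is solid by a single transformation that opens its center cell. The patterns with an occupied middle ($010$, $110$, $011$) additionally require routing one module through an available corner, which costs at most two extra transformations. In each of these subcases the non-moving cells of $M_i$, together with the adjacent stationary meta-modules $M_{i-1}$ and $M_{i+1}$, provide a connected backbone throughout the schedule.

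The main obstacle is the patterns $101$ and $111$, where the cells adjacent to $M_i$ in the west strip are already attached to the compact rectangle $R$, so merely sliding into them would collide with $R$ or push modules east. Here I would first guarantee that $R$ can absorb the incoming shift: because $C \cap \west(\ell) \neq \emptyset$ alone does not immediately provide free space west of $R$ in the current strip, I would route the advance through one of the neighboring strips that does have such free space (available in at least one strip since the alternating leading/trailing schedule processes non-adjacent meta-modules in parallel and we handle only one meta-module's westward shift at a time per phase). For the $111$ pattern I would propagate a vacancy through $R$ by a staggered sequence of slides ($O(1)$ per meta-module, since $R$ is shifted only one column and the propagation occurs in constantly many transformations when confined to the meta-module's own strip), concluding with a chain slide of $M_i$'s east column into the column just vacated at the east face of $R$. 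For the $101$ pattern I would exploit the empty middle cell as a channel to first move a module of $M_i$ into the void via a convex transition, then finish the translation as in the $001$/$100$ analysis.

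Collision freeness between the parallel shifts of non-adjacent meta-modules follows from the vertical gap of two cells between leading and trailing meta-modules and from the fact that each local schedule stays within a constant-radius neighborhood of $M_i$; connectivity follows from the stationarity of $M_{i-1}$ and $M_{i+1}$ together with the invariant that at least one of the shell, the new west column, or the rectangle $R$ remains continuously connected to $v_i$ during each transformation. That no module moves east is verified by inspection of each schedule, since every individual move has a westward, northward, or southward displacement only. Summing the worst case across all eight patterns yields the claimed bound of at most six transformations.
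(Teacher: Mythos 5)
Your overall framework -- a case analysis on the occupancy pattern of the three cells at $x$-coordinate $x(v_i)-2$, with an explicit local schedule per pattern -- is exactly the paper's approach, and your easy cases (all empty, single occupied corner) are handled essentially as in the paper. The gap is in how you treat the hard patterns $101$ and $111$. There you propose to move the modules lying west of $M_i$: propagating a vacancy through the rectangle $R$, shifting a column of $R$, and even routing the advance through a neighboring meta-module's strip. This is precisely what the paper's proof is designed to avoid. The cells $\west_1,\west_2,\west_3$ (and everything beyond them to the west) belong to the part of the configuration that has not yet been swept; the paper explicitly notes that it cannot be certain these modules are free, marks them as immovable, and constructs schedules that rearrange \emph{only} the meta-module's own eight or nine modules around these static obstacles. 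Moving a module of $R$ can disconnect components of the unswept configuration that hang off it, so your vacancy-propagation step is not known to be legal. Your structural premise is also off: the separator property's ``compact rectangle'' condition constrains $\east(M_i)$, not $\west(M_i)$, so you cannot assume the west strip is a clean $3\times k$ block whose interior modules are safe to shift.

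Two further problems follow from the same choice. First, ``routing through one of the neighboring strips'' breaks the locality that the claim must provide: the enclosing lemma parallelizes these schedules over all leading (resp.\ trailing) meta-modules simultaneously, and the proof of the claim relies on $M_{i-1}$ and $M_{i+1}$ and their strips being untouched; borrowing free space from an adjacent strip would create exactly the interference the leading/trailing split is meant to preclude. Second, your $111$ schedule presupposes free space at the far (west) end of $R$ in order to absorb the shift, which need not exist within the meta-module's own strip, and obtaining it elsewhere cannot be bounded by a constant number of purely local transformations. The fix is to do what the paper does: leave every module west of the meta-module in place, and give explicit constant-length chain schedules (separately for leading and trailing meta-modules) that reassemble the meta-module one cell further west by threading its own modules around the occupied cells among $\west_1,\west_2,\west_3$.
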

    \begin{claimproof}
        We differentiate based on modules in the cells with $x$-coordinate exactly $x(v_i)-2$, which we denoted by $\west_1,\west_2,\west_3$ in south to north order.
        There exist eight possible configurations, which we identify by the power set of $\{\west_1,\west_2,\west_3\}$ and group into six equivalence classes according to symmetry along the $x$-axis, see~\cref{fig:sweep-advance-east-cells}.
        As we cannot be certain that modules in $\{\west_1,\west_2,\west_3\}$ are free, we mark them with $\vcenter{\hbox{\includegraphics{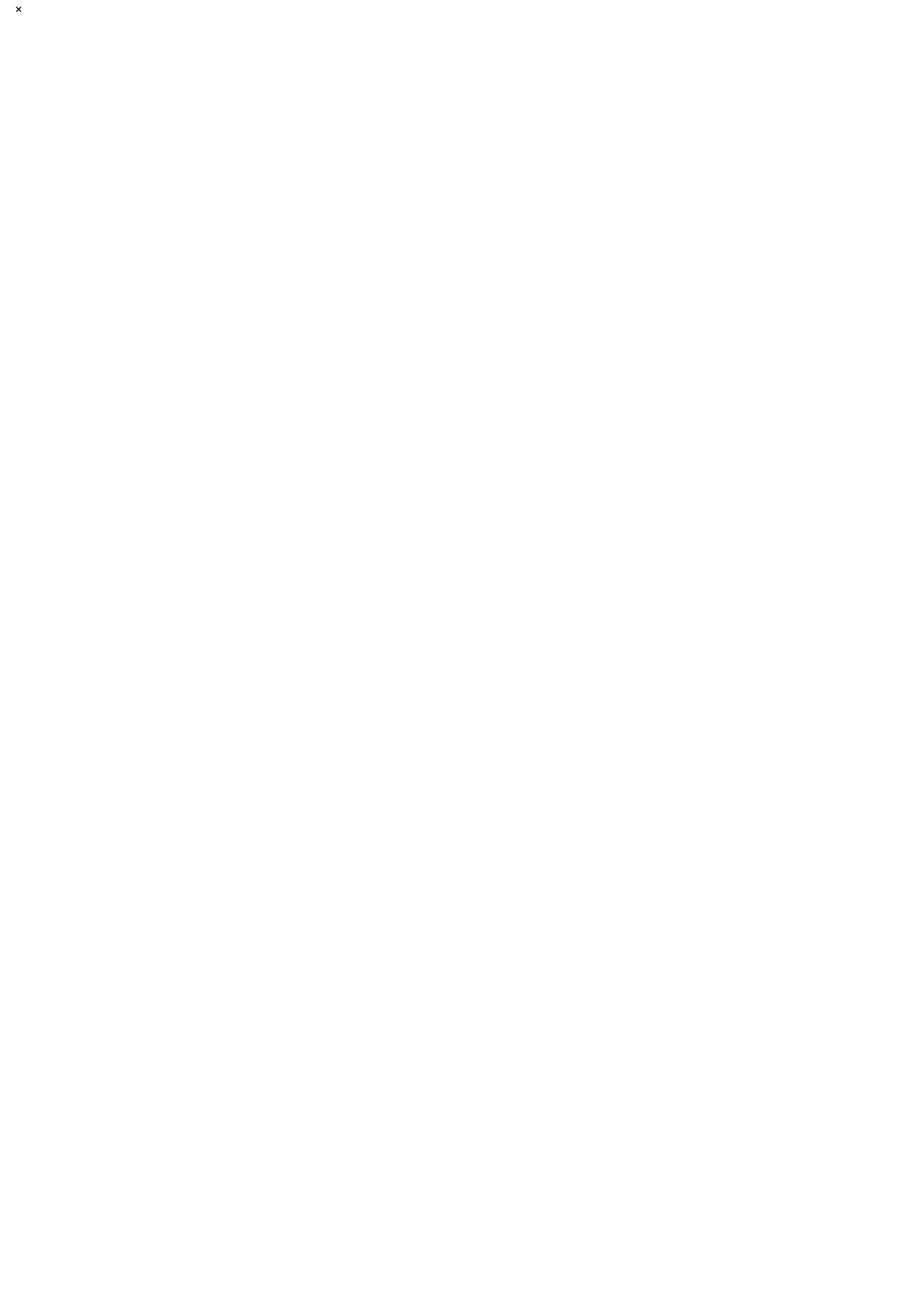}}}$ and avoid moving them.
        \begin{figure}[htb]%
            \includegraphics[page=1]{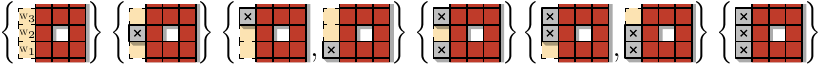}%
            \caption{There are $2^3=8$ possible configurations of the cells $\west_1,\west_2,\west_3$, all shown here.}%
            \label{fig:sweep-advance-east-cells}%
        \end{figure}%

        We start by identifying and handling two trivial cases.
        Firstly, if every cell west of~$M_i$ is already occupied, i.e., $\west(M_i)\subseteq C$, we simply slide the module located immediately west of~$v_i$ to the east by one unit, inducing a clean meta-module at $v_i'=(x(v_i)-1, y(v_i))$.

        Secondly, if $C\cap\{\west_1,\west_2,\west_3\}=\varnothing$, i.e., none of the west-adjacent cells are occupied, the schedule in~\cref{fig:single-meta-module-translate-leading} can easily be verified:
        No collisions occur, and a connected backbone is maintained.
        While the illustrated schedule is applicable only if $M_i$ is leading, we can simply reverse and rotate the moves by~$180^\circ$, in case $M_i$ is trailing.

        \begin{figure}[hp]
            \begin{subfigure}[t]{\textwidth}%
                \hspace{10.75pt}%
                \transformable{\includegraphics[page=1]{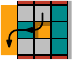}}%
                \transforms{\includegraphics[page=2]{sweep-advance-single-trivial}}%
                \transforms{\includegraphics[page=3]{sweep-advance-single-trivial}}%
                \transforms{\includegraphics[page=4]{sweep-advance-single-trivial}}%
                \transforms{\includegraphics[page=5]{sweep-advance-single-trivial}}%
                \transforms{\includegraphics[page=6]{sweep-advance-single-trivial}}%
                \subcaption{%
                    Leading, $\west_1$, $\west_2$, and $\west_3$ are empty.
                    This schedule can be rotated and reversed for trailing modules.
                }
                \label{fig:single-meta-module-translate-leading}
            \end{subfigure}%
            \par\vspace{0.5em}%
            \hfil\rule{\columnwidth-2em}{0.5pt}%
            \par\vspace{0.5em}%
            \begin{subfigure}[t]{\textwidth}%
                \transformable{\includegraphics[page=1]{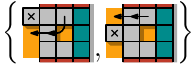}}%
                \transforms{\includegraphics[page=2]{sweep-advance-single-push}}%
                \transforms{\includegraphics[page=3]{sweep-advance-single-push}}%
                \transforms{\includegraphics[page=4]{sweep-advance-single-push}}%
                \transforms{\includegraphics[page=5]{sweep-advance-single-push}}%
                \subcaption{Leading, $|C\cap\{\west_1,\west_2,\west_3\}|=1$.}%
                \label{fig:sweep-advance-single-leading-one}%
            \end{subfigure}%
            \par\vspace{6pt}%
            \begin{subfigure}[t]{227.75pt}%
                \transformable{\includegraphics[page=6]{sweep-advance-single-push}}%
                \transforms{\includegraphics[page=7]{sweep-advance-single-push}}%
                \transforms{\includegraphics[page=8]{sweep-advance-single-push}}%
                \subcaption{Leading, $|C\cap\{\west_1,\west_2,\west_3\}|=2$.}%
                \label{fig:sweep-advance-single-leading-two}%
            \end{subfigure}%
            \begin{subfigure}[t]{\textwidth-230pt}%
                \hspace{8pt}%
                \transformable{\includegraphics[page=9]{sweep-advance-single-push}}%
                \transforms{\includegraphics[page=10]{sweep-advance-single-push}}%
                \subcaption{Leading, $\{\west_1,\west_2,\west_3\}\subset C$.}%
                \label{fig:sweep-advance-single-leading-three}%
            \end{subfigure}%
            \par\vspace{0.5em}%
            \hfil\rule{\textwidth-2em}{0.5pt}%
            \par\vspace{0.5em}%
            \begin{subfigure}[t]{\textwidth}
                \transformable{\includegraphics[page=11]{sweep-advance-single-push}}%
                \transforms{\includegraphics[page=12]{sweep-advance-single-push}}%
                \transforms{\includegraphics[page=13]{sweep-advance-single-push}}%
                \transforms{\includegraphics[page=14]{sweep-advance-single-push}}%
                \transforms{\includegraphics[page=15]{sweep-advance-single-push}}%
                \subcaption{Trailing, $|C\cap\{\west_1,\west_2,\west_3\}|=1$.}%
                \label{fig:sweep-advance-single-trailing-one}%
            \end{subfigure}%
            \par\vspace{6pt}%
            \begin{subfigure}[t]{280pt}
                \transformable{\includegraphics[page=16]{sweep-advance-single-push}}%
                \transforms{\includegraphics[page=17]{sweep-advance-single-push}}%
                \transforms{\includegraphics[page=18]{sweep-advance-single-push}}%
                \transforms{\includegraphics[page=19]{sweep-advance-single-push}}%
                \subcaption{Trailing, $|C\cap\{\west_1,\west_2,\west_3\}|=2$.}%
                \label{fig:sweep-advance-single-trailing-two}%
            \end{subfigure}%
            \begin{subfigure}[t]{\textwidth-280pt}
                \hspace{8pt}%
                \transformable{\includegraphics[page=20]{sweep-advance-single-push}}%
                \transforms{\includegraphics[page=21]{sweep-advance-single-push}}%
                \subcaption{Trailing, $\{\west_1,\west_2,\west_3\}\subset C$.}%
                \label{fig:sweep-advance-single-trailing-three}%
            \end{subfigure}%
            \caption{Shifting meta-modules to the west. Schedules for the case that a meta-module has two neighbors:  (a) visualizes the case in which the meta-module does not ``collide'' with other modules; leading meta-modules (b)--(d) and (e)--(g) trailing meta-modules, in case there are occupied cells directly to the west of the meta-module.}
            \label{fig:sweep-advance}
        \end{figure}

        It remains to handle the case in which $C\cap\{\west_1,\west_2,\west_3\}\neq \varnothing$ and at least one cell west of the meta-module in $\west(M_i)$ is unoccupied.
        We handle each case separately, using the schedules depicted in~\cref{fig:sweep-advance}.
        Based on whether $M_i$ is a leading or trailing meta-module, we refer to~\crefrange{fig:sweep-advance-single-leading-one}{fig:sweep-advance-single-leading-three} and~\crefrange{fig:sweep-advance-single-trailing-one}{fig:sweep-advance-single-trailing-three}, respectively.
        Once again, due to their locality, the validity of these schedules can easily be verified:
        As neither $M_{i+1}$ nor $M_{i-1}$ move, a connected backbone will always be preserved, and no collisions can occur, meaning that we have successfully shifted $M_i$ to the west by one unit.
    \end{claimproof}

    \begin{claim}
        \label{clm:locally-advance-one-neighbor}
        We can translate any meta-module $M_i$ with $i\in\{1,h\}$ to the west by one~cell in at most $6$ transformations, without moving any module east.
    \end{claim}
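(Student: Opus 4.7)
The plan is to mirror the case analysis of Claim~\ref{clm:locally-advance-two-neighbors} for the northernmost meta-module $M_h$, handling $M_1$ symmetrically by vertical reflection. I would again distinguish eight subcases by the occupancy of the cells $\west_1,\west_2,\west_3$ immediately west of $M_h$, grouped into six equivalence classes by horizontal symmetry. The two trivial subcases — $\west(M_h)\subseteq C$, resolved by a single slide, and $C\cap\{\west_1,\west_2,\west_3\}=\varnothing$, resolved by the schedule of Figure~\ref{fig:single-meta-module-translate-leading} (reversed and rotated for $M_1$) — carry over verbatim, since their validity arguments are entirely local to cells in $N^*[v_h]$ and never reference a cell belonging to a neighboring meta-module.

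For the remaining subcases I would revisit the schedules in Figures~\ref{fig:sweep-advance-single-leading-one}--\ref{fig:sweep-advance-single-trailing-three} and check which cells outside $N^*[v_h]$ they actually touch. The schedules that only reach into cells of $M_{h-1}$ apply without any change, since $M_{h-1}$ is stationary by the alternation between leading and trailing phases (so it acts as a static backbone exactly like in the two-neighbor proof). In those subcases where a schedule would otherwise route a module through a cell of the missing northern neighbor $M_{h+1}$, I would substitute the routing using the cells directly north of $M_h$: the separator property together with the sweep line spanning the full height of $B$ guarantees that these cells are empty, and the top padding of the extended bounding box $B'$ (introduced precisely to accommodate this kind of temporary overflow) ensures we remain weakly in-place.

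Collision- and connectivity-freeness of each adapted schedule then follows from the same purely local argument used in Claim~\ref{clm:locally-advance-two-neighbors}: the only moving modules sit in or immediately adjacent to $M_h$, while $M_{h-1}$ together with the east strips and the rest of $C$ remains untouched and connected throughout. The main obstacle I anticipate is bookkeeping — verifying one by one that each of the six equivalence classes of subcases, with the extra freedom afforded by the empty cells north of $M_h$, still admits a schedule of makespan at most~$6$ and does not introduce any collision of types \textbf{\sffamily(\ref{itm:collision-i})}--\textbf{\sffamily(\ref{itm:collision-iv})} with moves happening in parallel at non-adjacent meta-modules. Once this enumeration is complete, applying the routine to $M_h$ (and symmetrically $M_1$) in parallel with the other leading or trailing meta-modules inside Lemma~\ref{lem:advance-sweep} yields the claimed bound.
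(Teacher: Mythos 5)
Your approach matches the paper's: the claim is proved by exactly the same local case analysis on the occupancy of $\west_1,\west_2,\west_3$, with explicit constant-length schedules for each leading/trailing pattern and a reflection along the $x$-axis to obtain the schedule for $M_1$ from that for $M_h$, using the free cells on the side of the missing neighbor. The one point to watch is that for an end meta-module the north--south symmetry that collapsed the eight patterns into six classes in \cref{clm:locally-advance-two-neighbors} no longer holds, so for instance $\{\west_1,\west_2\}$ and $\{\west_2,\west_3\}$ need separate schedules (the paper's \cref{fig:sweep-push-local-end} lists them separately), but this falls under the bookkeeping you already anticipate.
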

    \begin{claimproof}
        The arguments are similar to the proof of~\cref{clm:locally-advance-two-neighbors}.
        We refer to~\cref{fig:sweep-push-local-end} for an illustration of the local transformations based on occupied and free cells immediately adjacent to the respective meta-module.
        Furthermore, by mirroring this schedule along the $x$-axis, we further obtain a valid schedule for the southernmost meta-module, again for both the leading and trailing case.
    \end{claimproof}

    \begin{figure}[p]
        \begin{subfigure}[t]{\textwidth}%
            \hspace{10.75pt}%
            \transformable{\includegraphics[page=7]{sweep-advance-single-trivial}}%
            \transforms{\includegraphics[page=8]{sweep-advance-single-trivial}}%
            \transforms{\includegraphics[page=9]{sweep-advance-single-trivial}}%
            \transforms{\includegraphics[page=10]{sweep-advance-single-trivial}}%
            \transforms{\includegraphics[page=11]{sweep-advance-single-trivial}}%
            \transforms{\includegraphics[page=12]{sweep-advance-single-trivial}}%
            \subcaption{Leading, $\west_1$, $\west_2$, and $\west_3$ are empty.
            This schedule can be rotated and reversed for trailing modules.}
            \label{fig:single-meta-module-translate-north}
        \end{subfigure}%
        \par\vspace{0.5em}%
        \hfil\rule{\columnwidth-2em}{0.5pt}%
        \par\vspace{0.5em}%
        \begin{subfigure}[t]{\textwidth}%
            \transformable{\includegraphics[page=22]{sweep-advance-single-push}}%
            \hspace{6pt}%
            \transforms{\includegraphics[page=23]{sweep-advance-single-push}}%
            \transforms{\includegraphics[page=24]{sweep-advance-single-push}}%
            \transforms{\includegraphics[page=25]{sweep-advance-single-push}}%
            \transforms{\includegraphics[page=26]{sweep-advance-single-push}}%
            \subcaption{Leading, $C_\west \in\{\{\west_1\},\{\west_3\}\}$.}%
        \end{subfigure}%
        \par\vspace{0.5em}%
        \begin{subfigure}[t]{\textwidth}%
            \hspace{10.75pt}%
            \transformable{\includegraphics[page=27]{sweep-advance-single-push}}%
            \transforms{\includegraphics[page=28]{sweep-advance-single-push}}%
            \transforms{\includegraphics[page=29]{sweep-advance-single-push}}%
            \transforms{\includegraphics[page=30]{sweep-advance-single-push}}%
            \subcaption{Leading, $C_\west =\{\west_1,\west_3\}$.}%
        \end{subfigure}%
        \par\vspace{6pt}%
        \begin{subfigure}[t]{242pt}%
            \hspace{10.75pt}%
            \transformable{\includegraphics[page=31]{sweep-advance-single-push}}%
            \transforms{\includegraphics[page=32]{sweep-advance-single-push}}%
            \transforms{\includegraphics[page=33]{sweep-advance-single-push}}%
            \transforms{\includegraphics[page=34]{sweep-advance-single-push}}%
            \subcaption{Leading, $C_\west =\{\west_2\}$.}%
        \end{subfigure}%
        \begin{subfigure}[t]{\columnwidth-242pt}%
            \transformable{\includegraphics[page=42]{sweep-advance-single-push}}%
            \transforms{\includegraphics[page=43]{sweep-advance-single-push}}%
            \subcaption{Leading, $C_\west =\{\west_1,\west_2,\west_3\}$.}%
        \end{subfigure}%
        \par\vspace{6pt}%
        \begin{subfigure}[t]{187pt}%
            \hspace{10.75pt}%
            \transformable{\includegraphics[page=35]{sweep-advance-single-push}}%
            \transforms{\includegraphics[page=36]{sweep-advance-single-push}}%
            \transforms{\includegraphics[page=37]{sweep-advance-single-push}}%
            \subcaption{Leading, $C_\west=\{\west_2,\west_3\}$.}%
        \end{subfigure}%
        \begin{subfigure}[t]{\columnwidth-187pt}%
            \transformable{\includegraphics[page=38]{sweep-advance-single-push}}%
            \transforms{\includegraphics[page=39]{sweep-advance-single-push}}%
            \transforms{\includegraphics[page=40]{sweep-advance-single-push}}%
            \transforms{\includegraphics[page=41]{sweep-advance-single-push}}%
            \subcaption{Leading, $C_\west=\{\west_1,\west_2\}$.}%
        \end{subfigure}%
        \par\vspace{0.5em}%
        \hfil\rule{\columnwidth-2em}{0.5pt}%
        \par\vspace{0.5em}%
        \begin{subfigure}[t]{\textwidth}%
            \transformable{\includegraphics[page=44]{sweep-advance-single-push}}%
            \transforms{\includegraphics[page=45]{sweep-advance-single-push}}%
            \transforms{\includegraphics[page=46]{sweep-advance-single-push}}%
            \transforms{\includegraphics[page=47]{sweep-advance-single-push}}%
            \transforms{\includegraphics[page=48]{sweep-advance-single-push}}%
            \transforms{\includegraphics[page=45]{sweep-advance-single-push}}%
            \subcaption{Trailing, $C_\west \in\{\{\west_1\},\{\west_3\}\}$.}%
        \end{subfigure}%
        \par\vspace{6pt}%
        \begin{subfigure}[t]{242pt}%
            \hspace{10.75pt}%
            \transformable{\includegraphics[page=55]{sweep-advance-single-push}}%
            \transforms{\includegraphics[page=56]{sweep-advance-single-push}}%
            \transforms{\includegraphics[page=57]{sweep-advance-single-push}}%
            \transforms{\includegraphics[page=58]{sweep-advance-single-push}}%
            \subcaption{Trailing, $C_\west=\{\west_1,\west_3\}$.}%
        \end{subfigure}%
        \begin{subfigure}[t]{\textwidth-242pt}%
            \transformable{\includegraphics[page=67]{sweep-advance-single-push}}%
            \transforms{\includegraphics[page=68]{sweep-advance-single-push}}%
            \subcaption{Trailing, $C_\west=\{\west_1,\west_2,,\west_3\}$.}%
        \end{subfigure}%
        \par\vspace{6pt}%
        \begin{subfigure}[t]{\textwidth}%
            \hspace{10.75pt}%
            \transformable{\includegraphics[page=50]{sweep-advance-single-push}}%
            \transforms{\includegraphics[page=51]{sweep-advance-single-push}}%
            \transforms{\includegraphics[page=52]{sweep-advance-single-push}}%
            \transforms{\includegraphics[page=53]{sweep-advance-single-push}}%
            \transforms{\includegraphics[page=54]{sweep-advance-single-push}}%
            \subcaption{Trailing, $C_\west=\{\west_2\}$.}%
        \end{subfigure}%
        \par\vspace{6pt}%
        \begin{subfigure}[t]{\textwidth}%
            \hspace{10.75pt}%
            \transformable{\includegraphics[page=63]{sweep-advance-single-push}}%
            \transforms{\includegraphics[page=64]{sweep-advance-single-push}}%
            \transforms{\includegraphics[page=65]{sweep-advance-single-push}}%
            \transforms{\includegraphics[page=66]{sweep-advance-single-push}}%
            \subcaption{Trailing, $\C_\west=\{\west_1,\west_2\}$.}%
        \end{subfigure}%
        \par\vspace{6pt}%
        \begin{subfigure}[t]{\textwidth}%
            \hspace{10.75pt}%
            \transformable{\includegraphics[page=59]{sweep-advance-single-push}}%
            \transforms{\includegraphics[page=60]{sweep-advance-single-push}}%
            \transforms{\includegraphics[page=61]{sweep-advance-single-push}}%
            \transforms{\includegraphics[page=62]{sweep-advance-single-push}}%
            \subcaption{Trailing, $C_\west=\{\west_2,\west_3\}$.}%
        \end{subfigure}%
        \caption{Shifting meta-modules to the west. Schedules for the case that a meta-module has only one neighbor:
        Let $C_\west=C\cap\{\west_1,\west_2,\west_3\}$.
        Subfigure (a) visualizes the case that $C_\west=\varnothing$.
        Otherwise, leading meta-modules are handled in (b)--(g), and trailing in (h)--(m).}
        \label{fig:sweep-push-local-end}
    \end{figure}

    Due to their strictly local nature, all of our schedules for leading (resp., trailing) meta-modules can be performed in parallel.
    We first shift at all leading modules simultaneously, using the schedules outlined in~\cref{clm:locally-advance-two-neighbors,clm:locally-advance-one-neighbor}.
    This takes at most $6$ transformations.
    Afterward, we apply the same procedures to all trailing meta-modules in parallel, again taking at most $6$ transformations.
    We conclude that it is possible to obtain a sweep line ${\ell}'$ with $x({\ell}')=x({\ell})-1$ in at most $12$ transformations.

    Upon completion, some meta-modules of the sweep line ${\ell}'$ may have a non-empty center cell, and up to two squares per meta-module now additionally occupy the east half-space in a way that violates the separator property.
    We now show that ${\ell}'$ can be made into a clean separator sweep line.

    Consider any meta-module $M'_i$ of ${\ell}'$.
    If $\west(M'_i)\subseteq C$, $M'_i$ is clean by definition, see above.
    Otherwise, there exist at most two modules in $\east(M'_i)$, immediately adjacent to the meta-module itself.
    Due to~\cref{lem:cleaning-sweep-line}, $M'_i$ can be cleaned in $2$ transformations; it only remains to argue that we can restore the separator conditions.
    With a simple chain move, we can place one of the two excess squares in $v'_i$ and clean $M'_i$ once more to free up $v'_i$ again.

    We clean each module at most three times, and all leading (resp., trailing) meta-modules in parallel, taking at most $3\cdot 4+2=14$ additional transformations.
    Afterward, we obtain a clean separator sweep line ${\ell}'$ with $x({\ell}')=x({\ell})-1$.
\end{proof}

After no more than $\mathcal{O}(P)$ many iterations consisting of \textsc{Clean}+\textsc{Advance}, we obtain a state in which $\west(\ell)\cap C = \varnothing$.
The separator property implies that the modules in $\east(\ell)$ form $3$-wide, horizontal stacks.

It remains to show that we can make this configuration $3$-scaled.
On a high level, we achieve this by filling the center cells of all meta-modules and balancing the remaining modules between their east strips such that each contains a multiple of nine.

To simplify our arguments, assume that $n$ is a multiple of nine.
In the general case, $n$ can exceed a multiple of nine by at most $8$ modules.
We can place these at the south-west corner of the extended bounding box $B'$ and locally ensure the existence of a connected backbone during each transformation.
\begin{lemma}
    \label{lem:sweep-line-to-scaled}
    Let $C$ contain a clean separator sweep line ${\ell}$ with an empty west half-space.
    Then $C$ can be transformed into a $3$-scaled configuration in $\mathcal{O}(P)$ transformations.
\end{lemma}

\begin{proof}
    We start by making the sweep line ${\ell}$ in $C$ solid.
    To achieve this, we can simply apply the cleaning operation outlined in~\cref{lem:cleaning-sweep-line} in reverse.
    Even without parallelization, at least one meta-module of ${\ell}$ can be ``un-cleaned'' every $\mathcal{O}(1)$ transformations.
    There are $\mathcal{O}(P)$ many clean meta-modules, so we can obtain a solid sweep line in $\mathcal{O}(P)$ transformations.

    To make $C$ into a $3$-scaled histogram, it remains to balance the modules in $C\setminus{\ell}$ such that for each $M_i\in{\ell}$, the number of modules in $C\cap\west(M_i)$ is a multiple of nine.
    Observe that there are at most $8$ ``loose'' modules in any horizontal strip induced by some meta-module~$M_i$.
    We start with the strip induced by the topmost meta-module $M_h$.
    These modules can be iteratively moved to the strip induced by $M_{h-1}$ by $\mathcal{O}(1)$ transformations as depicted in~\cref{fig:balance-histogram}, and are thereby placed immediately adjacent to occupied cells within this strip.
    Now the number of modules in the strip of $M_h$ is a multiple of $9$, and there are at most~$8$ loose modules in the strip of $M_{h-1}$.
    \begin{figure}[htb]
        \hfil%
        \includegraphics[page=1]{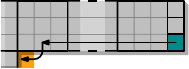}%
        \hfil%
        \includegraphics[page=2]{histogram-balance}%
        \hfil%
        \includegraphics[page=3]{histogram-balance}%
        \caption{The depicted moves can be realized in $\mathcal{O}(1)$ transformations (and in reverse).}
        \label{fig:balance-histogram}
    \end{figure}
    We repeat this procedure until we reach $M_1$, i.e., iteratively move loose modules down, and leave $3$-scaled strips behind.

    Upon completion, there are at most $8$ loose modules located in the bottommost strip.
    These can trivially be moved immediately to the west of the solid sweep line by straightforward chain moves.
    As the bounding box has height $\mathcal{O}(P)$, this approach takes at most $\mathcal{O}(P)$ many~transformations.
\end{proof}

Having constructed a $3$-scaled configuration, we can arbitrarily reconfigure due to~\cref{thm:3-scaled-reconfiguration} before applying \phaseref{Phases~(I-III)} in reverse to obtain our target configuration.

\theoremAlgorithm*

\begin{proof}
    \phaseref{Phases (I-III)} transform any configuration into a $3$-scaled configuration that does not exceed the original bounding box by more than a constant amount, in $\mathcal{O}(P)$ transformations.

    Given an instance $\mathcal{I}=(C_1, C_2)$, we simply apply \phaseref{Phases (I-III)} to both configurations, obtaining two weakly in-place schedules that yield $3$-scaled configurations $C_1'$ and $C_2'$.
    We can compute a schedule $C_1'\rightarrow^* C_2'$ due to~\cref{thm:3-scaled-reconfiguration}, which we concatenate with the reverse schedule of $C_2\rightarrow^* C_2'$.
    The result is a schedule $C_1\rightarrow^*C_2$ of makespan $\mathcal{O}(P)$.
\end{proof}
%
\section{Labeled setting}
\label{sec:labeled-algorithm}

As indicated in the introduction, our reconfiguration methods can also be applied to the labeled setting, albeit with the addition of another fourth phase that handles the labeling.

Recall that the central and only difference between the labeled and the unlabeled model studied in the previous sections is that a configuration $C$ is no longer simply a subset of~$\mathbb{Z}^2$, but an injective mapping $C:[n]\rightarrow\mathbb{Z}^2$.
The image of a such a mapping is then always a valid unlabeled configuration; we call this its \newterm{silhouette}.

The algorithm from~\cref{sec:algorithm} can be used to obtain a specific silhouette with a deterministic (but not necessarily the desired) labeling.
To deal with this, we employ grid-like canonical silhouettes that we call \newterm{sponges} and show that we can efficiently modify their labelings in a strictly in-place manner.

\subparagraph*{Sponges.}
A configuration is a \newterm{$k$-sponge} for $k\geq 3$ exactly if it consists of a rectangle of width and height at least $2k$ in which the cells $(ik,jk)\in\mathbb{Z}^2$ are unoccupied for all $i,j\in \mathbb{N}_0$.
A \newterm{$k$-cell} is then a $k\times k$ square anchored at a multiple of $k$ in both dimensions, see~\cref{fig:sponge}.
The lexicographical ordering of positions in a $k$-cell follows the path shown in the same figure.

\begin{figure}[htb]%
    \begin{minipage}[t]{0.3\columnwidth - 0.5em}%
        \centering%
        \includegraphics[page=1]{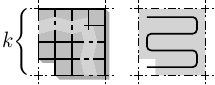}%
        \caption{A single $k$-cell.}%
        \label{fig:sponge}%
    \end{minipage}%
    \hfill%
    \begin{minipage}[t]{0.7\columnwidth - 0.5em}%
        \centering%
        \includegraphics[page=2]{sponge-adjacent-swap}%
        \caption{An adjacent swap using two unoccupied cells.}%
        \label{fig:sponge-adjacent-swap}%
    \end{minipage}%
    \label{fig:combo-figure-sponges}%
\end{figure}

In the following, we argue that for any two labelings of a $k$-sponge silhouette, there exists a schedule with makespan only dependent on the silhouette's bounding box perimeter $P$ and some function in $\poly(k)$ that transforms one labeling into the other.

\medskip

We leverage a classical result from \textsc{Parallel Token Swapping} (or \textsc{Routing Permutations by Matchings}).
This problem was first formulated by Alon, Chung, and Graham~\cite{alon.chung.graham1994routing} and asks, given a graph $G=(V,E)$, for a tight upper bound on the necessary number of \newterm{routing operations} to transform any two vertex labelings of $G$ into one another.
The exact bound is then also called the \newterm{routing number} $\mathsf{rt}(G)$.
Each routing operation selects an arbitrary, independent set of edges $M\subseteq E$ and swaps the labels of vertices that share an edge in $M$.
For the cartesian product $G\times G'$ of two graphs $G$ and $G'$, the following holds.

\begin{theorem}[Alon, Chung, and Graham~{\cite[Theorem 4.3]{alon.chung.graham1994routing}}]
    $\mathsf{rt}(G\times G')\leq 2\mathsf{rt}(G) + \mathsf{rt}(G')$.
    \label{thm:alon-chung-graham}
\end{theorem}
It is an easily validated fact that for the $n$-vertex path graph $P_n$, $\mathsf{rt}(P_n)\leq n$.
Let $G^d$ be a $d$-dimensional grid graph of extent $w_i$ in dimension $i\in [d]$.
Then $G^d = P_{w_1}\times P_{w_2}\ldots \times P_{w_d}$, and its routing number $\mathsf{rt}(G^d)$ is bounded by $\mathcal{O}\bigl(d \sum_{i=1}^d{w_i}\bigr)$.

\medskip

Let $C$ be the silhouette of a sponge with width $w$ and height $h$.
We construct a secondary reconfiguration graph $G^3$, based on which we determine efficient reconfiguration sequences.
Let $G^3$ have as its vertices the modules in $C$ and let its edge set $E$ be defined as follows:
\begin{equation*}
    \forall u,v\in C:\quad \{u,v\}\in E \Leftrightarrow
    \begin{cases}
        \text{$u$ and $v$ are lexicographically adjacent in a $k$-cell of $C$, or}\\
        x(u)=x(v)\pm k\text{, or}\\
        y(u)=y(v)\pm k.
    \end{cases}
\end{equation*}
For a $k$-sponge of width $w$ and height $h$, the secondary reconfiguration graph $G^3$ is now the cartesian product of three path graphs of lengths $\nicefrac{w}{k}$, $\nicefrac{h}{k}$, and $k^2-1$, respectively.
Due to~\cref{thm:alon-chung-graham}, its routing number is then bounded by $\mathcal{O}(\nicefrac{w}{k}+\nicefrac{h}{k}+k^2)$.

\begin{lemma}
    Let $C$ be a $k$-sponge, and $G^3=(C,E)$ its secondary reconfiguration graph.
    Given a matching $M\subset E$, we can swap the matched modules in $\mathcal{O}(k^3)$ transformations.
    \label{lem:matching-swaps-in-sponges}
\end{lemma}

\begin{proof}
    Let $T\subset C$ be any $k$-cell of $C$, and let $M_T\subset M$ correspond to the set of matching edges in $M$ that involve modules of $T$.
    By definition, $\lvert M_T \rvert\in \mathcal{O}(k^2)$.
    We argue that all swaps induced by edges in $M_T$ can be realized by a schedule of makespan in $\mathcal{O}(k^3)$.

    Let $e=\{u,v\}\in M_T$ be an arbitrary edge and fix an arbitrary shortest path embedding of $e$ in the grid.
    This embedding is then fully contained within $T$ and its immediate neighbor $k$-cells.
    To swap two modules, we first translate $u$ along the fixed path towards $v$ by repeated application of the swap schedule shown in~\cref{fig:sponge-adjacent-swap}.
    Once $u$ and $v$ are edge-adjacent, we swap $u$ to its final location and apply the process in reverse, placing $v$ at the original location of the module $u$.
    Each swap operation requires the presence of two of the holes as shown in~\cref{fig:sponge-adjacent-swap}, which can be ensured by shifting the ``holes'' from adjacent $k$-cells to the appropriate location within $T$.
    This takes constantly many transformations.
    To realize just one edge, we thus require at most $\mathcal{O}(k)$ transformations.
    For the entire set $M_T$, the total makespan is therefore in $\mathcal{O}(k^3)$.

    By splitting the $k$-cells of $C$ into nine equivalence classes based on their anchor coordinates modulo $3k$, we can efficiently realize this procedure in parallel, taking at most nine rounds for the entire matching $M$.
\end{proof}

\begin{proposition}
    Let $C_1,C_2$ be two $k$-sponges with identical silhouettes and minimal bounding box perimeter $P$.
    A strictly in-place schedule with makespan $\mathcal{O}(k^2 P + k^5)$ that transforms one into the other can be computed efficiently.
    \label{prop:k-sponge-reconfiguration}
\end{proposition}
\begin{proof}
    Let $C$ be the common silhouette of $C_1$ and $C_2$, and let $G^3$ be the secondary reconfiguration graph of $C$ as outlined previously.
    We can then efficiently compute a sequence of at most $\mathcal{O}(\nicefrac{w}{k}+\nicefrac{h}{k}+k^2)$ routing operations in $G^3$ that transforms the labeling $C_1$ into $C_2$.
    Each routing operation takes at most $\mathcal{O}(k^3)$ transformations to realize in the physical model, so we end up with a total makespan of
     $\mathcal{O}(\nicefrac{w}{k}+\nicefrac{h}{k}+k^2) \cdot \mathcal{O}(k^3) = \mathcal{O}(k^2 P + k^5)$.
\end{proof}

Due to~\cref{thm:algorithm}, every configuration can be efficiently transformed into, e.g., a $3$-sponge of comparable size.
In particular, given some configuration $C_1$ with bounding box perimeter $P_1$, we can obtain a $3$-sponge of bounding box perimeter at most $\nicefrac{3}{2}P_1$.
Combining this with~\cref{prop:k-sponge-reconfiguration}, we obtain the following:

\begin{corollary}
    \label{thm:algorithm-labeled}
    For any instance $\mathcal{I}$ of {\textsc{Labeled Parallel Sliding Squares}}, we can compute a schedule of $\mathcal{O}(P_1+P_2)$ transformations in polynomial time.
\end{corollary}

Note, however, that this process is explicitly not guaranteed to yield an in-place (or even weakly in-place) schedule, as building a sponge from a compact configuration may exceed the original bounding box by a linear amount in either dimension.

If we restrict ourselves to weakly in-place reconfiguration, a worst-case input for our approach is then a pair of configurations that share a common, rectangular silhouette.
We can handle these using a slightly slower variant of the approach outlined above, yielding the following statement for in-place reconfiguration.
\begin{theorem}
    \label{thm:algorithm-labeled-in-place}
    For any instance $\mathcal{I}$ of {\textsc{Labeled Parallel Sliding Squares}}, we can compute a weakly in-place schedule of $\mathcal{O}(P_1^2+P_2^2)$ transformations in polynomial time.
\end{theorem}
\begin{proof}
    Due to~\cref{thm:algorithm}, we may assume without loss of generality, that the two configurations have a common silhouette~$C$ which is a solid rectangle of width $w$ and height $h\leq w$.
    It then suffices to argue that we can arbitrarily swap modules efficiently.

    The routing number of this silhouette's dual graph is linear in its width and height due to~\cref{thm:alon-chung-graham}, i.e., $\mathsf{rt}(\operatorname{dual}(C))\in\mathcal{O}(w+h)$.
    To perform a single routing operation, we divide the configuration into vertical slices of width $\mathcal{O}(1)$.
    Each slice then contains at most $\mathcal{O}(h)$ many pairs of adjacent modules that need to be swapped as part of the routing operation.
    In each slice, we create two empty cells by moving two modules out of the bounding box at either the top or the bottom of the configuration.
    Using simple schedules such as the one outlined in~\cref{fig:sponge-adjacent-swap}, we can perform all of these swaps in constant time each, i.e., no more than $\mathcal{O}(h)$ per routing operation.

    The total makespan is then bounded in $\mathcal{O}((w+h)h)\subseteq\mathcal{O}(P^2)$.
\end{proof}
%
\section{Conclusions and future work}
\label{sec:conclusions}
We have presented several new results for reconfiguration in the sliding squares model, making full use of parallel motion, and distinguished the two settings of labeled and unlabeled modules.
In particular, we showed that deciding whether there exists a single transformation that reconfigures one unlabeled configuration into another is already \mbox{\NP-complete}.
Additionally, we provided algorithmic results for the unlabeled~variant, including a polynomial-time algorithm that performs in-place reconfiguration in $\mathcal{O}(P)$ transformations, where $P$ denotes the perimeter of the union of the configurations' bounding boxes.

In addition, we adapted our algorithm to the labeled setting by ``sorting'' the modules in the $xy$-monotone configuration using~$\mathcal{O}(P)$ transformations.
However, this requires a relaxation of the in-place requirement.
Furthermore, deciding whether a schedule of makespan~$1$ exists in the labeled setting is straightforward, while the problem becomes \NP-complete for a makespan of $2$.

Our algorithmic results are only worst-case optimal, and there are a number of possible generalizations and extensions of the setting.
Previous work in the sequential setting has progressed from two dimensions to three.
Can our approach be extended to higher dimensions?
We are optimistic that significant speedup can be achieved, but the intricacies of three-dimensional topology may require additional tools.

Another aspect is a refinement of the involved parameters, along with
fixed parameter tractability.
Our hardness proofs show that the investigated problems are not \FPT\ when parameterized by the number of transformations in the output.
The number of modules in the input configuration is not a suitable parameter since it is the input size.
Are there other parameters for which these problems are \FPT?
An interesting candidate would be the size of the symmetric difference between the two input configurations.
In our hardness proofs, the size of the symmetric difference is linear in the size of the \SAT\ instance.

    \bibliographystyle{plainurl}%
    \bibliography{bibliography}
\end{document}